\def\stackunder#1#2{\mathrel{\mathop{#2}\limits_{#1}}}
\makeatletter\@addtoreset{equation}{section}\makeatother
\newcommand{\nabgam}{\nabla^{\Gamma}_{\gamma}}
\begin{document}

\theoremstyle{plain}
\newtheorem{theorem}{Theorem}[section]
\newtheorem{proposition}[theorem]{Proposition}
\newtheorem{corollary}[theorem]{Corollary}
\newtheorem{condition}[theorem]{Condition}
\newtheorem{example}[theorem]{Example}
\theoremstyle{definition}
\newtheorem{definition}[theorem]{Definition}
\theoremstyle{remark}
\newtheorem{lemma}[theorem]{Lemma}
\newtheorem{remark}[theorem]{Remark}
\newtheorem{notation}[theorem]{Notation}

\newcommand{\grad}{\nabla}
\newcommand{\D}{\partial}
\newcommand{\E}{\mathcal{E}}
\newcommand{\N}{\mathbb{N}}
\newcommand{\dom}{\mathcal{D}}
\newcommand{\ess}{\operatorname{ess~inf}}

\title[Tagged Particle Process]{Tagged Particle Process in Continuum with Singular Interactions}

\author{Torben Fattler, Martin Grothaus}

\address{
Torben Fattler, Mathematics Department, University of
Kaiserslautern, \newline P.O.Box 3049, 67653 Kaiserslautern,
Germany. {\rm \texttt{Email:~fattler@mathematik.uni-kl.de}, \newline
\texttt{URL:~http://www.mathematik.uni-kl.de/~wwwfktn/homepage/fattler.html}}
\newline
Martin Grothaus, Mathematics Department, University of
Kaiserslautern, \newline P.O.Box 3049, 67653 Kaiserslautern,
Germany. 
\newline {\rm \texttt{Email:~grothaus@mathematik.uni-kl.de},
\texttt{URL:~http://www.mathematik.uni-kl.de/$\sim$grothaus/ }} }

\date{\today}

\thanks{2000 {\it Mathematics Subject Classification.
primary: 60J60, 82C22, secondary: 60K37, 37L55.}
\\
We thank Yuri Kondratiev, Michael R\"ockner, Sven Struckmeier and Heinrich v.~Weizs\"acker for discussions and helpful comments. Special thanks go to Florian Conrad who proposed a proof for conservativity of the coupled process. Financial support by the DFG through the project GR 1809/4-2 is gratefully
acknowledged.}

\keywords{Diffusion processes, interacting continuous particle systems, processes in random environment, infinite-dimensional random dynamical systems of stochastic equations.}

\begin{abstract}
We study the dynamics of a tagged particle in an infinite particle environment. Such processes have been studied in e.g.~\cite{GP85}, \cite{DeM89} and \cite{Os98}. I.e., we consider the heuristic system of stochastic differential equations
\begin{align}
&d\xi(t)=\sum_{i=1}^\infty\nabla\phi(y_i(t))\,dt+\sqrt{2}\,dB_1(t),\quad t\ge 0,\tag{TP}\\
&\left.
\begin{matrix}
dy_i(t)=\big(-\sum_{\stackunder{j\not=i}{j=1}}^\infty\nabla\phi(y_i(t)-y_j(t))-\nabla\phi(y_i(t))-\sum_{j=1}^\infty\nabla\phi(y_j(t))\big)\,dt\\
+\sqrt{2}\,d(B_{i+1}(t)-B_1(t)),\quad t\ge 0,\quad i\in\mathbb{N}   
\end{matrix}\right\}.\tag{ENV}
\end{align}
This system realizes the coupling of the motion of the tagged particle, described by (TP), and the motion of the environment seen from the tagged particle, described by (ENV).
As we can observe in (TP) the solution to (ENV), the so-called environment process, is driving the tagged particle. Thus our strategy is to study (ENV) at first and afterwards the coupled process, i.e., (TP) and (ENV) simultaneously.
Here the analysis and geometry on configuration spaces developed in \cite{AKR98a} and \cite{AKR98b} plays an important role. Furthermore, the harmonic analysis on configuration spaces derived in \cite{KK99a} is very useful for our
considerations. First we derive an integration by parts formula with respect to the standard gradient $\nabla^\Gamma$ on configuration spaces $\Gamma$ for a general class of grand canonical Gibbs measures $\mu$, corresponding to pair potentials $\phi$ and intensity measures $\sigma=z\,\exp(-\phi)\,dx,~0<z<\infty$, having correlation functions fulfilling a Ruelle bound. Furthermore, we use a second integration by parts formula with respect to the gradient $\nabgam$, generating the uniform translations on $\Gamma$, for a (non-empty) subclass of the Gibbs measures $\mu$ as above which is provided in \cite{CoKu09}. Combining these two gradients by Dirichlet form techniques we can construct the environment process and the coupled process, respectively. Scaling limits of such dynamics have been studied e.g.~in \cite{DeM89}, \cite{GP85} and \cite{Os98}. Our results give the first mathematically rigorous and complete construction of the tagged particle process in continuum with interaction potential. In particular, we can treat interaction potentials which might have a singularity at the origin, non-trivial negative part and infinite range as e.g.~the Lennard--Jones potential. 
  
\end{abstract}

\maketitle

\section{Introduction}
   
We consider a system of infinitely many Brownian particles in $\mathbb{R}^d,~d\in\mathbb{N}$, interacting via the gradient of a symmetric pair potential $\phi$. Since each particle can move through each position in space, the system is called continuous and is used for modeling suspensions, gases or fluids. 
The infinite volume, infinite particle, stochastic dynamics $(x(t))_{t\ge 0}$ heuristically solves the following infinite system of stochastic differential equations:
\begin{align}\label{sde}
dx_i(t)&=-\sum_{\stackunder{j\not=i}{j=1}}^\infty\nabla\phi(x_i(t)-x_j(t))dt+\sqrt{2}dB_i(t),\quad t\ge 0,\quad i\in\mathbb{N},
\end{align}
where $x(t)=\{x_1(t),x_2(t),\ldots\}$,~$t\ge 0$, and $(B_i)_{i\in\mathbb{N}}$ is a \emph{sequence of independent Brownian motions}. Its informal generator is given by
\begin{align}\label{genivipdif}
L_{\scriptscriptstyle{gsd}}=\sum_{i=1}^\infty\partial_{x_i}^2-\sum_{i=1}^\infty\Big(\sum_{\stackunder{j\not=i}{j=1}}^\infty\nabla\phi(x_i-x_j)\Big)\partial_{x_i}.
\end{align}
Using
\begin{align}\label{densityivipdif}
\varrho_{\scriptscriptstyle{\infty}}(x_1,x_2,\ldots)=\exp\Big(-\frac{1}{2}\sum_{i\not=j}\phi(x_i-x_j)\Big)
\end{align}
we have
\begin{align*}
L_{\scriptscriptstyle{gsd}}=\sum_{i=1}^\infty\partial_{x_i}\ln(\varrho_{\scriptscriptstyle{\infty}})\partial_{x_i}+\sum_{i=1}^\infty\partial^2_{x_i}.
\end{align*}
Note that $L_{\scriptscriptstyle{gsd}}$ in this form is not well-defined.
The construction of such diffusions has been initiated by R.~Lang \cite{La77}, who considered the case $\phi\in C^3_0(\mathbb{R}^d)$ using finite dimensional approximations and stochastic differential equations. More singular $\phi$, which are of particular interest in physics, as e.g.~the Lennard--Jones potential, have been treated by H.~Osada, \cite{Os96}, and M.~Yoshida, \cite{Y96}. Osada and Yoshida were the first to use Dirichlet forms for the construction of such processes. However, they could not write down the corresponding generators or martingale problems explicitly, hence could not prove that their processes actually solve (\ref{sde}) weakly. This, however, was proved in \cite{AKR98b} by showing an integration by parts formula for the respective grand canonical Gibbs measures. Another approach not using an integration by parts can be found in \cite{MaRo00}. In \cite{GKR04} the authors provide an $N/V$-limit for the infinite volume, infinite particle stochastic dynamics with singular interactions in continuous particle systems on $\mathbb{R}^d,~d\ge 1$.  Their construction is the first covering the case $d=1$ in the space of single configurations (only one particle at one site for all times $t\ge 0$).    

In this paper we study the tagged particle process in continuum with singular interactions. The underlying model can be described as follows. Consider the infinite system of Brownian particles described by (\ref{sde}). Coloring any one particle from the system blue and all the rest of the particles yellow, we investigate the motion of this \emph{tagged particle} in the random sea of all the yellow ones.
In \cite{GP85} this model and a scaling limit of it is studied for Brownian particles in $\mathbb{R}^d$ interacting via the gradient of a smooth, finite range, symmetric, positive pair potential. In \cite{Os98} the author considers the tagged particle process for more singular potentials, including the Lennard--Jones potential, using Dirichlet form techniques. However, there the author is also mainly interested in obtaining a scaling limit for the tagged particle process. Showing existence of the stochastic dynamics in the above cited articles has been left open. Osada gives reference to a forthcoming paper on his own, but as far as we know, it has never been published. Thus in our opinion there is a need to construct the tagged particle process with interaction potential rigorously. For other strategies to obtain the tagged particle process see e.g. \cite[Sect.~6]{DeM89} and the references therein. But note that these are not worked out in detail. We start with an heuristic approach just to clarify the way of posing the problem. After doing so the whole analysis will be done on a strictly rigorous level. 

Assume we are given a solution $x(t),~t\ge 0$, of (\ref{sde}). Using the coordinate transformation 
\begin{align}\label{ctrans}
&\xi(t):=x_1(t)\quad\mbox{and}\nonumber\\ 
&y_i(t):=x_{i+1}(t)-x_1(t),~i\in\mathbb{N},
\end{align}
we can rewrite (\ref{sde}) and obtain
\begin{align}
&d\xi(t)=\sum_{i=1}^\infty\nabla\phi(y_i(t))\,dt+\sqrt{2}\,dB_1(t)\label{tppro}\\
&\left.
\begin{matrix}
dy_i(t)=\big(-\sum_{\stackunder{j\not=i}{j=1}}^\infty\nabla\phi(y_i(t)-y_j(t))-\nabla\phi(y_i(t))-\sum_{j=1}^\infty\nabla\phi(y_j(t))\big)\,dt\\
+\sqrt{2}\,d(B_{i+1}(t)-B_1(t)),\quad t\ge 0,\quad i\in\mathbb{N}   
\end{matrix}\right\}.\label{envpro}
\end{align}
To derive the informal generator\index{generator!informal} of the process $\{\xi(t),y_1(t),y_2(t),\ldots\}$ corresponding to (\ref{tppro}) and (\ref{envpro}), we use again the coordinate transformation (\ref{ctrans}) and obtain
\begin{align*}
&\partial_{x_1}=\partial_{\xi}-\sum_{i=1}^\infty\partial_{y_i},\\
&\partial_{x_{i+1}}=\partial_{y_i},\quad i\ge 1.
\end{align*}
Plugging this into the representation of $L_{\scriptscriptstyle{gsd}}$ in (\ref{genivipdif}) yields
\begin{align*}
L_{\scriptscriptstyle{\text{coup}}}=\sum_{i=1}^\infty\partial_{y_i}^2+\left(\partial_\xi-\sum_{i=1}^\infty\partial_{y_i}\right)^2+\sum_{i=1}^\infty\nabla\phi(y_i)\left(\partial_\xi-\sum_{i=1}^\infty\partial_{y_i}\right)\nonumber\\
-\sum_{i=1}^\infty\Big(\nabla\phi(y_i)+\sum_{\stackunder{j\not=i}{j=1}}^\infty\nabla\phi(y_i-y_j)\Big)\partial_{y_i}.
\end{align*}
By setting
\begin{align}\label{denscoup}
\hat{\varrho}_{\scriptscriptstyle{\infty}}(y_1,y_2,\ldots):=\exp\Big(-\frac{1}{2}\sum_{i\not=j}\phi(y_i-y_j)\underbrace{-\sum_{i=1}^\infty\phi(y_i)}_{\text{\tiny additional term}}\Big)
\end{align}
we obtain
\begin{multline*}
L_{\scriptscriptstyle{\text{coup}}}=\sum_{i=1}^\infty\partial_{y_i}\ln(\hat{\varrho}_{\scriptscriptstyle{\infty}})\partial_{y_i}+\sum_{i=1}^\infty\partial^2_{y_i}
+\partial_\xi^2-\partial_\xi\sum_{i=1}^\infty\partial_{y_i}-\sum_{i=1}^\infty\partial_{y_i}\ln(\hat{\varrho}_{\scriptscriptstyle{\infty}})\partial_{\xi}-\sum_{i=1}^\infty\partial_{y_i}\partial_\xi\\
+\sum_{i=1}^\infty\ln(\hat{\varrho}_{\scriptscriptstyle{\infty}})\sum_{i=1}^\infty\partial_{y_i}+\left(\sum_{i=1}^\infty\partial_{y_i}\right)^2.
\end{multline*}
Hence $L_{\scriptscriptstyle{\text{coup}}}$ splits into
\begin{align}\label{igencoup}
L_{\scriptscriptstyle{\text{coup}}}=L_{\scriptscriptstyle{\text{env}}}+\partial_\xi^2-\partial_\xi\sum_{i=1}^\infty\partial_{y_i}-\sum_{i=1}^\infty\partial_{y_i}\ln(\hat{\varrho}_{\scriptscriptstyle{\infty}})\partial_{\xi}-\sum_{i=1}^\infty\partial_{y_i}\partial_\xi,
\end{align}
where
\begin{align*}
L_{\scriptscriptstyle{\text{env}}}=\sum_{i=1}^\infty\partial_{y_i}^2+\sum_{i=1}^\infty\partial_{y_i}\ln(\hat{\varrho}_{\scriptscriptstyle{\infty}})\partial_{y_i}+\left(\sum_{i=1}^\infty\partial_{y_i}\right)^2+\sum_{i=1}^\infty\partial_{y_i}\ln(\hat{\varrho}_{\scriptscriptstyle{\infty}})\sum_{i=1}^\infty\partial_{y_i}.
\end{align*}
For $y(t):=\{y_1(t),y_2(t),\ldots\}$, $(y(t))_{t\ge 0}$ is called \emph{environment process}\index{environment process}. It is the marginal of the $(\xi,y)$-process describing the environment seen from the tagged particle and having $L_{\scriptscriptstyle{\text{env}}}$ as informal generator. $L_{\scriptscriptstyle{\text{env}}}$ can be written as
\begin{align}\label{igenenv}
L_{\scriptscriptstyle{\text{env}}}=L_{\scriptscriptstyle{\text{gsdad}}}+\left(\sum_{i=1}^\infty\partial_{y_i}\right)^2+\sum_{i=1}^\infty\partial_{y_i}\ln(\hat{\varrho}_{\scriptscriptstyle{\infty}})\sum_{i=1}^\infty\partial_{y_i},
\end{align}
where
\begin{align}\label{igenmgsd}
L_{\scriptscriptstyle{\text{gsdad}}}:=\sum_{i=1}^\infty\partial_{y_i}^2+\sum_{i=1}^\infty\partial_{y_i}\ln(\hat{\varrho}_{\scriptscriptstyle{\infty}})\partial_{y_i}
\end{align}
is the informal generator of a gradient stochastic dynamics\index{gradient stochastic dynamics} with additional drift term (compare (\ref{densityivipdif}) and (\ref{denscoup})). In the sequel we call the dynamics corresponding to $L_{\scriptscriptstyle{\text{gsdad}}}$ a \emph{gradient stochastic dynamics with additional drift}.  

Now (\ref{tppro}) with $\xi(0)=0$ describes the motion $\xi(t)$ of the tagged particle which is determined by the environment process $(y(t))_{t\ge 0}$. 
Thus $L_{\scriptscriptstyle{\text{coup}}}$ is the informal generator of the diffusion process coupling\index{coupling} the motion of the tagged particle in $\mathbb{R}^d$ and the motion of the environment seen from this particle. The tagged particle process is then obtained by a projection of the coupled process generated informally by $L_{\scriptscriptstyle{\text{coup}}}$. 

On a rigorous level the infinite volume, infinite particle, stochastic dynamics in continuous particle systems can be realized as an infinite dimensional diffusion process taking values in the configuration space
\begin{eqnarray*}
\Gamma := \left\{ \gamma \subset {\mathbb R}^{d} \big| \, |\gamma \cap K| < \infty \,
\,\, \mbox{for any compact} \, K \subset {\mathbb R}^{d} \right\}
\end{eqnarray*}
and having a grand canonical Gibbs measure as an invariant measure. In \cite{AKR98b} the generator realizing $L_{\scriptscriptstyle{gsd}}$ (see (\ref{genivipdif})) is given by
\begin{multline*}
L^{\scriptscriptstyle{\Gamma,\mu_{\scriptscriptstyle{0}}}}_{\scriptscriptstyle{\text{gsd}}} F(\gamma)=\sum_{i,j=1}^N\partial_i\partial_jg_{\scriptscriptstyle{F}}(\langle f_1,\gamma\rangle,\ldots,\langle f_N,\gamma\rangle)\left\langle(\nabla f_i,\nabla f_j)_{\mathbb{R}^d},\gamma\right\rangle\\
+\sum_{j=1}^N\partial_j g_{\scriptscriptstyle{F}}(\langle f_1,\gamma\rangle,\ldots,\langle f_N,\gamma\rangle)\Big(\langle\Delta f_j,\gamma\rangle
-\sum_{\{x,y\}\subset\gamma}(\nabla\phi(x-y),\nabla f_j(x)-\nabla f_j(y))_{\mathbb{R}^d}\Big)\\
\mbox{for }\mu_{\scriptscriptstyle{0}}\mbox{-a.e.~}\gamma\in\Gamma\mbox{ and }F=g_{\scriptscriptstyle{F}}(\langle f_1,\cdot\rangle,\ldots,\langle f_N,\cdot\rangle)\in\mathcal{F}C_b^\infty(C^\infty_0(\mathbb{R}^d),\Gamma).
\end{multline*}
It is obtained by carrying out an integration by parts\index{integration by parts} of
\begin{align*}
\mathcal{E}^{\scriptscriptstyle{\Gamma,{\mu}}_{\scriptscriptstyle{0}}}_{\scriptscriptstyle{\text{gsd}}}(F,G)=\int_\Gamma\left(\nabla^\Gamma F(\gamma),\nabla^\Gamma G(\gamma)\right)_{T_\gamma\Gamma}\,d\mu_{\scriptscriptstyle{0}}(\gamma),\quad F,G\in\mathcal{F}C_b^\infty(C^\infty_0(\mathbb{R}^d),\Gamma),
\end{align*}
with respect to a grand canonical Gibbs measure $\mu_{\scriptscriptstyle{0}}$ corresponding to an intensity measure $\sigma=z\,dx,~0<z<\infty$. 

We start our analysis by considering the operator realizing $L_{\scriptscriptstyle{\text{gsdad}}}$ (see (\ref{igenmgsd})). It is given by
\begin{multline*}
L^{\scriptscriptstyle{\Gamma,\mu}}_{\scriptscriptstyle{\text{gsdad}}} F(\gamma)=\sum_{i,j=1}^N\partial_i\partial_jg_{\scriptscriptstyle{F}}(\langle f_1,\gamma\rangle,\ldots,\langle f_N,\gamma\rangle)\left\langle(\nabla f_i,\nabla f_j)_{\mathbb{R}^d},\gamma\right\rangle\\
+\sum_{j=1}^N\partial_j g_{\scriptscriptstyle{F}}(\langle f_1,\gamma\rangle,\ldots,\langle f_N,\gamma\rangle)\Big(\langle\Delta f_j,\gamma\rangle+\sum_{x\in\gamma}(\nabla\phi(x),\nabla f_j(x))_{\mathbb{R}^d}
\\-\sum_{\{x,y\}\subset\gamma}(\nabla\phi(x-y),\nabla f_j(x)-\nabla f_j(y))_{\mathbb{R}^d}\Big)\\
\mbox{for }\mu\mbox{-a.e.~}\gamma\in\Gamma\mbox{ and }F=g_{\scriptscriptstyle{F}}(\langle f_1,\cdot\rangle,\ldots,\langle f_N,\cdot\rangle)\in\mathcal{F}C_b^\infty(C^\infty_0(\mathbb{R}^d),\Gamma), 
\end{multline*}
where $\mu\in\mathcal{G}_{\scriptscriptstyle{Rb}}^{\scriptscriptstyle{gc}}(\Phi_{\scriptscriptstyle{\phi}},z\exp(-\phi))$, i.e.~a grand canonical Gibbs measure corresponding to a pair potential $\phi$ and an intensity measure $\sigma=z\,\exp(-\phi) dx,~0<z<\infty$, with corresponding correlation measures fulfilling a Ruelle bound. The associated symmetric bilinear form is given by
\begin{align*}
\mathcal{E}^{\scriptscriptstyle{\Gamma,\mu}}_{\scriptscriptstyle{\text{gsdad}}}(F,G)=\int_\Gamma\left(\nabla^\Gamma F(\gamma),\nabla^\Gamma G(\gamma)\right)_{T_\gamma\Gamma}\,d\mu(\gamma),\quad F,G\in\mathcal{F}C_b^\infty(C^\infty_0(\mathbb{R}^d),\Gamma).
\end{align*}
Having a Ruelle bound enables us to prove an integration by parts formula for cylinder functions on the configuration space $\Gamma$ with respect to the underlying grand canonical Gibbs measure $\mu$ for a general class of pair potentials $\phi$. This is done in Section \ref{secintbp1}, see Theorem \ref{thmintbyparts}. Using this result we can identify $L^{\scriptscriptstyle{\Gamma,\mu}}_{\scriptscriptstyle{\text{gsdad}}}$ as generator of $\mathcal{E}^{\scriptscriptstyle{\Gamma,\mu}}_{\scriptscriptstyle{\text{gsdad}}}$ on $\mathcal{F}C_b^\infty(C^\infty_0(\mathbb{R}^d),\Gamma)$. Moreover, showing that $(\mathcal{E}^{\scriptscriptstyle{\Gamma,\mu}}_{\scriptscriptstyle{\text{gsdad}}},D(\mathcal{E}^{\scriptscriptstyle{\Gamma,\mu}}_{\scriptscriptstyle{\text{gsdad}}}))$ is a conservative, local, quasi-regular, symmetric Dirichlet form we have the existence of a conservative diffusion process $\mathbf{M}_{\scriptscriptstyle{\text{gsdad}}}^{\scriptscriptstyle{\Gamma,\mu}}$ solving the associated martingale problem. To tackle $\mathcal{E}^{\scriptscriptstyle{\Gamma,\mu}}_{\scriptscriptstyle{\text{gsdad}}}$ we got many ideas from \cite{AKR98b}, but due to the more general intensity measure $\sigma$ according to $\mu$, we have to deal with additional technical problems.

Next step is to investigate the operator realizing $L_{\scriptscriptstyle{\text{env}}}$ (see (\ref{igenenv})). Hence we consider
\begin{multline*}
L^{\scriptscriptstyle{\Gamma,\mu}}_{\scriptscriptstyle{\text{env}}}F(\gamma)=L^{\scriptscriptstyle{\Gamma,\mu}}_{\scriptscriptstyle{\text{gsdad}}}F(\gamma)
+\sum_{i,j=1}^N\partial_i\partial_j g_{\scriptscriptstyle{F}}\left(\langle f_1,\gamma\rangle,\ldots,\langle f_N,\gamma\rangle\right)\big(\langle\nabla f_i,\gamma\rangle,\langle\nabla f_j,\gamma\rangle\big)_{\mathbb{R}^d}\\
+\sum_{j=1}^N \partial_j g_{\scriptscriptstyle{F}}\left(\langle f_1,\gamma\rangle,\ldots,\langle f_N,\gamma\rangle\right)\Big(\langle\Delta f_j,\gamma\rangle-\big(\langle\nabla\phi,\gamma\rangle,\langle\nabla f_j,\gamma\rangle\big)_{\mathbb{R}^d}\Big)\quad\mbox{ for }\mu\mbox{-a.e.~}\gamma\in\Gamma,
\end{multline*}
$F=g_{\scriptscriptstyle{F}}(\langle f_1,\cdot\rangle,\ldots,\langle f_N,\cdot\rangle)\in\mathcal{F}C_b^\infty(C^\infty_0(\mathbb{R}^d),\Gamma)$ and the associated symmetric bilinear form
\begin{multline*}
\mathcal{E}^{\scriptscriptstyle{\Gamma,{\mu}}}_{\scriptscriptstyle{\text{env}}}(F,G)=\int_\Gamma\Big(\nabla^\Gamma F(\gamma),\nabla^\Gamma G(\gamma)\Big)_{T_\gamma\Gamma}+\Big(\nabgam F(\gamma),\nabgam G(\gamma)\Big)_{\scriptscriptstyle{\mathbb{R}^d}}\,d{\mu}(\gamma),\\ F,G\in\mathcal{F}C_b^\infty(C^\infty_0(\mathbb{R}^d),\Gamma).
\end{multline*}
Here $\mu$ is again as above. For an activity $0<z<\infty$ and a general class of pair potentials $\phi$ in \cite{CoKu09} for a non-empty subset of $\mathcal{G}_{\scriptscriptstyle{Rb}}^{\scriptscriptstyle{gc}}(\Phi_{\scriptscriptstyle{\phi}},z\exp(-\phi))$ an integration by parts formula with respect to $\nabgam$ is shown. In the sequel we denote this subset of $\mathcal{G}_{\scriptscriptstyle{Rb}}^{\scriptscriptstyle{gc}}(\Phi_{\scriptscriptstyle{\phi}},z\exp(-\phi))$ by $\mathcal{G}_{\scriptscriptstyle{ibp}}^{\scriptscriptstyle{gc}}(\Phi_{\scriptscriptstyle{\phi}},z\exp(-\phi))$. Hence for $\mu\in \mathcal{G}_{\scriptscriptstyle{ibp}}^{\scriptscriptstyle{gc}}(\Phi_{\scriptscriptstyle{\phi}},z\exp(-\phi))$ the bilinear form $(\mathcal{E}^{\scriptscriptstyle{\Gamma,{\mu}}}_{\scriptscriptstyle{\text{env}}},D(\mathcal{E}^{\scriptscriptstyle{\Gamma,{\mu}}}_{\scriptscriptstyle{\text{env}}}))$ is closable. Furthermore, together with the results we obtained for $(\mathcal{E}^{\scriptscriptstyle{\Gamma,\mu}}_{\scriptscriptstyle{\text{gsdad}}},D(\mathcal{E}^{\scriptscriptstyle{\Gamma,\mu}}_{\scriptscriptstyle{\text{gsdad}}}))$ we prove that $(\mathcal{E}^{\scriptscriptstyle{\Gamma,\mu}}_{\scriptscriptstyle{\text{env}}},D(\mathcal{E}^{\scriptscriptstyle{\Gamma,\mu}}_{\scriptscriptstyle{\text{env}}}))$ is a conservative, local, quasi-regular, symmetric Dirichlet form. Thus we obtain a conservative diffusion process $\mathbf{M}_{\scriptscriptstyle{\text{env}}}^{\scriptscriptstyle{\Gamma,\mu}}$ solving the associated martingale problem. Hence $\mathbf{M}_{\scriptscriptstyle{\text{env}}}^{\scriptscriptstyle{\Gamma,\mu}}$ solves (\ref{envpro}) weakly and describes the motion of the environment seen from the tagged particle.

Finally, as the operator realizing $L_{\scriptscriptstyle{\scriptscriptstyle{\text{coup}}}}$ (see (\ref{igencoup})) we consider
\begin{multline*}
L^{\scriptscriptstyle{\mathbb{R}^d\times\Gamma,\hat{\mu}}}_{\scriptscriptstyle{\text{coup}}}\mathfrak{F}(\xi,\gamma)=L^{\scriptscriptstyle{\Gamma,\mu}}_{\scriptscriptstyle{\text{env}}}F(\gamma)\,f(\xi)-2\,\Big(\nabgam F(\gamma),\nabla f(\xi)\Big)_{\mathbb{R}^d}\\
+\sum_{x\in\gamma}\Big(\nabla\phi(x),\nabla f(\xi)\Big)_{\mathbb{R}^d}+\Delta f(\xi)\,F(\gamma)
\quad\mbox{for }d\xi\otimes\mu\mbox{-a.e.~}(\xi,\gamma)\in\mathbb{R}^d\times\Gamma,
\end{multline*}
where $\mathfrak{F}\in C_0^\infty(\mathbb{R}^d)\otimes \mathcal{F}C_b^{\infty}(C^{\infty}_0(\mathbb{R}^d),\Gamma)$ with
$\mathfrak{F}(x,\gamma)=f(x)\,F(\gamma)$ for $f\in C_0^\infty(\mathbb{R}^d)$, $F\in \mathcal{F}C_b^{\infty}(C^{\infty}_0(\mathbb{R}^d),\Gamma)$ and $\mu\in\mathcal{G}_{\scriptscriptstyle{ibp}}^{\scriptscriptstyle{gc}}(\Phi_{\scriptscriptstyle{\phi}},z\exp(-\phi))$, $0<z<\infty$. The associated symmetric bilinear form is given by
\begin{multline*}
\mathcal{E}^{\scriptscriptstyle{\mathbb{R}^d\times\Gamma,\hat{\mu}}}_{\scriptscriptstyle{\text{coup}}}(\mathfrak{F},\mathfrak{G})
=\int_{\mathbb{R}^d\times\Gamma}\Big((\nabgam -\nabla)\mathfrak{F}(\xi,\gamma),(\nabgam -\nabla)\mathfrak{G}(\xi,\gamma)\Big)_{\mathbb{R}^d}
\\+\Big(\nabla^\Gamma \mathfrak{F}(\xi,\gamma),\nabla^\Gamma \mathfrak{G}(\xi,\gamma)\Big)_{T_\gamma\Gamma}\,d\xi\,d{\mu}(\gamma),\quad\\
\mathfrak{F},\mathfrak{G}\in C_0^\infty(\mathbb{R}^d)\otimes \mathcal{F}C_b^{\infty}(C^{\infty}_0(\mathbb{R}^d),\Gamma).
\end{multline*}
Applying a strategy as used for tackling $\mathcal{E}^{\scriptscriptstyle{\Gamma,\mu}}_{\scriptscriptstyle{\text{gsdad}}}$ and $\mathcal{E}^{\scriptscriptstyle{\Gamma,\mu}}_{\scriptscriptstyle{\text{env}}}$ we have that also $(\mathcal{E}^{\scriptscriptstyle{\mathbb{R}^d\times\Gamma,\hat{\mu}}}_{\scriptscriptstyle{\text{coup}}},D(\mathcal{E}^{\scriptscriptstyle{\mathbb{R}^d\times\Gamma,\hat{\mu}}}_{\scriptscriptstyle{\text{coup}}}))$ is a conservative, local, quasi-regular Dirichlet form, where $\hat{\mu}:=d\xi\otimes\mu$. Therefore, there exists a conservative diffusion process $\mathbf{M}_{\scriptscriptstyle{\text{coup}}}^{\scriptscriptstyle{\mathbb{R}^d\times\Gamma,\hat{\mu}}}$ taking values in $\mathbb{R}^d\times\Gamma$ for $d\ge 2$ (for $d=1$ the process exists only in the larger space $\mathbb{R}^d\times\ddot{\Gamma}$, where $\ddot{\Gamma}$ is the configuration space of multiple configurations) solving the martingale problem associated to (\ref{tppro}) and (\ref{envpro}). Thus $\mathbf{M}_{\scriptscriptstyle{\text{coup}}}^{\scriptscriptstyle{\mathbb{R}^d\times\Gamma,\hat{\mu}}}$ realizes the coupling of the motion of the tagged particle and the motion of the environment seen from the tagged particle. Then we obtain the tagged particle process by a projection of the process $\mathbf{M}_{\scriptscriptstyle{\text{coup}}}^{\scriptscriptstyle{\mathbb{R}^d\times\Gamma,\hat{\mu}}}$ to its first component. Note that the resulting process in general is no longer a Markov process.  

The progress achieved in this paper may be summarized by the following list of core results:
\begin{itemize}
\item
We prove an integration by parts formula for $\nabla^\Gamma$ with respect to grand canonical Gibbs measures $\mu$ fulfilling a Ruelle bound and having $\sigma=z\,\exp(-\phi)\,dx,~0<z<\infty$, as intensity measure, see Theorems \ref{thmintbyparts}. 

\item
We provide a rigorous explicit representation of the generator $L^{\scriptscriptstyle{\mathbb{R}^d\times\Gamma,\hat{\mu}}}_{\scriptscriptstyle{\text{coup}}}$ of the coupled process for functions in $C_0^\infty(\mathbb{R}^d)\otimes \mathcal{F}C_b^{\infty}(C^{\infty}_0(\mathbb{R}^d),\Gamma)$, see Theorem \ref{thmexcoup}.

\item
We prove quasi-regularity for $(\mathcal{E}^{\scriptscriptstyle{\Gamma,\mu}}_{\scriptscriptstyle{\text{env}}},D(\mathcal{E}^{\scriptscriptstyle{\Gamma,\mu}}_{\scriptscriptstyle{\text{env}}}))$, the Dirichlet form corresponding to the environment process, see Lemma \ref{lemE22}.

\item We show the existence of the tagged particle process with interaction potential rigorously by using Dirichlet form techniques, see Theorem \ref{thmexprocoup} and Remark \ref{remextppro}.

\item
The process we construct is conservative and the unique solution to the martingale problem corresponding to the Friedrichs' extension of $(L^{\scriptscriptstyle{\mathbb{R}^d\times\Gamma,\hat{\mu}}}_{\scriptscriptstyle{\text{coup}}},C_0^\infty(\mathbb{R}^d)\otimes\mathcal{F}C_b^{\infty}(C^{\infty}_0(\mathbb{R}^d),\Gamma))$, see Theorem \ref{thmexprocoup}.

\item
Our results give the first mathematically rigorous and complete construction of the tagged particle process in continuum with interaction potential.
\end{itemize}
Here we would like to stress that all the above results hold for a very general class of interaction potentials. We only have to assume that the interaction potential is super stable (SS), integrable (I), lower regular (LR), differentiable and $L^q$ (D$\text{L}^\text{q}$), $q>d\ge 1$, and locally summable (LS). Hence we can treat interaction potentials which might have a singularity at the origin, non-trivial negative part and infinite range as e.g.~the Lennard--Jones potential.


\section{Configuration spaces and Gibbs measures}\label{defcangm}

\subsection{Configuration space and Poisson measure}

Let ${\mathbb R}^{d},~d \in \mathbb{N}$, be equipped with the norm 
$|\cdot|_{{\mathbb R}^{d}}$
given by the Euclidean scalar product $(\cdot, \cdot)_{{\mathbb
R}^{d}}$. By ${\mathcal B}({\mathbb R}^{d})$ we denote the
corresponding Borel $\sigma$-algebra. ${\mathcal O}_c({\mathbb
R}^{d})$ denotes the system of all open sets in ${\mathbb
R}^{d}$, which have compact closure and $\mathcal{B}_c(\mathbb{R}^d)$ the sets from $\mathcal{B}(\mathbb{R}^d)$ having compact closure. The Lebesgue measure on the
measurable space $({\mathbb R}^{d}, {\mathcal B}({\mathbb R}^{d}))$ we
denote by $dx$.

The \emph{configuration space}\index{configuration space} $\Gamma$ over
${\mathbb R}^{d}$ is defined by
\begin{eqnarray*}
\Gamma := \left\{ \gamma \subset {\mathbb R}^{d} \big| \, |\gamma \cap K| < \infty \,
\,\, \mbox{for any compact} \, K \subset {\mathbb R}^{d} \right\}.
\end{eqnarray*}
Here $|A|$ denotes the cardinality of a set $A$.
Via the identification of $\gamma \in \Gamma$ with
$\sum_{x \in \gamma} \varepsilon_{x} \in {\mathcal M}_p({\mathbb R}^{d})$,
where $\varepsilon_{x}$ denotes the Dirac measure in $x \in
{\mathbb R}^{d}$, $\Gamma$ can be considered as a subset of the
set ${\mathcal M}_p({\mathbb R}^{d})$ of all positive, integer-valued Radon measures\index{Radon measure!positive, integer-valued}
on ${\mathbb R}^{d}$. Hence $\Gamma$ can be topologized by the
vague topology\index{vague topology}, i.e., the topology generated by maps
\begin{eqnarray}\label{liftmap}
\gamma \mapsto \, \langle f,\gamma  \rangle \, := \int_{{\mathbb
R}^{d}} f(x) \,d\gamma(x) = \sum_{x \in \gamma} f(x),
\end{eqnarray}
where $f \in C_{0}({\mathbb R}^{d})$,
the set of continuous functions on ${\mathbb R}^{d}$ with compact support.
We denote by ${\mathcal B}({\Gamma})$ the corresponding Borel $\sigma$-algebra. For a fixed intensity measure $\sigma$ on $(\mathbb{R}^d,\mathcal{B}(\mathbb{R}^d))$ we denote by $\pi_{\scriptscriptstyle{\sigma}}$ the Poisson measure on $(\Gamma,\mathcal{B}(\Gamma))$ with intensity measure $\sigma$. Fore more details, see e.g.~\cite{AKR98a}, \cite{Ka83} and \cite{KMM78}.    

\subsection{Grand canonical and canonical Gibbs measures}\label{gcgm}

Let $\phi$ be a symmetric pair potential\index{pair potential}, i.e., a measurable function
$\phi: {\mathbb R}^d \to {\mathbb R} \cup \{+ \infty \}$ such that
$\phi(x) = \phi(-x)\in\mathbb{R}$ for $x\in\mathbb{R}^d\setminus\{0\}$. 
Any pair potential $\phi$ defines a potential $\Phi_{\scriptscriptstyle{\phi}}$ as follows. We set
\begin{align*}
\Phi_{\scriptscriptstyle{\phi}}(\gamma):=0\mbox{ for }|\gamma|\not=2\quad\text{and}\quad\Phi_{\scriptscriptstyle{\phi}}(\gamma)=\phi(x-y)\mbox{ for }\gamma=\{x,y\}\subset\mathbb{R}^d.
\end{align*} 

For a given pair potential $\phi$ we define the \emph{potential energy}\index{potential energy} $E:\Gamma\to\mathbb{R}\cup\{+\infty\}$ by
\begin{align*}
\gamma\mapsto E(\gamma):=
\left\{\begin{array}{ll}
  \sum_{\{x,y\}\subset\gamma}\phi(x-y), & \text{if }\sum_{\{x,y\}\subset\gamma}|\phi(x-y)|<\infty\\
  +\infty, & \text{otherwise}
  \end{array}\right.,
\end{align*}
where the sum over the empty set is defined to be zero.

The \emph{interaction energy}\index{interaction energy} between to configurations $\gamma$ and $\eta$ from $\Gamma$ is defined by
\begin{align*}
W(\gamma\,|\,\eta):=
\left\{\begin{array}{ll}
	\sum_{x\in\gamma,y\in\eta}\phi(x-y), & \text{if }\sum_{x\in\gamma,y\in\eta}|\phi(x-y)|<\infty\\
	+\infty, & \text{otherwise}
\end{array}\right.
\end{align*}
(typically we have $\gamma\cap\eta=\varnothing$).

In our terminology for any $\Lambda\in\mathcal{O}_c(\mathbb{R}^d)$ the \emph{conditional energy }\index{conditional energy}$E_{\scriptscriptstyle{\Lambda}}:\Gamma\to\mathbb{R}\cup\{+\infty\}$ is given by
\begin{align*}
\gamma\mapsto E_{\scriptscriptstyle{\Lambda}}(\gamma):=E(\gamma_\Lambda)+W(\gamma_\Lambda\,|\,\gamma_{\Lambda^c}).
\end{align*}
To introduce \emph{grand canonical Gibbs measures} on $(\Gamma,\mathcal{B}(\Gamma))$ we need the notion of a \emph{Gibbsian specification}\index{Gibbsian specification}. For any $\Lambda\in\mathcal{O}_c(\mathbb{R}^d)$ the specification $\Pi^{\scriptscriptstyle{\sigma}}_{\scriptscriptstyle{\Lambda}}$ is defined for
any $\gamma\in\Gamma$, $\Delta\in\mathcal{B}(\Gamma)$, by (see e.g.~\cite{Pr76})
\begin{align*}
\Pi^{\scriptscriptstyle{\sigma}}_{\scriptscriptstyle{\Lambda}}(\gamma,\Delta):=1_{\scriptscriptstyle{\left\{Z^{\scriptscriptstyle{\sigma}}_{\scriptscriptstyle{\Lambda}}<\infty\right\}}}(\gamma)\left(Z^{\scriptscriptstyle{\sigma}}_{\scriptscriptstyle{\Lambda}}(\gamma)\right)^{-1}\!\!\int_{\Gamma}
\!\!1_{\Delta}(\gamma_{{\Lambda}^c}\!\cup\!\gamma'_{{\Lambda}})\exp\left(\!-\!E_{\scriptscriptstyle{\Lambda}}(\gamma_{{\Lambda}^c}\!\cup\!\gamma'_{{\Lambda}})\right)d\pi_{\scriptscriptstyle{\sigma}}(\gamma'),
\end{align*}
where
\begin{align*}
Z^{\scriptscriptstyle{\sigma}}_{\scriptscriptstyle{\Lambda}}(\gamma):=\int_{\Gamma}\exp\left(\!-\!E_{\scriptscriptstyle{\Lambda}}(\gamma_{{\Lambda}^c}\!\cup\!\gamma'_{{\Lambda}})\right)d\pi_{\scriptscriptstyle{\sigma}}(\gamma')
\end{align*}
and $1_{\scriptscriptstyle{\left\{Z^{\scriptscriptstyle{\sigma}}_{\scriptscriptstyle{\Lambda}}<\infty\right\}}}$ denotes the indicator function of the set $\{\gamma\in \Gamma\,|\,Z^{\scriptscriptstyle{\sigma}}_{\scriptscriptstyle{\Lambda}}(\gamma)<\infty\}$.
A probability measure $\mu$ on $(\Gamma,\mathcal{B}(\Gamma))$, we write $\mu\in\mathcal{M}^1(\Gamma,\mathcal{B}(\Gamma))$, is called a grand canonical Gibbs measure corresponding to the potential $\Phi_{\scriptscriptstyle{\phi}}$ and the intensity measure $\sigma$ if it satisfies the \emph{Dobrushin-Lanford-Ruelle-equation (DLR)}:
\begin{align*}
\mu\,\Pi^{\scriptscriptstyle{\sigma}}_{\scriptscriptstyle{\Lambda}}=\mu\quad\mbox{for
all }\Lambda\in\mathcal{O}_c(\mathbb{R}^d). 
\end{align*}
For $\Lambda\in\mathcal{O}_c(\mathbb{R}^d)$ define for $\gamma\in\Gamma,~\Delta\in\mathcal{B}(\Gamma)$
\begin{align*}
\hat{\Pi}^{\scriptscriptstyle{\sigma}}_{\scriptscriptstyle{\Lambda}}(\gamma,\Delta):=\left\{
\begin{array}{ll}
\frac{\Pi^{\scriptscriptstyle{\sigma}}_{\scriptscriptstyle{\Lambda}}(\gamma,\Delta\cap\{\eta\in\Gamma\,|\,\eta(\Lambda)=\gamma(\Lambda)\})}{\Pi^{\scriptscriptstyle{\sigma}}_{\scriptscriptstyle{\Lambda}}(\gamma,\{\eta\in\Gamma\,|\,\eta(\Lambda)=\gamma(\Lambda)\})}, & \mbox{if }\Pi^{\scriptscriptstyle{\sigma}}_{\scriptscriptstyle{\Lambda}}(\gamma,\{\eta\in\Gamma\,|\,\eta(\Lambda)=\gamma(\Lambda)\})>0\\
0, & \mbox{otherwise}
\end{array}
\right..
\end{align*}
A probability measure $\mu$ on $(\Gamma,\mathcal{B}(\Gamma))$ is called a {\it canonical Gibbs measure}\index{Gibbs measure!canonical} to the potential $\Phi_{\scriptscriptstyle{\phi}}$ and the intensity $\sigma$ if
\begin{align*}
\mu\,\hat{\Pi}^{\scriptscriptstyle{\sigma}}_{\scriptscriptstyle{\Lambda}}=\mu\quad\mbox{for
all }\Lambda\in\mathcal{O}_c(\mathbb{R}^d).
\end{align*}
In the sequel we assume that the intensity measure $\sigma$ is absolutely continuous with respect to the Lebesgue measure with a bounded, non-negative density $\varrho$ and an activity parameter $0<z<\infty$,~i.e.,~$\frac{d\sigma}{dx}=z\varrho$, $0<z<\infty$. We then denote by $\mathcal{G}^{\scriptscriptstyle{gc}}(\Phi_{\scriptscriptstyle{\phi}},z\varrho)$, $0<z<\infty$, the set of corresponding grand canonical Gibbs measures and by $\mathcal{G}^{\scriptscriptstyle{c}}(\Phi_{\scriptscriptstyle{\phi}},\varrho)$, the set of corresponding canonical Gibbs measures. Due to \cite[Prop.~2.1]{Pr79} we have for given potential $\Phi_{\scriptscriptstyle{\phi}}$ and a bounded, non-negative density function $\varrho$ that
\begin{align}\label{inclu}
\mathcal{G}^{\scriptscriptstyle{\text{gc}}}(\Phi_{\scriptscriptstyle{\phi}},z\varrho)\subset\mathcal{G}^{\scriptscriptstyle{\text{c}}}(\Phi_{\scriptscriptstyle{\phi}},\varrho),~0<z<\infty.
\end{align}

\subsection{$K$-transform and correlation measures}\label{ss23}

Next, we recall the definition of correlation functions using
the concept of the $K$-transform, see \cite{KK99a} for a detailed study. Denote by $\Gamma_0$ the space of finite configurations over $\mathbb{R}^d$:
\begin{align*}
\Gamma_0 := \bigsqcup_{n=0}^\infty \Gamma_{0}^{(n)},\quad\Gamma^{(n)}_{0}:=\{\eta\subset \mathbb{R}^d\,|\,|\eta|=n\},\quad\Gamma^{(0)}_{0}:=\{\varnothing\}.
\end{align*}
Let $\widetilde{\mathbb{R}^{d\times n}}:=\{(x_1,\ldots,x_n)\in \mathbb{R}^{d\times n}\,|\,x_k\not=x_j,~j\not=k\}$ and let $S^n$ denote the group of all permutations of $\{1,\dots,n\}$. Through the natural bijection $\widetilde{\mathbb{R}^{d\times n}}/S^n \longleftrightarrow \Gamma_{0}^{(n)}$
one defines a topology on $\Gamma^{(n)}_{0}$. Let ${\mathcal B}(\Gamma^{(n)}_{0})$ denote the Borel $\sigma$-algebra on $\Gamma^{(n)}_{0}$. We equip $\Gamma_0$ with the topology $\mathcal{O}(\Gamma_0)$ of disjoint union. The Borel $\sigma$-algebra we denote by $\mathcal{B}(\Gamma_0)$. A ${\mathcal B}(\Gamma_0)$-measurable function $G \colon \Gamma_0 \to
{\mathbb R}$, $G\in L^0(\Gamma_0)$ for short, is said to have bounded support if there exist
$\Lambda \in {\mathcal O}_c({\mathbb R}^{d})$ and $N \in {\mathbb
N}$ such that $\mbox{supp}(G) \subset \bigsqcup_{n=0}^N
\Gamma_{0, \Lambda}^{(n)}$, where $\Gamma_{0, \Lambda}^{(n)}:=\{\eta\subset\Lambda\,|\,|\eta|=n\}$.
For any $\gamma \in {\Gamma}$ let $\sum_{\eta \Subset \gamma}$
denote the summation over all $\eta \subset \gamma$ such that
$|\eta| < \infty$. For a function
$G: \Gamma_0 \to {\mathbb R}$, the $K$-transform\index{$K$-transform} of $G$ is
defined by
\begin{eqnarray}\label{eq9}
(KG)(\gamma):= \sum_{\eta \Subset \gamma} G(\eta)
\end{eqnarray}
for each $\gamma \in \Gamma$ such that at least one of the series
$\sum_{\eta \Subset \gamma} G^+(\eta)$ or $\sum_{\eta \Subset
\gamma} G^-(\eta)$ converges, where $G^{+} := \max \{ 0, G\}$ and
$G^{-} := -\min \{ 0, G\}$.  The convolution $\star$ is defined by
\begin{multline}\label{propconv}
\star:L^0(\Gamma_0)\times L^0(\Gamma_0)\to L^0(\Gamma_0)\\
(G_1,G_2)\mapsto (G_1\star G_2)(\eta):=\sum_{(\xi_1,\xi_2,\xi_3)\in{\mathcal{P}^3_{\varnothing}}(\eta)}G_1(\xi_1\cup\xi_2)G_2(\xi_2\cup\xi_3),
\end{multline}
where $\mathcal{P}^3_{\varnothing}(\eta)$ denotes the set of all partitions $(\xi_1,\xi_2,\xi_3)$ of $\eta\in\Gamma_0$ in $3$ parts, i.e., all triples $\xi_i\subset\eta,~\xi_i\cap\xi_j=\varnothing$ if $i\not=j$, and $\xi_1\cup\xi_2\cup\xi_3=\eta$. We say $G\in L^0_{\text{ls}}(\Gamma_0)$ iff $G\in L^0(\Gamma_0)$ and there exists $\Lambda\in\mathcal{O}_c(\mathbb{R}^d)$ such that $G|_{\Gamma_0\setminus\Gamma_\Lambda}=0$. I.e., functions in $L^0_{\text{ls}}(\Gamma_0)$ are locally supported. Let $G_1, G_2\in L^0_{\text{ls}}(\Gamma_0)$. Then due to \cite[Prop.~3.11]{KK99a}.
\begin{align*}
K(G_1\star G_2)=KG_1\,KG_2.
\end{align*}
Let $\mu$ be a probability measure on $(\Gamma,{\mathcal B}(\Gamma))$. The
correlation measure\index{correlation measure} corresponding to $\mu$ is defined by
\begin{eqnarray*}
\rho_\mu(A) := \int_{\Gamma}(K1_A)(\gamma) \,d\mu(\gamma), \qquad A
\in {\mathcal B}(\Gamma_0).
\end{eqnarray*}
$\rho_\mu$ is a measure on $(\Gamma_0, {\mathcal B} (\Gamma_0))$
(see \cite{KK99a} for details, in particular, measurability issues).

Let $G \in L^1(\Gamma_0,\rho_\mu)$,
then $ \| KG \|_{L^1(\Gamma,\mu)} \le \| K|G| \|_{L^1(\Gamma,\mu)}
= \| G \|_{L^1(\Gamma_0,\rho_\mu)}$,
hence $KG \in L^1(\Gamma,\mu)$ and $KG(\gamma)$
is for $\mu$-a.e.~$\gamma \in \Gamma$ absolutely convergent.
Moreover, then obviously
\begin{eqnarray}\label{eq302}
\int_{\Gamma_0} G(\eta) \,d\rho_\mu(\eta) =
\int_{\Gamma}(KG)(\gamma) \,d\mu(\gamma),
\end{eqnarray}
see \cite{KK99a}, \cite{Len75a}, \cite{Len75b}.

For any $\mu\in\mathcal{G}^{\scriptscriptstyle{gc}}(\Phi_{\scriptscriptstyle{\phi}},z\,\varrho),~0<z<\infty$, the correlation measure $\rho_\mu$ is absolutely continuous with respect to the
Lebesgue-Poisson measure $\lambda_{\sigma}$, see e.g.~\cite[Rem.~4.4]{KK99a} and the references therein.
Its Radon-Nikodym derivative
\begin{eqnarray*}
\rho_{\mu}(\eta) := \frac{d\rho_\mu}{d\lambda_{\sigma}}(\eta), \qquad \eta \in
\Gamma_0,
\end{eqnarray*}
with respect to~$\lambda_{\sigma}$ we denote by the same symbol and the functions
\begin{eqnarray*}
\rho_{\mu}^{(n)}(x_1,\dots,x_n) := \rho_{\mu}(\{x_1,\dots,x_n\}),
\quad x_1,\dots,x_n \in {\mathbb R}^d, \,\, x_i \neq x_j
\,\, \mbox {if} \,\, i \neq j,
\end{eqnarray*}
are called the $n$-th order correlation functions\index{correlation function} of the measure
$\mu$.

We put the following restriction on the correlation measures under consideration.

\begin{description}
\item[(RB)]
We say that a correlation measure $\rho_\mu:\mathcal{B}(\Gamma_0)\to (0,\infty)$ corresponding to a measure $\mu$ on $(\Gamma,\mathcal{B}(\Gamma))$ fulfills the \emph{Ruelle-bound}\index{Ruelle-bound}, if for some $C_R\in (0,\infty)$
\begin{align*}
\rho_\mu(\gamma)\le(C_R)^{|\gamma|},\quad\mbox{for }\lambda_\sigma\mbox{-a.a. }\gamma\in\Gamma_0.
\end{align*}
\end{description}
Denote by $\mathcal{G}_{\scriptscriptstyle{Rb}}^{\scriptscriptstyle{gc}}(\Phi_{\scriptscriptstyle{\phi}},z\varrho)$, $0<z<\infty$, the set of all grand canonical Gibbs measures from $\mathcal{G}^{\scriptscriptstyle{gc}}(\Phi_{\scriptscriptstyle{\phi}},z\varrho)$, $0<z<\infty$, which fulfill (RB).

\subsection{Conditions on the interactions}\label{subcondpot}

For every $r = (r_1, \ldots, r_d) \in {\mathbb Z}^d$ we define a cube
\begin{eqnarray*}
Q_r = \Big{\{} x \in {\mathbb R}^d \, \Big{|} \, r_i - 1/2 \le x_i < r_i + 1/2 \Big{\}}.
\end{eqnarray*}
These cubes form a partition of ${\mathbb R}^d$. For any $\gamma \in \Gamma$ we set
$\gamma_r := \gamma_{Q_r}, \, r \in {\mathbb Z}^d$. Additionally, we introduce for
$n \in {\mathbb N}$ the cube $\Theta_n$ with side length $2n -1$ centered at the origin
in ${\mathbb R}^d$.

\begin{description}
\item[(SS)] ({\itshape Superstability})\index{Superstability~(SS)} There exist $0<A<\infty,~0\le B <\infty$ such that, if $\gamma =
\gamma_{\Theta_n}$ for some $n \in {\mathbb N}$, then
\begin{eqnarray*}
E_{\Theta_n}(\gamma) \, \ge \, \sum_{r \in {\mathbb Z}^d}
\Big{(} A |\gamma_r|^2 - B |\gamma_r| \Big{)}.
\end{eqnarray*}
\end{description}
(SS) obviously implies:
\begin{description}
\item[(S)] ({\itshape Stability})\index{Stability~(S)}
For any $\Lambda \in {\mathcal O}_c({\mathbb R}^{d})$ and for all
$\gamma \in \Gamma$ we have
\begin{eqnarray*}
E_{\Lambda}(\gamma) \, \ge \, -B |\gamma_{\Lambda}|,\quad 0\le B <\infty.
\end{eqnarray*}
\end{description}
As a consequence of (S), in turn, we have that $\phi$ is bounded from below. 
We also need
\begin{description}
\item[(I)] ({\itshape Integrability})\index{Integrability~(I)} We have:
\begin{eqnarray*}
 \int_{{\mathbb R}^{d}} | \exp(- \phi(x)) - 1 | \,dx < \infty.
\end{eqnarray*}
\end{description}

\begin{description}
\item [(LR)] ({\itshape Lower Regularity})\index{Lower Regularity~(LR)} There exists a decreasing positive function $a: {\mathbb N}
\to (0,\infty)$ such that
\begin{eqnarray*}
\sum_{r \in {\mathbb Z}^d} a(\| r \|_{\max}) < \infty
\end{eqnarray*}
and for any $\Lambda^{\prime}, \Lambda^{\prime \prime}$ which are finite unions
of cubes of the form $Q_r$ and disjoint,
\begin{eqnarray*}
W(\gamma^{\prime} \mid \gamma^{\prime \prime}) \ge
- \sum_{r^{\prime}, r^{\prime \prime} \in {\mathbb Z}^d} a(\|
r^{\prime} -  r^{\prime \prime} \|_{\max}) \,
|\gamma^{\prime}_{r^\prime}| \,
|\gamma^{\prime \prime}_{r^{\prime \prime}}|,
\end{eqnarray*}
provided $\gamma^{\prime}
= \gamma^{\prime}_{\Lambda^{\prime}}, \, \gamma^{\prime \prime}
= \gamma^{\prime \prime}_{\Lambda^{\prime \prime}}$.\\
Here and below $\|x\|_{\max}:=\max_{1\le i\le d}|x_i|,\quad x=(x_1,\ldots,x_d)\in\mathbb{R}^d$.
\end{description}

\begin{remark}
Using an argumentation as in \cite[Prop.~2.17]{KK01} the notion of \emph{Lower Regularity} (LR) given here implies the one defined in \cite[Sect.~2.5]{KK01}. Note that we are dealing with an intensity measure $\sigma=z\varrho\,dx$, $0<z<\infty$, where $\varrho$ is a bounded, non-negative density.
\end{remark}


\begin{description}
\item[(D{\bf $\text{L}^\text{q}$})]({\itshape Differentiability and $L^q$})\index{Differentiability and $L^p$~(D$\mathbf{L^p}$)} The function $\exp(-\phi)$ is weakly
differentiable on $\mathbb{R}^d$, $\phi$ is weakly differentiable on
$\mathbb{R}^d\setminus\{0\}$. The gradient $\nabla\phi$, considered
as a $dx$-a.e.~defined function on $\mathbb{R}^d$, satisfies
\begin{align*}
\nabla\phi \in L^1(\mathbb{R}^d,\exp(-\phi)\,dx)\cap L^q(\mathbb{R}^d,\exp(-\phi)dx),\quad 1\le q<\infty.
\end{align*}
\end{description}

\begin{remark}
Note that for many typical potentials in Statistical Physics we have $\phi\in C^\infty(\mathbb{R}^d\setminus\{0\})$. For such \lq\lq regular outside the origin\rq\rq potentials condition (D$\text{L}^\text{2}$) nevertheless does not exclude a singularity at the point $0\in\mathbb{R}^d$.
\end{remark} 

Let $(\Omega_n)_{n\in\mathbb{N}}$ be a partition of $\mathbb{R}^d$ in $\mathcal{B}_c(\mathbb{R}^d)$, i.e.~$\Omega_n\cap\Omega_m=\varnothing$ for $m\not=n$, $n,m\in\mathbb{N}$, and $\bigsqcup_{n=1}^\infty\Omega_n=\mathbb{R}^d$. We set
\begin{align*}
\Gamma_{\scriptscriptstyle{\text{fd}}}\big((\Omega_n)_{n\in\mathbb{N}}\big):=\bigcup_{M\in\mathbb{N}}\bigcap_{n\in\mathbb{N}}\big\{\gamma\in\Gamma\,\big|\,|\gamma_{\Omega_n}|\le M\sigma(\Omega_n)\big\}.
\end{align*}
$\Gamma_{\scriptscriptstyle{\text{fd}}}\big((\Omega_n)_{n\in\mathbb{N}}\big)$ is called the \emph{set of configurations of finite density}. Furthermore, we set $\Lambda_n:=B_n$, $n\in\mathbb{N}$, where $B_r$, $r\in(0,\infty)$, denotes the open ball with radius $r$ around the origin with respect to the euclidean norm on $\mathbb{R}^d$.

\begin{description}
\item[(LS)]({\itshape Local Summability})
Let $\Omega_1:=\Lambda_1$ and $\Omega_n:=\Lambda_n\setminus\Lambda_{n-1}$ for $n\ge 2$. Assume that $\sigma(\Omega_n)\ge \kappa\,(n+1)$, for some $\kappa\in(0,\infty)$ and all $n\in\mathbb{N}$. For all $\Lambda$ in $\mathcal{O}_c(\mathbb{R}^d)$ and all $\gamma\in \Gamma_{\scriptscriptstyle{\text{fd}}}\big((\Omega_n)_{n\in\mathbb{N}}\big)$ we have
\begin{align*}
\lim_{n\to\infty}\sum_{y\in\gamma_{\scriptscriptstyle{\Lambda_n\setminus\Lambda}}}\nabla\phi(\cdot-y)\mbox{ exists in $L^1_{\text{loc}}(\Lambda,\sigma)$}.
\end{align*}
\end{description}

\begin{remark}
$ $
\begin{enumerate}
\item[(i)]
Note that in the case $\varrho=\exp(-\phi)$ the assumption $\sigma(\Omega_n)\ge \kappa\,(n+1)$ for some $\kappa\in (0,\infty)$ and all $n\in\mathbb{N}$, is fulfilled, whenever the potential $\phi$ is bounded outside of a set $\Lambda\in\mathcal{B}_c(\mathbb{R}^d)$. 
\item[(ii)]
In the case $\sigma(\Omega_n)\ge \kappa\,(n+1)$ for some $\kappa\in (0,\infty)$ and all $n\in\mathbb{N}$, one has for $\mu\in\mathcal{G}_{\scriptscriptstyle{Rb}}^{\scriptscriptstyle{gc}}(\Phi_{\scriptscriptstyle{\phi}},z\varrho),~0<z<\infty$, that $\mu(\Gamma_{\scriptscriptstyle{\text{fd}}}\big((\Omega_n)_{n\in\mathbb{N}}\big))=1$, due to \cite[Theo.~5.4]{KK01}. In this case the grand canonical Gibbs measure $\mu$ is called \emph{tempered}. 
\item[(iii)]
Condition (LS) seems to be more complicated to check. In \cite[Exam.~4.1]{AKR98b}, however, it is shown that the assumption
\begin{align*}
\|\nabla\phi(x)\|_{\max}\le\frac{C}{\| x\|_{\max}^\alpha},\quad \|x\|_{\max} \ge R,
\end{align*}
for some $0<R,C<\infty,\alpha>d+1$, together with (D$\text{L}^\text{2}$) implies (LS). In our setting the proof is exactly the same as given there. 
\end{enumerate}
\end{remark}

A concrete  example fulfilling our assumptions is the \emph{Lennard--Jones potential}~(see Figure \ref{figpot} below).

\begin{figure}[h]
\begin{center}
\includegraphics[height=5cm]{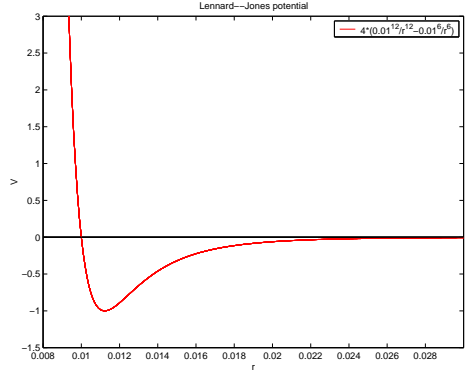}
\end{center}
\caption{\label{figpot}A typical example: The $(6,12)$-Lennard--Jones potential, i.e.~$\phi(x)=0.04\left(\frac{1}{|x|^{12}}-\frac{1}{|x|^{6}}\right),\quad x\in\mathbb{R}^d\setminus\{0\}$.\index{Lennard--Jones potential}}
\end{figure}

\subsection{Analysis and geometry on configuration spaces}\label{secgeometry}

On $\Gamma$ we define the set of smooth cylinder functions
\begin{align*}
\mathcal{F}C_b^\infty(C_0^\infty(\mathbb{R}^d),\Gamma):=\!\!\Big\{g(\langle
f_1,\gamma\rangle,\ldots,\langle
f_N,\gamma\rangle)\,\!\!\left|\!\,N\in\mathbb{N},~g\in
C_b^\infty(\mathbb{R}^N), f_1,\ldots,
f_N\in C_0^\infty(\mathbb{R}^d)\right.\!\!\Big\}.
\end{align*}
Clearly, $\mathcal{F}C_b^\infty( C_0^\infty(\mathbb{R}^d),\Gamma)$ is dense in
$L^2(\Gamma,\mathcal{B}(\Gamma),\pi_{{\sigma}})$.

Let $V_0(\mathbb{R}^d)$ denote the set of smooth
vector fields on $\mathbb{R}^d$. For $v\in V_0(\mathbb{R}^d)$ the \emph{directional derivatives on }$\Gamma$ for any $F=g_{\scriptscriptstyle{F}}(\langle f_1,\cdot\rangle,\ldots,\langle
f_N,\cdot\rangle)\in\mathcal{F}C_b^\infty( C_0^\infty(\mathbb{R}^d),\Gamma)$ are given by
\begin{multline}\label{equtbundle}
\nabla_v^\Gamma F(\gamma)=\sum_{i=1}^N\partial_i
g_{\scriptscriptstyle{F}}\left(\langle
f_1,\gamma\rangle,\ldots,\langle
f_N,\gamma\rangle\right)\langle\nabla_v f_i,\gamma\rangle\\
=\int_{\mathbb{R}^d}\left(\sum_{i=1}^N\partial_i
g_{\scriptscriptstyle{F}}\left(\langle
f_1,\gamma\rangle,\ldots,\langle
f_N,\gamma\rangle\right)\nabla f_i,v\right)_{\mathbb{R}^d}\,d\gamma
=\left(\nabla^\Gamma
F(\gamma),v\right)_{\scriptscriptstyle{L^2(\mathbb{R}^d\to\mathbb{R}^d,\gamma)}},
\end{multline}
with $\nabla_v f_i:=(\nabla
f_i,v)_{\scriptscriptstyle{\mathbb{R}^d}},~1\le i\le N$, $\gamma\in\Gamma$. Here $\nabla$ denotes the gradient on $\mathbb{R}^d$, $\partial_i$
the directional derivative with respect to~the $i$-th coordinate for $1\le i\le
N$ and $L^2(\mathbb{R}^d\to\mathbb{R}^d,\gamma)$ the space of
$\gamma$-square integrable vector fields on $\mathbb{R}^d$. 

Next we define a gradient\index{gradient} for functions in $\mathcal{F}C_b^\infty( C_0^\infty(\mathbb{R}^d),\Gamma)$ which corresponds to the directional
derivatives in (\ref{equtbundle}). So let
$F=g_{\scriptscriptstyle{F}}(\langle f_1,\cdot\rangle,\ldots,\langle
f_N,\cdot\rangle)\in\mathcal{F}C_b^\infty( C_0^\infty(\mathbb{R}^d),\Gamma),~v\in V_0(\mathbb{R}^d)$ and
$\gamma\in\Gamma$. The \emph{gradient} $\nabla^\Gamma$ of $F\in\mathcal{F}C_b^\infty( C_0^\infty(\mathbb{R}^d),\Gamma)$ at $\gamma\in\Gamma$ is defined by
\begin{align}\label{defgrad}
\Gamma\ni\gamma\mapsto\nabla^\Gamma F(\gamma):=\sum_{i=1}^N\partial_i
g_{\scriptscriptstyle{F}}\left(\langle
f_1,\gamma\rangle,\ldots,\langle f_N,\gamma\rangle\right)\nabla
f_i\in L^2(\mathbb{R}^d\to\mathbb{R}^d,\gamma).
\end{align}
Equation (\ref{equtbundle}) immediately leads to the appropriate
\emph{tangent space to}\index{tangent space} $\Gamma$, namely
\begin{align*}
T_{{\gamma}}\Gamma:=L^2(\mathbb{R}^d\to\mathbb{R}^d,\gamma),\quad\gamma\in\Gamma,
\end{align*}
equipped with the usual $L^2$-inner product. Note that $\nabla^\Gamma F$ is independent of the representation of $F$ in (\ref{defgrad}) and $\nabla^\Gamma F(\gamma)\in T_{{\gamma}}\Gamma$. The corresponding \emph{tangent bundle}\index{tangent bundle} is 
\begin{align*}
T\Gamma=\bigcup_{\gamma\in\Gamma}T_\gamma\Gamma.
\end{align*}

\emph{Finitely based vector fields on}\index{finitely based!vector field}
$(\Gamma,T\Gamma)$ can be defined as follows:
\begin{align*}
\Gamma\ni\gamma\mapsto\sum_{i=1}^N F_i(\gamma)v_i\in
V_0(\mathbb{R}^d),
\end{align*}
where $F_1,\ldots, F_N\in\mathcal{F}C_b^\infty( C_0^\infty(\mathbb{R}^d),\Gamma),~v_1,\ldots,v_N\in
V_0(\mathbb{R}^d)$. Let
$\mathcal{FV}C_b^\infty( C_0^\infty(\mathbb{R}^d),\Gamma)$
be the set of all such maps. Note that $\nabla^\Gamma F\in \mathcal{FV}C_b^\infty( C_0^\infty(\mathbb{R}^d),\Gamma)$ for all $F\in\mathcal{F}C_b^\infty( C_0^\infty(\mathbb{R}^d),\Gamma)$ and that each $v\in V_0(\mathbb{R}^d)$
is identified with the vector field $\gamma\mapsto v$ in $T\Gamma$
which is constant modulo taking $\gamma$-classes. For details we refer to \cite{AKR98a}, \cite{AKR98b}.

\section{An Integration by parts formula}\label{secintbp1}
In this section our aim is to prove an integration by parts formula for functions in\\ $\mathcal{F}C_b^\infty(C_0^\infty(\mathbb{R}^d),\Gamma)$ with respect to $\mu\in \mathcal{G}_{\scriptscriptstyle{\text{Rb}}}^{\scriptscriptstyle{\text{gc}}}(\Phi_\phi,z\exp(-\phi)),~0<z<\infty$, where $\phi$ fulfills (SS), (I) and (LR). Note that $\mathcal{G}_{\scriptscriptstyle{\text{Rb}}}^{\scriptscriptstyle{\text{gc}}}(\Phi_\phi,z\exp(-\phi)),~0<z<\infty$, is not empty see e.g.~\cite{CoKu09}. The following considerations are along the lines of \cite[Chap.~4.3]{AKR98b}.
We start with a technical lemma. 
\begin{lemma}\label{lemconv}
Let $\phi$ be a pair potential satisfying conditions (SS), (I), (LR) and (D$\text{L}^\text{2}$). For any vector field $v\in V_0(\mathbb{R}^d)$ we consider the function
\begin{align*}
\Gamma\ni\gamma\mapsto L_{v,k}^\phi(\gamma):=\left(\sum_{x\in\gamma_{\Lambda_k}}\!\!\!\!\big(\nabla\phi(x),v(x)\big)_{\mathbb{R}^d}\right)\!\!+\!\!\left(-\!\sum_{\{x,y\}\subset\gamma_{\Lambda_k}}\!\!\!\!\!\!\big(\nabla\phi(x-y),v(x)-v(y)\big)_{\mathbb{R}^d}\right)\in\mathbb{R}.
\end{align*}
Then for any $\mu\in\mathcal{G}^{\scriptscriptstyle{gc}}_{\scriptscriptstyle{\text{Rb}}}(\Phi_{\scriptscriptstyle{\phi}},z\exp(-\phi)),~0<z<\infty$, and all $v\in V_0(\mathbb{R}^d)$ we have that
\begin{align*}
L_v^{\phi,\mu}:=\lim_{k\to\infty}L_{v,k}^\phi
\end{align*}
exists in $L^2(\Gamma,\mu)$. Here $\Lambda_k,~k\in\mathbb{N}$, is defined as in Section \ref{defcangm}.
\end{lemma}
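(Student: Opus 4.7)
The plan is to exploit the compact support of $v$, say $\operatorname{supp} v \subset \Lambda_{k_0}$ for some $k_0 \in \mathbb{N}$, in order to isolate the part of $L_{v,k}^{\phi}$ that genuinely depends on $k$, and then to show that this part is Cauchy in $L^2(\Gamma,\mu)$ by reducing the second-moment estimate, via the Ruelle bound on the correlation functions, to integrals against $(e^{-\phi}\,dx)^{\otimes n}$.

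First I would decompose. For $k\ge k_0$ the single-point sum stabilizes to $S(\gamma):= \sum_{x\in\gamma_{\Lambda_{k_0}}}(\nabla\phi(x),v(x))_{\mathbb{R}^d}$, and the pair sum splits as $T_0(\gamma)+R_k(\gamma)$, where $T_0(\gamma):=\sum_{\{x,y\}\subset\gamma_{\Lambda_{k_0}}}(\nabla\phi(x-y),v(x)-v(y))_{\mathbb{R}^d}$ is already independent of $k$ and the genuinely $k$-dependent remainder is $R_k(\gamma):=\sum_{x\in\gamma_{\Lambda_{k_0}},\, y\in\gamma_{\Lambda_k\setminus\Lambda_{k_0}}}(\nabla\phi(x-y),v(x))_{\mathbb{R}^d}$. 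Thus $L_{v,k}^{\phi}=S-T_0-R_k$ for all $k\ge k_0$. That $S,T_0 \in L^2(\Gamma,\mu)$ follows directly from applying the Ruelle bound to the first and second correlation measures, combined with the compact support of $v$ and (D$\text{L}^\text{2}$). The real issue is the Cauchy property of $(R_k)_{k\ge k_0}$.

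For $m>k\ge k_0$, set $h_{k,m}(x,y):=1_{\Lambda_{k_0}}(x)\,1_{\Lambda_m\setminus\Lambda_k}(y)\,(\nabla\phi(x-y),v(x))_{\mathbb{R}^d}$. Since $\Lambda_{k_0}$ and $\Lambda_m\setminus\Lambda_k$ are disjoint, $R_m-R_k=\sum_{x,y\in\gamma,\,x\ne y}h_{k,m}(x,y)$ with $x\ne y$ automatic. Expanding $(R_m-R_k)^{2}$ into a fourfold sum and partitioning $(x_1,y_1,x_2,y_2)$ by its coincidence pattern yields four terms indexed by $2$, $3$, $3$, and $4$ distinct points (cross identifications of the form $x_i=y_j$ being ruled out by disjointness of the supports). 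By the definition of the correlation measure and the Ruelle bound $\rho_\mu^{(n)}\le C_R^{\,n}$ with respect to $\sigma=z\,e^{-\phi}\,dx$, every such term is dominated by $(C_R z)^{n}$ times an integral of either $h_{k,m}^{2}$ or of products of $h_{k,m}$-type factors against $(e^{-\phi}\,dx)^{\otimes n}$, for $n\in\{2,3,4\}$.

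The crux is showing each of these integrals vanishes as $k,m\to\infty$. After the substitution $u=x-y$, they reduce to expressions of the form $\int_{\Lambda_{k_0}}|v(x)|^{p}e^{-\phi(x)}\int_{\{u\,:\,x-u\in\Lambda_m\setminus\Lambda_k\}}|\nabla\phi(u)|^{p}e^{-\phi(x-u)}\,du\,dx$ for $p\in\{1,2\}$, the inner domain forcing $|u|\ge k-k_0$. The main obstacle is that (D$\text{L}^\text{2}$) controls $|\nabla\phi|^{p}$ only against the weight $e^{-\phi(u)}$, not $e^{-\phi(x-u)}$. I would close this gap in two steps: use stability (which follows from (SS)) to bound $e^{-\phi}\le M<\infty$ globally, which handles any bounded regime of $u$; and use (I), which forces $e^{-\phi(u)}\to 1$ away from a set of finite measure, to obtain, uniformly in $x\in\Lambda_{k_0}$ and for $|u|$ large enough, a comparison $e^{-\phi(x-u)}\le 2\,e^{-\phi(u)}$. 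This reduces the tail to $\int_{\{|u|\ge k-k_0\}}|\nabla\phi(u)|^{p}e^{-\phi(u)}\,du$, which tends to $0$ by dominated convergence and (D$\text{L}^\text{2}$). Combining the four cases gives $\|R_m-R_k\|_{L^2(\mu)}\to 0$, and the desired limit is $L_v^{\phi,\mu}:=S-T_0-\lim_{k\to\infty}R_k\in L^2(\Gamma,\mu)$.
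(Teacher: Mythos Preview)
Your overall strategy—decomposing $L_{v,k}^{\phi}$ using the compact support of $v$, then controlling second moments of the Cauchy increments via the correlation functions—is exactly the paper's approach. The gap is in the weight-conversion step. After the Ruelle bound and the substitution $u=x-y$, you face integrals of $|\nabla\phi(u)|^{p}$ against the weight $e^{-\phi(x-u)}$, whereas (D$\text{L}^2$) controls $|\nabla\phi|^{p}$ only against $e^{-\phi(u)}$. Your proposed comparison $e^{-\phi(x-u)}\le 2\,e^{-\phi(u)}$ for all large $|u|$ does not follow from (I): integrability of $|e^{-\phi}-1|$ only yields that $\{|e^{-\phi}-1|>\varepsilon\}$ has finite Lebesgue measure, not that $e^{-\phi(u)}\to 1$ as $|u|\to\infty$. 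On the complement of this finite-measure set the comparison does hold (with constant $M/(1-\varepsilon)$), but on the exceptional set you are left with $\int_{A\cap\{|u|\ge R\}}|\nabla\phi(u)|^{p}\,e^{-\phi(x-u)}\,du\le M\int_{A\cap\{|u|\ge R\}}|\nabla\phi(u)|^{p}\,du$, and nothing in the hypotheses forces $|\nabla\phi|^{p}\in L^{1}(A,du)$ or makes this tail vanish.

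The paper avoids this mismatch entirely by invoking not the plain Ruelle bound but the Mayer--Montroll inequality (a consequence of (RB) and (I), cf.\ \cite{KK99a}): $\rho_\mu^{(n)}(x_1,\ldots,x_n)\le R_n\exp\bigl(-\sum_{i<j}\phi(x_i-x_j)\bigr)$. This produces, in addition to the factors $e^{-\phi(x_i)}$ coming from $d\sigma$, the crucial extra factor $e^{-\phi(x-y)}$, which pairs directly with $|\nabla\phi(x-y)|^{p}$ and lets (D$\text{L}^2$) do the work without any weight shuffling. Once you replace the Ruelle bound by Mayer--Montroll in your moment estimates, the rest of your argument goes through (and then coincides with the paper's, which at that point defers to \cite[Lem.~4.1]{AKR98b}).
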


\begin{proof}
Let us at first consider the second summand. We set
\begin{align*}
\varphi^{\scriptscriptstyle{(2)}}_k(x,y):=\left|\big(1_{\Lambda_k}(x)1_{\Lambda_k}(y)\nabla\phi(x-y),v(x)-v(y)\big)_{\mathbb{R}^d}\right|
\end{align*}
and define
\begin{align*}
V^{\scriptscriptstyle{(2)}}_k(\gamma):=\left\{
\begin{array}{cc}
\varphi_k^{\scriptscriptstyle{(2)}}(x,y),\quad &\mbox{if }\gamma=\{x,y\}\in\Gamma^{\scriptscriptstyle(2)}_{0}\\
0,&\mbox{otherwise}	
\end{array}\right..
\end{align*}
Then by using (\ref{propconv}) and (\ref{eq302}), 
\begin{multline*}
\int_{\Gamma}\Big|-\!\!\!\!\sum_{\{x,y\}\subset\gamma_{\Lambda_k}}\!\!\!\!\!\!\big(\nabla\phi(x-y),v(x)-v(y)\big)_{\mathbb{R}^d}\Big|^2\,d\mu(\gamma)\\
\le \int_{\Gamma}\Big(\sum_{\{x,y\}\subset\gamma}\big|\big(1_{\Lambda_k}(x)1_{\Lambda_k}(y)\nabla\phi(x-y),v(x)-v(y)\big)_{\mathbb{R}^d}\big|\Big)^2\,d\mu(\gamma)\\
=\int_\Gamma \Big(\big(KV^{\scriptscriptstyle{(2)}}_k\big)(\gamma)\Big)^2\,d\mu(\gamma)
=\int_\Gamma \big(K(V^{\scriptscriptstyle{(2)}}_k\star V^{\scriptscriptstyle{(2)}}_k)\big)(\gamma)\,d\mu(\gamma)\\
=\int_{\Gamma_0}(V^{\scriptscriptstyle{(2)}}_k\star V^{\scriptscriptstyle{(2)}}_k)(\eta)\,d\rho_\mu(\eta)
=\int_{\Gamma_0}\sum_{(\xi_1,\xi_2,\xi_3)\in\mathcal{P}^3_{\varnothing}(\eta)}V^{\scriptscriptstyle{(2)}}_k(\xi_1\cup\xi_2)\,V^{\scriptscriptstyle{(2)}}_{k}(\xi_2\cup \xi_3)\,d\rho_\mu(\eta)\\
=\frac{1}{4!}\int_{\mathbb{R}^{4d}}\varphi^{\scriptscriptstyle{(2)}}_k(x_1,x_2)\varphi^{\scriptscriptstyle{(2)}}_k(x_3,x_4)\,\rho_\mu^{\scriptscriptstyle{(4)}}(x_1,x_2,x_3,x_4)\,d\sigma^{\otimes 4}\\
+\frac{1}{3!}\int_{\mathbb{R}^{3d}}\varphi^{\scriptscriptstyle{(2)}}_k(x_1,x_2)\varphi^{\scriptscriptstyle{(2)}}_k(x_2,x_3)\,\rho_\mu^{\scriptscriptstyle{(3)}}(x_1,x_2,x_3)\,d\sigma^{\otimes 3}\\
+\frac{1}{2!}\int_{\mathbb{R}^{2d}}\varphi^{\scriptscriptstyle{(2)}}_k(x_1,x_2)^2\,\rho_\mu^{\scriptscriptstyle{(2)}}(x_1,x_2)\,d\sigma^{\otimes 2}\\
\le C^{\scriptscriptstyle{(1)}}\int_{\mathbb{R}^{4d}}\varphi^{\scriptscriptstyle{(2)}}_k(x_1,x_2)\varphi^{\scriptscriptstyle{(2)}}_k(x_3,x_4)\,\rho_\mu^{\scriptscriptstyle{(4)}}(x_1,x_2,x_3,x_4)\,dx_1\ldots dx_4\\
+C^{\scriptscriptstyle{(2)}}\int_{\mathbb{R}^{3d}}\varphi^{\scriptscriptstyle{(2)}}_k(x_1,x_2)\varphi^{\scriptscriptstyle{(2)}}_k(x_2,x_3)\,\rho_\mu^{\scriptscriptstyle{(3)}}(x_1,x_2,x_3)\,dx_1\ldots dx_3\\
+C^{\scriptscriptstyle{(3)}}\int_{\mathbb{R}^{2d}}\varphi^{\scriptscriptstyle{(2)}}_k(x_1,x_2)^2\,\rho_\mu^{\scriptscriptstyle{(2)}}(x_1,x_2)\,dx_1\,dx_2,
\end{multline*}
where in the last step we have used the boundedness of the density function $\varrho=\exp(-\phi)$ and $0<C^{\scriptscriptstyle{(m)}}<\infty,~m\in\{1,2,3\}$.
The Mayer-Montroll equation\index{Mayer-Montroll equation} for correlation measures, see e.g.~\cite{KK99a}, together with (RB) and (I), gives
\begin{align*}
|\rho_\mu(x_1,\ldots x_p)|\le R_p\exp\left(-\sum_{i<j}\phi(x_j-x_i)\right),\quad 0<R_p<\infty,
\end{align*}
for all $p\in\mathbb{N},~x_1,\ldots,x_p\in\mathbb{R}^d$. From this point on we can proceed as in the proof of \cite[Lem.~4.1]{AKR98b}.\\
For the first summand we set
\begin{align*}
\varphi^{\scriptscriptstyle{(1)}}_k(x):=\left|\big(1_{\Lambda_k}(x)\nabla\phi(x),v(x)\big)_{\mathbb{R}^d}\right|
\end{align*}
and define correspondingly
\begin{align*}
V^{\scriptscriptstyle{(1)}}_k(\gamma):=\left\{
\begin{array}{cc}
\varphi_k^{\scriptscriptstyle{(1)}}(x),\quad &\mbox{if }\gamma=\{x\}\in\Gamma^{\scriptscriptstyle(1)}_{0}\\
0,&\mbox{otherwise}	
\end{array}\right..
\end{align*}
Thus we obtain by using (\ref{propconv}) and (\ref{eq302}),
\begin{multline*}
\int_{\Gamma}\Big|-\sum_{x\in\gamma_{\Lambda_k}}\big(\nabla\phi(x),v(x)\big)_{\mathbb{R}^d}\Big|^2\,d\mu(\gamma)\\\le\int_{\Gamma_0}\sum_{(\xi_1,\xi_2,\xi_3)\in\mathcal{P}^3_{\varnothing}(\eta)}V^{\scriptscriptstyle{(1)}}_k(\xi_1\cup \xi_2)V^{\scriptscriptstyle{(1)}}_{k}(\xi_2\cup \xi_3)\,d\rho_\mu(\eta)\\
=\frac{z^2}{2}\int_{\mathbb{R}^{2d}}\varphi^{\scriptscriptstyle{(1)}}_k(x_1)\varphi^{\scriptscriptstyle{(1)}}_k(x_2)\,\rho_\mu^{\scriptscriptstyle{(2)}}(x_1,x_2)\exp(-\phi(x_1))\exp(-\phi(x_2))\,dx_1\,dx_2\\
+z\int_{\mathbb{R}^d}\varphi^{\scriptscriptstyle{(1)}}_k(x_1)^2\,\rho_\mu^{\scriptscriptstyle{(1)}}(x_1)\exp(-\phi(x_1))\,dx_1\\
\le C^{\scriptscriptstyle{(4)}}\left(\int_{\Lambda_k^2}\|\nabla\phi(x)\|_{\max}\exp(-\phi(x))\,\| v(x)\|_{\max}\,\,dx\right)^2\\
+C^{\scriptscriptstyle{(5)}}\int_{\Lambda_k}\|\nabla\phi(x)\|_{\max}^2\exp(-\phi(x))\,\| v(x)\|_{\max}^2\,dx\\
\le C^{\scriptscriptstyle{(6)}}(v)\|\nabla\phi\|^2_{L^1(\Lambda_k,\exp(-\phi)dx)}+C^{\scriptscriptstyle{(7)}}(v)\|\nabla\phi\|^2_{L^2(\Lambda_k,\exp(-\phi)dx)}<\infty
\end{multline*}
with $C^{\scriptscriptstyle{(4)}}, C^{\scriptscriptstyle{(5)}}, C^{\scriptscriptstyle{(6)}}(v),C^{\scriptscriptstyle{(7)}}(v)\in(0,\infty),$ due to condition (D$\text{L}^\text{2}$) and $v\in V_0(\mathbb{R}^d)$. Finally since $\Lambda_k\uparrow\mathbb{R}^d$ as $k\to\infty$, it easily follows that $\left(L^{\phi}_{v,k}\right)_{k\in\mathbb{N}}$ is a Cauchy sequence in $L^2(\Gamma,\mu)$ and since this space is complete, the limit exists.
\end{proof}

\begin{definition}\label{BL2}
Let $\phi$ be a pair potential satisfying conditions (SS), (I), (LR) and (D$\text{L}^\text{2}$). For $v\in V_0(\mathbb{R}^d)$ and $\mu\in\mathcal{G}^{\scriptscriptstyle{gc}}_{\scriptscriptstyle{\text{Rb}}}(\Phi_{\scriptscriptstyle{\phi}},z\exp(-\phi)),~0<z<\infty$, we define
\begin{align*}
B_v^{\phi,\mu}:=L^{\phi,\mu}_v+\big\langle\text{div~}v,\cdot\big\rangle\in L^2(\Gamma,\mu).
\end{align*}
\end{definition}

Note that $\big\langle\text{div~}v,\cdot\big\rangle\in L^2(\Gamma,\mu)$, since $\mu\in\mathcal{G}^{\scriptscriptstyle{gc}}_{\scriptscriptstyle{\text{Rb}}}(\Phi_{\scriptscriptstyle{\phi}},z\exp(-\phi)),~0<z<\infty$.
Now we are able to formulate an important result which is essential for our applications below.

\begin{theorem}\label{thmintbyparts}
Suppose that the pair potential $\phi$ satisfies (SS), (I), (LR), (D$\text{L}^\text{2}$) and (LS). 
Let $\mu\in\mathcal{G}^{\scriptscriptstyle{gc}}_{\scriptscriptstyle{\text{Rb}}}(\Phi_{\scriptscriptstyle{\phi}},z\exp(-\phi)),~0<z<\infty$. Then for $v\in V_0(\mathbb{R}^d)$ and $F,G\in\mathcal{F}C_b^\infty(C_0^\infty(\mathbb{R}^d),\Gamma)$ the following integration by parts formula holds:
\begin{align*}
\int_\Gamma\nabla^\Gamma_vF\,G\,d\mu(\gamma)=-\int_\Gamma F\,\nabla_v^\Gamma G\,d\mu(\gamma)-\int_\Gamma F\,G\,B_v^{\phi,\mu}\,d\mu(\gamma).
\end{align*}
\end{theorem}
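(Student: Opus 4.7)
The plan is to follow the strategy of \cite[Chap.~4.3]{AKR98b}, adapted to the non-trivial intensity density $\varrho=\exp(-\phi)$. The proof proceeds in three stages: reduce the integration by parts to a finite-volume computation using the DLR equation, carry out the IBP in finite volume using Poisson calculus together with the Gibbsian density, and then pass to the limit $k\to\infty$ using Lemma \ref{lemconv} and condition (LS).

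First I would fix $k\in\mathbb{N}$ large enough that the supports of $v$ and of all scalar test functions appearing in the cylinder representations of $F$ and $G$ lie in $\Lambda_k$. The DLR equation $\mu=\mu\,\Pi^{\sigma}_{\Lambda_k}$ then rewrites $\int_\Gamma \nabla^\Gamma_v F\cdot G\,d\mu$ as the $\mu$-average of a Poisson integral: on the inner slice, $\Pi^{\sigma}_{\Lambda_k}(\gamma,\cdot)$ is absolutely continuous with respect to $\pi_{\sigma|_{\Lambda_k}}$ with density $(Z^{\sigma}_{\Lambda_k}(\gamma))^{-1}\exp(-E_{\Lambda_k}(\gamma_{\Lambda_k^c}\cup\,\cdot))$. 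The Poisson IBP for $\pi_{\sigma|_{\Lambda_k}}$ with intensity $z e^{-\phi}dx$ contributes the drift $\langle\mathrm{div}\,v+v\cdot\nabla\log\varrho,\gamma_{\Lambda_k}\rangle$, while differentiating the Gibbs factor $\exp(-E_{\Lambda_k})$ along the flow of $v$ yields an additional drift $-\nabla_v E_{\Lambda_k}$. Since $v$ is compactly supported strictly inside $\Lambda_k$, this last expression splits naturally into the interior pair interaction $\sum_{\{x,y\}\subset\gamma_{\Lambda_k}}(\nabla\phi(x-y),v(x)-v(y))_{\mathbb{R}^d}$ and a boundary coupling
\[
\mathrm{Bd}_k(\gamma) := \sum_{x\in\gamma_{\Lambda_k},\,y\in\gamma_{\Lambda_k^c}}(\nabla\phi(x-y),v(x))_{\mathbb{R}^d}.
\]
After matching signs according to the definition of $L^\phi_{v,k}$ from Lemma \ref{lemconv}, one obtains a finite-volume identity of the form
\[
\int_\Gamma \nabla^\Gamma_v F\cdot G\,d\mu = -\int_\Gamma F\,\nabla^\Gamma_v G\,d\mu - \int_\Gamma F\,G\,\bigl(L^\phi_{v,k}(\gamma)+\langle\mathrm{div}\,v,\gamma\rangle+\mathrm{Bd}_k(\gamma)\bigr)\,d\mu.
\]

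The remaining task is to send $k\to\infty$. The left-hand side is $k$-independent, and on the right $L^\phi_{v,k}\to L^{\phi,\mu}_v$ in $L^2(\Gamma,\mu)$ by Lemma \ref{lemconv}, while $\langle\mathrm{div}\,v,\cdot\rangle$ is $k$-independent once $\mathrm{supp}(v)\subset\Lambda_k$. The delicate term is $\mathrm{Bd}_k$: for infinite-range $\phi$ no pointwise absolute convergence is available, so one must instead view the combination of the interior pair sum inside $L^\phi_{v,k}$ with $\mathrm{Bd}_k$ as reconstructing the full two-body sum $\sum_{\{x,y\}\subset\gamma}(\nabla\phi(x-y),v(x)-v(y))$ appearing in $L^{\phi,\mu}_v$ in the limit. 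Condition (LS) is precisely what guarantees that $\sum_{y\in\gamma\setminus\Lambda_n}\nabla\phi(\cdot-y)$ exists as an $L^1_{\mathrm{loc}}(\Lambda,\sigma)$-limit on the support of $v$, while the Ruelle bound together with (I) supplies the $L^2(\mu)$-domination needed to exchange limit and integration against $F\,G\,d\mu$---the requisite estimates are of exactly the type carried out in Lemma \ref{lemconv} via the Mayer--Montroll equation. Once this limit is justified, the drift on the right-hand side is identified with $B_v^{\phi,\mu}=L_v^{\phi,\mu}+\langle\mathrm{div}\,v,\cdot\rangle$, giving the claim. I expect the control of $\mathrm{Bd}_k$ in the $L^2(\mu)$-limit to be the chief obstacle; the other steps are standard finite-volume Poisson IBP manipulations.
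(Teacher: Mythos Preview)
Your proposal follows the same overall strategy as the paper---reduce via DLR to a finite-volume IBP, then pass to the limit using (LS) and Lemma~\ref{lemconv}---and would go through. Two technical differences are worth flagging.

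First, the paper does \emph{not} use the grand canonical specification $\Pi^\sigma_{\Lambda}$ and Poisson IBP. Instead it invokes the inclusion $\mathcal{G}^{gc}_{Rb}\subset\mathcal{G}^{c}$ from (\ref{inclu}) and works with the \emph{canonical} specification $\hat\Pi^\sigma_{\Lambda}$. This conditions on $\gamma(\Lambda)=n$, so the inner integral becomes a genuinely finite-dimensional Lebesgue integral over $\Lambda^n$, and the IBP is elementary calculus. Your Poisson-IBP route is correct but imports an infinite-dimensional result where the paper gets away with a finite-dimensional one; it also forces you to differentiate the density $\exp(-E_{\Lambda_k})$ against Poisson, which already contains the infinite sum $W(\gamma_{\Lambda_k}\,|\,\gamma_{\Lambda_k^c})$.

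Second, and relatedly, your ``finite-volume identity'' at step $k$ is not actually finite: the term $\mathrm{Bd}_k(\gamma)=\sum_{x\in\gamma_{\Lambda_k},\,y\in\gamma_{\Lambda_k^c}}(\nabla\phi(x-y),v(x))$ is an infinite series whose very definition requires (LS). The paper avoids this by keeping $\Lambda$ fixed (once it contains all supports) and introducing a \emph{second} truncation of the \emph{outer} configuration, replacing $\gamma_{\Lambda^c}$ by $\gamma_{\Lambda_m\setminus\Lambda}$ via \cite[Coro.~5.8]{KK01}. After this, every sum in the IBP is finite; the only limit is $m\to\infty$, and (LS) is exactly the $L^1_{\mathrm{loc}}$-convergence of $\sum_{y\in\gamma_{\Lambda_m\setminus\Lambda}}\nabla\phi(\cdot-y)$ needed to pass to the limit inside the $\Lambda^n$-integral. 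So the relevant limit is not ``$k\to\infty$'' enlarging the DLR box (the identity is already $k$-independent once $\mathrm{supp}\,v\subset\Lambda_k$), but rather this outer $m\to\infty$ that reconstructs the full boundary interaction. You correctly identify (LS) as the crucial input and the boundary control as the chief obstacle; the paper's device of the intermediate $m$-approximation is the clean way to make that step rigorous. Finally, the paper proves the identity first for a single $F$ (taking $G\equiv 1$) and then obtains the general case by the product rule for $\nabla^\Gamma_v$, which slightly streamlines the bookkeeping.
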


\begin{proof}
Let $F=g_F(\langle f_1,\cdot\rangle,\ldots,\langle f_N,\cdot\rangle)\in\mathcal{F}C_b^\infty(C_0^\infty(\mathbb{R}^d),\Gamma)$, $v\in V_0(\mathbb{R}^d)$ and choose $\Lambda\in\mathcal{O}_c(\mathbb{R}^d)$ such that $\bigcup_{i=1}^N\text{supp~}f_i\cup\text{supp~}v\subset\Lambda$.  
Using (\ref{inclu}) we have $\mathcal{G}^{\scriptscriptstyle{gc}}_{\scriptscriptstyle{\text{Rb}}}(\Phi_{\scriptscriptstyle{\phi}},z\exp(-\phi))\subset\mathcal{G}^{\scriptscriptstyle{c}}(\Phi_{\scriptscriptstyle{\phi}},\exp(-\phi))$, $0<z<\infty$. Hence
\begin{multline}\label{numerator}
\int_\Gamma\nabla^\Gamma_v F\,d\mu(\gamma)=\int_\Gamma\hat{\Pi}_{\scriptscriptstyle{\Lambda}}^{\scriptscriptstyle{\sigma,\phi}}(\nabla_v^\Gamma F)\,d\mu(\gamma)\\
=\!\!\!\!\int_\Gamma\!\!\!\frac{\left(\int_{\Lambda^{\gamma(\Lambda)}}\!\!\!\nabla^\Gamma_vF(\gamma_{\Lambda^c}\!\!\cup\!\{x_1,\!\ldots\!, x_{\gamma(\Lambda)}\})\!\exp\!\!\left(\!-\!E_{\scriptscriptstyle{\Lambda}}(\gamma_{{\Lambda}^c}\!\cup\!\{x_1,\!\ldots\!,x_{\gamma(\Lambda)}\})\!\right)\!\!\varrho^{\otimes \gamma(\Lambda)}\!\!dx_1\!\ldots\! dx_{\gamma(\Lambda)}\right)}{\int_{\Lambda^{\gamma(\Lambda)}}\exp\left(\!-\!E_{\scriptscriptstyle{\Lambda}}(\gamma_{{\Lambda}^c}\!\cup\!\{x_1,\!\ldots\!,x_{\gamma(\Lambda)}\})\right)\,\varrho^{\otimes \gamma(\Lambda)}\,dx_1\!\ldots\! dx_{\gamma(\Lambda)}}\!d\mu(\gamma),
\end{multline}
where $\varrho^{\otimes\gamma(\Lambda)}:=\varrho(x_1)\cdot\ldots\cdot\varrho(x_{\gamma(\Lambda)})$, see Section \ref{defcangm}. 
Fix $n\in\mathbb{N}$ and $\gamma\in\{\eta\in\Gamma\,|\,\eta(\Lambda)=n\}\cap \Gamma_{\text{fd}}\big((\Omega_m)_{m\in\mathbb{N}}\big)$, where $(\Omega_m)_{m\in\mathbb{N}}$ corresponds to $(\Lambda_m)_{m\in\mathbb{N}}$ as in (LS). 
Using \cite[Coro.~5.8]{KK01} the numerator of the integrand in (\ref{numerator}) for such $\gamma$ equals to
\begin{multline*}
\lim_{m\to\infty}\!\int_{\Lambda^n}\!\!\!\!\nabla_v^\Gamma F\big(\gamma_{\Lambda^c}\cup\{x_1,\!\ldots\!,x_n\}\big)\exp\!\Big(-\!E_{\scriptscriptstyle{\Lambda}}\big(\gamma_{{\Lambda_m}\setminus\Lambda}\!\cup\!\{x_1,\!\ldots\!,x_n\}\big)\Big)\!\varrho^{\otimes n}\!(x_1,\ldots,x_n)dx_1\!\ldots\! dx_n\\
=\lim_{m\to\infty}\int_{\Lambda^n}\sum_{i=1}^N\Bigg(\partial_i g_F\!\bigg(\sum_{j=1}^n f_1(x_j),\ldots,\sum_{j=1}^n f_N(x_j)\bigg)\sum_{k=1}^n\nabla_v^{\mathbb{R}^d}f_i(x_k)\Bigg)\\
\times\,\exp\Big(\!-\!E_{\scriptscriptstyle{\Lambda}}\big(\gamma_{{\Lambda_m}\setminus\Lambda}\!\cup\!\{x_1,\ldots,x_n\}\big)\Big)\,\varrho^{\otimes n}(x_1,\ldots,x_n)\,dx_1\ldots dx_n\\
=\lim_{m\to\infty}\sum_{k=1}^n\int_{\Lambda^n}\Bigg(\nabla_{x_k}g_F\bigg(\sum_{j=1}^n f_1(x_j),\ldots,\sum_{j=1}^n f_N(x_j)\bigg),v(x_k)\Bigg)_{\mathbb{R}^d}\\
\times\,\exp\left(\!-\!E_{\scriptscriptstyle{\Lambda}}(\gamma_{{\Lambda_m}\setminus\Lambda}\!\cup\!\{x_1,\ldots,x_n\})\right)\varrho^{\otimes n}(x_1,\ldots,x_n)\,dx_1\ldots dx_n.
\end{multline*}
Here $\nabla_{x_k},~1\le k\le n$, denotes the gradient with respect to the $x_k$-th variable $(x_k\in\Lambda)$. Integrating by parts with respect to $x_k,~1\le k\le n$, we obtain
\begin{multline}\label{calcintbyp}
-\lim_{m\to\infty}\sum_{k=1}^n\int_{\Lambda^n}g_F\bigg(\sum_{j=1}^n f_1(x_j),\ldots, \sum_{j=1}^n f_N(x_j)\bigg)\\
\times\Bigg(\bigg(\!\nabla_{x_k}\!\Big(\sum_{1\le i<j}^n\!\!\phi(x_i-x_j)+\sum_{i=1}^n\sum_{y\in\gamma_{{\Lambda_m}\setminus\Lambda}}\!\!\phi(x_i-y)\Big),v(x_k)\!\!\bigg)_{\mathbb{R}^d}\\
\times\exp\Big(\!-\!E_{\scriptscriptstyle{\Lambda}}\big(\gamma_{{\Lambda_m}\setminus\Lambda}\!\cup\!\{x_1,\ldots x_n\}\big)\Big)\varrho^{\otimes n}(x_1,\ldots,x_n)\\
+\Big(\nabla_{x_k}\varrho^{\otimes n}(x_1,\ldots x_n),v(x_k)\Big)_{\mathbb{R}^d}\exp\Big(\!-\!E_{\scriptscriptstyle{\Lambda}}\big(\gamma_{{\Lambda_m}\setminus\Lambda}\!\cup\!\{x_1,\ldots, x_n\}\big)\Big)\\
+\exp\Big(\!-\!E_{\scriptscriptstyle{\Lambda}}\big(\gamma_{{\Lambda_m}\setminus\Lambda}\!\cup\!\{x_1,\ldots,x_n\}\big)\Big)\,\varrho^{\otimes n}(x_1,\ldots,x_n)\,\text{div~}v(x_k)\Bigg)\,dx_1\ldots dx_n\\
=-\lim_{m\to\infty}\int_{\Lambda^n}g_F\bigg(\sum_{j=1}^n f_1(x_j),\ldots, \sum_{j=1}^n f_N(x_j)\bigg)\\
\times\Bigg(\Bigg(\sum_{1\le i<j}^n\Big(\nabla\phi(x_i-x_j),v(x_i)-v(x_j)\Big)_{\mathbb{R}^d}+\sum_{i=1}^n\sum_{y\in\gamma_{{\Lambda_m}\setminus\Lambda}}\!\!\Big(\nabla\phi(x_i-y),v(x_i)\Big)_{\mathbb{R}^d}\Bigg)\\
\times\exp\Big(\!-\!E_{\scriptscriptstyle{\Lambda}}\big(\gamma_{{\Lambda_m}\setminus\Lambda}\!\cup\!\{x_1,\ldots,x_n\}\big)\Big)\varrho^{\otimes n}(x_1,\ldots,x_n)\\
+\Big(\sum_{i=1}^n\Big(-\nabla\phi(x_i),v(x_i)\Big)_{\mathbb{R}^d}\varrho^{\otimes n}(x_1,\ldots, x_n)\Big)\exp\Big(\!-\!E_{\scriptscriptstyle{\Lambda}}\big(\gamma_{{\Lambda_m}\setminus\Lambda}\!\cup\!\{x_1,\ldots,x_n\}\big)\Big)\\
+\exp\Big(\!-\!E_{\scriptscriptstyle{\Lambda}}\big(\gamma_{{\Lambda_m}\setminus\Lambda}\!\cup\!\{x_1,\ldots,x_n\}\big)\Big)\,\varrho^{\otimes n}(x_1,\ldots, x_n)\,\sum_{i=1}^n\text{div~}v(x_i)\Bigg)\,dx_1\ldots dx_n\\
=-\lim_{m\to\infty}\int_{\Lambda^n}\Bigg(F(\{x_1,\ldots,x_n\})\Bigg(\bigg(\sum_{1\le i<j}^n\Big(\nabla\phi(x_i-x_j),v(x_i)-v(x_j)\Big)_{\mathbb{R}^d}\\
+\!\!\sum_{i=1}^n\sum_{y\in\gamma_{{\Lambda_m}\setminus\Lambda}}\!\!\Big(\nabla\phi(x_i-y),v(x_i)\Big)_{\mathbb{R}^d}\bigg)-\bigg(\sum_{i=1}^n\Big(\nabla\phi(x_i),v(x_i)\Big)_{\mathbb{R}^d}\bigg)+\sum_{i=1}^n\text{div~}v(x_i)\Bigg)\\
\times\exp\Big(\!-\!E_{\scriptscriptstyle{\Lambda}}\big(\gamma_{{\Lambda_m}\setminus\Lambda}\!\cup\!\{x_1,\ldots,x_n\}\big)\Big)\,\varrho^{\otimes n}(x_1,\ldots,x_n)\,dx_1\ldots dx_n\\
=-\int_{\Lambda^n}\Bigg(F(\{x_1,\ldots,x_n\})\bigg(\sum_{1\le i<j}^n\Big(\nabla\phi(x_i-x_j),v(x_i)-v(x_j)\Big)_{\mathbb{R}^d}\\
+\!\!\sum_{i=1}^n\sum_{y\in\gamma_{{\Lambda}^c}}\!\!\Big(\nabla\phi(x_i-y),v(x_i)\Big)_{\mathbb{R}^d}-\sum_{i=1}^n\Big(\nabla\phi(x_i),v(x_i)\Big)_{\mathbb{R}^d}
+\sum_{i=1}^n\text{div~}v(x_i)\bigg)\Bigg)\\
\times\exp\Big(\!-\!E_{\scriptscriptstyle{\Lambda}}\big(\gamma_{{\Lambda}^c}\!\cup\!\{x_1,\ldots,x_n\}\big)\Big)\,\varrho^{\otimes n}(x_1,\ldots,x_n)\,dx_1\ldots dx_n.
\end{multline}
In the last step we have used (LS). Thus by (\ref{calcintbyp}), Lemma \ref{lemconv} and Definition \ref{BL2} we obtain that (\ref{numerator}) equals
\begin{align*}
\int_\Gamma\!\!\!\frac{\int_{\Lambda^n}\!\!FB_v^{\phi,\mu}(\gamma_{\Lambda^c}\!\!\cup\!\{x_1,\!\ldots\!, x_{n}\})\!\exp\!\!\left(\!-\!E_{\scriptscriptstyle{\Lambda}}(\gamma_{{\Lambda}^c}\!\cup\!\{x_1,\!\ldots\!,x_{n}\})\right)\!\!\varrho^{\otimes n} dx_1\!\ldots\! dx_{n}}{\int_{\Lambda^{n}}\exp\left(\!-\!E_{\scriptscriptstyle{\Lambda}}(\gamma_{{\Lambda}^c}\!\cup\!\{x_1,\!\ldots\!,x_{n}\})\right)\,\varrho^{\otimes n}\,dx_1\!\ldots\! dx_{n}}d\mu(\gamma).
\end{align*}
Therefore,
\begin{align}\label{equintbyparts}
\int_\Gamma\nabla_v^\Gamma F\,d\mu(\gamma)=-\int_\Gamma\hat{\Pi}_{\scriptscriptstyle{\Lambda}}^{\scriptscriptstyle{\sigma,\phi}}(FB_v^{\phi,\mu})\,d\mu(\gamma)=-\int_\Gamma FB_v^{\phi,\mu}\,d\mu(\gamma).
\end{align}
By the product rule for $\nabla_v^\Gamma$ on $\Gamma$ we obtain
\begin{align*}
\int_\Gamma\nabla_v^\Gamma(FG)\,d\mu(\gamma)=\int_\Gamma\nabla_v^\Gamma F\,G\,d\mu(\gamma)+\int_\Gamma F\,\nabla_v^\Gamma G\,d\mu(\gamma)
\end{align*}
and by (\ref{equintbyparts})
\begin{align*}
-\int_\Gamma FGB_v^{\phi,\mu}\,d\mu(\gamma)=\int_\Gamma\nabla_v^\Gamma F\,G\,d\mu(\gamma)+\int_\Gamma F\,\nabla_v^\Gamma G\,d\mu(\gamma).
\end{align*}
\end{proof}

For $V:=\sum_{i=1}^N F_iv_i\in\mathcal{FV}C_b^\infty(C^\infty_0(\mathbb{R}^d),\Gamma)$ we define
\begin{align}\label{equdiv}
\text{div}^{\scriptscriptstyle{\Gamma,{\mu}}} V:=\sum_{i=1}^N\left(\nabla^\Gamma_{v_i}
F_i+B^{\phi,\mu}_{{v_i}}F_i\right)
\end{align}
and for $F\in\mathcal{F}C_b^\infty(C^\infty_0(\mathbb{R}^d),\Gamma)$
\begin{align}\label{equL}
L^{\scriptscriptstyle{\Gamma,\mu}} F:=\text{div}^{\scriptscriptstyle{\Gamma,{\mu}}} \nabla^\Gamma F.
\end{align}

Note that $\nabla^\Gamma F\in\mathcal{FV}C_b^\infty(C^\infty_0(\mathbb{R}^d),\Gamma)$, since 
\begin{align*}
(\nabla^\Gamma F)(\gamma,x)=\sum_{i=1}^N\partial_i g_F(\langle f_1,\gamma\rangle,\ldots\langle f_N,\gamma\rangle)\nabla f_i(x),\quad\gamma\in\Gamma,\quad x\in\mathbb{R}^d.
\end{align*}

\begin{corollary}\label{corintbyparts}
Under the assumptions of Theorem \ref{thmintbyparts} we have for all\\ $F\in\mathcal{F}C_b^\infty(C^\infty_0(\mathbb{R}^d),\Gamma),~V\in\mathcal{FV}C_b^\infty(C^\infty_0(\mathbb{R}^d),\Gamma)$
\begin{align*}
\int_\Gamma\left(\nabla^\Gamma F,V\right)_{T_\gamma\Gamma}\,d\mu(\gamma)=-\int_\Gamma F\,\text{div}^{\scriptscriptstyle{\Gamma,\mu}} V\,d\mu(\gamma).
\end{align*}
\end{corollary}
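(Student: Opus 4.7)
The plan is to reduce the vector-valued statement to the scalar integration by parts formula established in Theorem \ref{thmintbyparts}, using the definition of the directional derivative in (\ref{equtbundle}) and the definition of $\text{div}^{\Gamma,\mu}$ in (\ref{equdiv}).

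First, by linearity of the inner product and of the integral, it suffices to treat a single summand. Writing $V=\sum_{i=1}^N F_i v_i$ with $F_i\in\mathcal{F}C_b^\infty(C_0^\infty(\mathbb{R}^d),\Gamma)$ and $v_i\in V_0(\mathbb{R}^d)$, I would split
\begin{align*}
\int_\Gamma \bigl(\nabla^\Gamma F, V\bigr)_{T_\gamma\Gamma}\,d\mu(\gamma)
=\sum_{i=1}^N\int_\Gamma F_i\,\bigl(\nabla^\Gamma F, v_i\bigr)_{T_\gamma\Gamma}\,d\mu(\gamma),
\end{align*}
and then use the identity $(\nabla^\Gamma F, v_i)_{T_\gamma\Gamma}=\nabla^\Gamma_{v_i}F$ provided by (\ref{equtbundle}), so that the left hand side becomes $\sum_i \int_\Gamma F_i\,\nabla^\Gamma_{v_i}F\,d\mu$.

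Next I would apply Theorem \ref{thmintbyparts} to the pair of cylinder functions $F$ and $G:=F_i$ with vector field $v_i$, giving
\begin{align*}
\int_\Gamma F_i\,\nabla^\Gamma_{v_i}F\,d\mu
=-\int_\Gamma F\,\nabla^\Gamma_{v_i}F_i\,d\mu-\int_\Gamma F\,F_i\,B^{\phi,\mu}_{v_i}\,d\mu.
\end{align*}
Summing over $i=1,\dots,N$ and pulling $F$ out yields
\begin{align*}
\int_\Gamma \bigl(\nabla^\Gamma F, V\bigr)_{T_\gamma\Gamma}\,d\mu(\gamma)
=-\int_\Gamma F\,\sum_{i=1}^N\bigl(\nabla^\Gamma_{v_i}F_i+B^{\phi,\mu}_{v_i}F_i\bigr)\,d\mu(\gamma),
\end{align*}
and the expression in parentheses is by definition (\ref{equdiv}) precisely $\text{div}^{\Gamma,\mu}V$, which gives the claim.

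There is essentially no obstacle: all the analytic work has already been carried out in Theorem \ref{thmintbyparts} (in particular the identification of the boundary term $B^{\phi,\mu}_{v}$ and the use of (LS), (RB), (D$\text{L}^\text{2}$), etc.). The only point requiring a brief comment is the legitimacy of pulling $F_i$ and $F$ in and out of the integrals and of applying the scalar formula summand by summand; this is justified because $F, F_i\in\mathcal{F}C_b^\infty(C_0^\infty(\mathbb{R}^d),\Gamma)$ are bounded, $\nabla^\Gamma_{v_i}F$ is a cylinder function (hence in every $L^p(\Gamma,\mu)$), and $B^{\phi,\mu}_{v_i}\in L^2(\Gamma,\mu)$ by Lemma \ref{lemconv} and Definition \ref{BL2}, so every integral under consideration is absolutely convergent.
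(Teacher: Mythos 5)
Your proposal is correct and follows essentially the same route as the paper's own proof: decompose $V=\sum_i F_i v_i$, use (\ref{equtbundle}) to rewrite $\left(\nabla^\Gamma F,V\right)_{T_\gamma\Gamma}$ as $\sum_i \nabla^\Gamma_{v_i}F\,F_i$, apply Theorem \ref{thmintbyparts} summand by summand, and identify the result with $\text{div}^{\scriptscriptstyle{\Gamma,\mu}}V$ via (\ref{equdiv}). The extra integrability remarks you add are harmless and consistent with Lemma \ref{lemconv} and Definition \ref{BL2}.
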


\begin{proof}
Let $F\in\mathcal{F}C_b^\infty(C^\infty_0(\mathbb{R}^d),\Gamma),~V\in\mathcal{FV}C_b^\infty(C^\infty_0(\mathbb{R}^d),\Gamma)$. Hence $V(\gamma)=\sum_{i=1}^N G_i(\gamma)v_i$ for all $\gamma\in\Gamma$ and for some $G_i\in\mathcal{F}C_b^\infty(C^\infty_0(\mathbb{R}^d),\Gamma),~v_i\in V_0(\mathbb{R}^d),~1\le i\le N$. By (\ref{equtbundle})
\begin{align*}
\int_\Gamma\left(\nabla^\Gamma F,V\right)_{T_\gamma\Gamma}\,d\mu(\gamma)=\sum_{i=1}^N\int_\Gamma\nabla_{v_i}^\Gamma F\,G_i\,d\mu(\gamma).
\end{align*}
Now we apply Theorem \ref{thmintbyparts} and by (\ref{equdiv}) the statement follows.
\end{proof}

\section{Infinite Interacting Particle Systems}\label{secipp}

Suppose that the pair potential $\phi$ satisfies (SS), (I), (LR), (D$\text{L}^\text{2}$) and (LS).\\ 

\subsection{The gradient stochastic dynamics with additional drift}

We start with  
\begin{align*}
\mathcal{E}^{\scriptscriptstyle{\Gamma,\mu}}_{\scriptscriptstyle{gsdad}}(F,G):=\int_\Gamma\left(\nabla^\Gamma F(\gamma),\nabla^\Gamma G(\gamma)\right)_{T_\gamma\Gamma}\,d\mu(\gamma),\quad F,G\in\mathcal{F}C_b^\infty(C^\infty_0(\mathbb{R}^d),\Gamma).
\end{align*}
Our aim is to show that the closure $(\mathcal{E}^{\scriptscriptstyle{\Gamma,\mu}}_{\scriptscriptstyle{\text{gsdad}}},D(\mathcal{E}^{\scriptscriptstyle{\Gamma,\mu}}_{\scriptscriptstyle{\text{gsdad}}}))$ of $(\mathcal{E}^{\scriptscriptstyle{\Gamma,\mu}}_{\scriptscriptstyle{\text{gsdad}}},\mathcal{F}C_b^\infty(C^\infty_0(\mathbb{R}^d),\Gamma))$ is a conservative, local, quasi-regular Dirichlet form. By definition it is the classical gradient Dirichlet form on $L^2(\Gamma,\mu)$, but in our situation $\mu$ is a grand canonical Gibbs measure corresponding to the intensity measure $\sigma=z\,\exp(-\phi)\,dx,~0<z<\infty$. This is different to the classical situation, where grand canonical Gibbs measures $\mu$ corresponding to $\sigma=z\,dx,~0<z<\infty$, are considered, see e.g.~\cite{AKR98b}.   

\begin{remark}\label{remsymbimgsd}
$\left(\nabla^\Gamma F,\nabla^\Gamma G\right)_{T_\cdot\Gamma} \in L^1(\Gamma,\mu)$ because $\mu\in\mathcal{G}^{\scriptscriptstyle{gc}}_{\scriptscriptstyle{\text{Rb}}}(\Phi_{\scriptscriptstyle{\phi}},z\exp(-\phi)),~0<z<\infty$. Due to Theorem \ref{thmintbyparts} we have that $\nabla^\Gamma$ respects the $\mu$-classes $\mathcal{F}C_b^{\infty,\mu}(C^\infty_0(\mathbb{R}^d),\Gamma)$ determined by\\ $\mathcal{F}C_b^\infty(C^\infty_0(\mathbb{R}^d),\Gamma)$, i.e.,~$\nabla^\Gamma F=\nabla^\Gamma G~\mu$-a.e provided $F,G\in\mathcal{F}C_b^\infty(C^\infty_0(\mathbb{R}^d),\Gamma)$ satisfy $F=G~\mu$-a.e.. Furthermore, it is easy to check that the $\mu$-equivalence classes $\mathcal{FV}C_b^{\infty,\mu}(C^\infty_0(\mathbb{R}^d),\Gamma)$ determined by $\mathcal{FV}C_b^{\infty}(C^\infty_0(\mathbb{R}^d),\Gamma)$ are dense in $L^2(\mathbb{R}^d\to\mathbb{R}^d,\mu)$. 
Hence $\left(\mathcal{E}^{\scriptscriptstyle{\Gamma,\mu}}_{\scriptscriptstyle{\text{gsdad}}},\mathcal{F}C_b^{\infty,\mu}(C^\infty_0(\mathbb{R}^d),\Gamma)\right)$ is a densely defined positive definite symmetric bilinear form on $L^2(\Gamma,\mu)$. 
\end{remark}

The major part of the analysis (concerning closability) is already done by the derivation of the corresponding integration by parts formula in Section \ref{secintbp1}.  

\begin{corollary}\label{corgen}
Under the assumptions of Theorem \ref{thmintbyparts}. We have 
\begin{align*}
\mathcal{E}^{\scriptscriptstyle{\Gamma,\mu}}_{\scriptscriptstyle{\text{gsdad}}}(F,G)=\int_\Gamma\left(\nabla^\Gamma F(\gamma),\nabla^\Gamma G(\gamma)\right)_{T_\gamma\Gamma}\,d\mu(\gamma)=\int_\Gamma -L^{\scriptscriptstyle{\Gamma,\mu}}_{\scriptscriptstyle{\text{gsdad}}}F\,G\,d\mu
\end{align*}
for all $F,G\in\mathcal{F}C_b^\infty(C^\infty_0(\mathbb{R}^d),\Gamma)$. In particular,
\begin{multline*}
L^{\scriptscriptstyle{\Gamma,\mu}}_{\scriptscriptstyle{\text{gsdad}}} F(\gamma)=\sum_{i,j=1}^N\partial_i\partial_jg_{\scriptscriptstyle{F}}\Big(\langle f_1,\gamma\rangle,\ldots,\langle f_N,\gamma\rangle\Big)\left\langle\Big(\nabla f_i,\nabla f_j\Big)_{\mathbb{R}^d},\gamma\right\rangle\\
+\sum_{j=1}^N\partial_j g_{\scriptscriptstyle{F}}\Big(\langle f_1,\gamma\rangle,\ldots,\langle f_N,\gamma\rangle\Big)\bigg(\langle\Delta f_j,\gamma\rangle+\left\langle\Big(\nabla\phi,\nabla f_j\Big)_{\mathbb{R}^d},\gamma\right\rangle\\-\sum_{\{x,y\}\subset\gamma}\Big(\nabla\phi(x-y),\nabla f_j(x)-\nabla f_j(y)\Big)_{\mathbb{R}^d}\bigg)
\end{multline*}
$\mbox{for }\mu\mbox{-a.e.~}\gamma\in\Gamma\mbox{ and }F\in\mathcal{F}C_b^\infty(C^\infty_0(\mathbb{R}^d),\Gamma)$.
\end{corollary}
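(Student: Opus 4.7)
The plan is to apply Corollary \ref{corintbyparts} to the vector field $V:=\nabla^\Gamma F$ and then to unfold the definition of $\text{div}^{\Gamma,\mu}$ via the chain rule on $\Gamma$ together with Definition \ref{BL2}. Everything nontrivial (closability-type control on the drift, the boundary-term computations) has already been done in Lemma \ref{lemconv} and Theorem \ref{thmintbyparts}; this corollary is essentially the bookkeeping that converts those results into an explicit generator.

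First I would observe that, by (\ref{defgrad}),
\begin{align*}
\nabla^\Gamma F(\gamma)=\sum_{i=1}^N F_i(\gamma)\,\nabla f_i,\qquad F_i(\gamma):=\partial_i g_F(\langle f_1,\gamma\rangle,\ldots,\langle f_N,\gamma\rangle),
\end{align*}
with $F_i\in\mathcal{F}C_b^\infty(C_0^\infty(\mathbb{R}^d),\Gamma)$ and $\nabla f_i\in V_0(\mathbb{R}^d)$. Hence $\nabla^\Gamma F\in\mathcal{FV}C_b^\infty(C_0^\infty(\mathbb{R}^d),\Gamma)$, and Corollary \ref{corintbyparts} yields
\begin{align*}
\mathcal{E}_{\scriptscriptstyle{\text{gsdad}}}^{\scriptscriptstyle{\Gamma,\mu}}(F,G)=\int_\Gamma\!\left(\nabla^\Gamma F,\nabla^\Gamma G\right)_{T_\gamma\Gamma}d\mu=-\int_\Gamma G\cdot\text{div}^{\scriptscriptstyle{\Gamma,\mu}}(\nabla^\Gamma F)\,d\mu.
\end{align*}
Defining $L^{\scriptscriptstyle{\Gamma,\mu}}_{\scriptscriptstyle{\text{gsdad}}}F:=\text{div}^{\scriptscriptstyle{\Gamma,\mu}}(\nabla^\Gamma F)$, as in (\ref{equL}), gives the asserted identity $\mathcal{E}_{\scriptscriptstyle{\text{gsdad}}}^{\scriptscriptstyle{\Gamma,\mu}}(F,G)=\int(-L^{\scriptscriptstyle{\Gamma,\mu}}_{\scriptscriptstyle{\text{gsdad}}}F)\,G\,d\mu$.

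It remains to identify $L^{\scriptscriptstyle{\Gamma,\mu}}_{\scriptscriptstyle{\text{gsdad}}}F$ explicitly. By (\ref{equdiv}),
\begin{align*}
L^{\scriptscriptstyle{\Gamma,\mu}}_{\scriptscriptstyle{\text{gsdad}}}F=\sum_{i=1}^N\bigl(\nabla^\Gamma_{\nabla f_i}F_i+F_i\,B^{\scriptscriptstyle{\phi,\mu}}_{\nabla f_i}\bigr).
\end{align*}
For the first term I apply the chain rule (\ref{equtbundle}) to $F_i$ in direction $\nabla f_i$, giving $\nabla^\Gamma_{\nabla f_i}F_i(\gamma)=\sum_{j=1}^N\partial_j\partial_i g_F\,\bigl\langle(\nabla f_i,\nabla f_j)_{\mathbb{R}^d},\gamma\bigr\rangle$; summing over $i$ reproduces the pure second-order part of the claimed formula. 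For the second term I use Definition \ref{BL2}: since $\text{div}(\nabla f_j)=\Delta f_j$,
\begin{align*}
B^{\scriptscriptstyle{\phi,\mu}}_{\nabla f_j}=L^{\scriptscriptstyle{\phi,\mu}}_{\nabla f_j}+\langle\Delta f_j,\cdot\rangle,
\end{align*}
and because $\nabla f_j$ has compact support contained in some $\Lambda_{k_0}$, the single-particle sum in $L^\phi_{\nabla f_j,k}$ stabilizes pointwise while the pair sum converges in $L^2(\Gamma,\mu)$ (Lemma \ref{lemconv}) to the $\mu$-a.s.\ absolutely convergent expression $\sum_{x\in\gamma}(\nabla\phi(x),\nabla f_j(x))_{\mathbb{R}^d}-\sum_{\{x,y\}\subset\gamma}(\nabla\phi(x-y),\nabla f_j(x)-\nabla f_j(y))_{\mathbb{R}^d}$. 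Multiplying by $F_j=\partial_j g_F$ and summing in $j$ produces exactly the drift part of the stated expression.

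There is no substantial obstacle: the only delicate points, namely the $L^2$-convergence defining $L^{\phi,\mu}_v$ and the boundary-term cancellations justifying the integration by parts, are already encapsulated in Lemma \ref{lemconv} and Theorem \ref{thmintbyparts}. What requires care is only that the pointwise formula for $L^{\phi,\mu}_{\nabla f_j}$ (rather than merely its $L^2$-definition) may be used on the right-hand side; this is legitimate because compact support of $\nabla f_j$ trivializes the single-particle limit, and the Ruelle bound together with (LS) guarantees $\mu$-a.s.\ absolute convergence of the pair sum, so the $L^2$-limit agrees with the pointwise sum $\mu$-a.e.
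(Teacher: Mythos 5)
Your proposal is correct and follows essentially the same route as the paper: apply Corollary \ref{corintbyparts} with the vector field $\nabla^\Gamma F$ (the paper takes $V:=\nabla^\Gamma G$, which is the same step up to the symmetry of the form) and then unfold $\mathrm{div}^{\scriptscriptstyle{\Gamma,\mu}}\nabla^\Gamma F$ via (\ref{equdiv}), (\ref{defgrad}) and Definition \ref{BL2} to get the explicit formula, which the paper dismisses as a direct calculation. Your extra care in identifying the $L^2$-limit $L^{\phi,\mu}_{\nabla f_j}$ with the pointwise sums is welcome; only note that the $\mu$-a.s.\ absolute convergence of the pair sum comes from the Ruelle-bound estimates of Lemma \ref{lemconv} (i.e.\ (RB), (I), (D$\text{L}^\text{2}$)) rather than from (LS), which enters only in the proof of Theorem \ref{thmintbyparts}.
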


\begin{proof}
Apply Corollary \ref{corintbyparts} with $V:=\nabla^\Gamma G$. Then the first assertion follows by (\ref{equL}). The second we obtain by direct calculations using (\ref{equL}) and (\ref{defgrad}).
\end{proof}

In the sequel we denote by $\ddot{\Gamma}\subset\mathcal{M}_p(\mathbb{R}^d)$ the space of integer valued, positive Radon measures. Note that $\ddot{\Gamma}\supset\Gamma$, since
\begin{align*}
\Gamma=\left\{\gamma\in\ddot{\Gamma}\,\left|\,\max_{x\in\mathbb{R}^d}\gamma(\{x\})\le 1\right.\right\}. 
\end{align*}

\begin{remark}
Clearly, $\nabla^\Gamma$ extends to a linear operator on $D(\mathcal{E}^{\scriptscriptstyle{\Gamma,\mu}}_{\scriptscriptstyle{\text{gsdad}}})$. We denote these extension by the same symbol. Furthermore, note that since $\Gamma\subset\ddot{\Gamma}$ and $\mathcal{B}(\ddot{\Gamma})\cap\Gamma=\mathcal{B}(\Gamma)$ we can consider $\mu$ as a measure on $(\ddot{\Gamma},\mathcal{B}(\ddot{\Gamma}))$ and correspondingly $\left(\mathcal{E}^{\scriptscriptstyle{\Gamma,\mu}}_{\scriptscriptstyle{\text{gsdad}}},D(\mathcal{E}^{\scriptscriptstyle{\Gamma,\mu}}_{\scriptscriptstyle{\text{gsdad}}})\right)$ is a Dirichlet form on $L^2(\ddot{\Gamma},\mu)$. In particular, we have that $D(\mathcal{E}^{\scriptscriptstyle{\Gamma,\mu}}_{\scriptscriptstyle{\text{gsdad}}})$ is the closure of $\mathcal{F}C_b^{\infty,\mu}(C_0^{\infty}(\mathbb{R}^d),\ddot{\Gamma})$ with respect to the norm $\sqrt{{\mathcal{E}^{\scriptscriptstyle{\Gamma,\mu}}_{\scriptscriptstyle{\text{gsdad}}}}_1}$, where
\begin{align*}
{{\mathcal{E}^{\scriptscriptstyle{\Gamma,\mu}}_{\scriptscriptstyle{\text{gsdad}}}}_1}(F):={\mathcal{E}^{\scriptscriptstyle{\Gamma,\mu}}_{\scriptscriptstyle{\text{gsdad}}}(F,F)+(F,F)_{L^2(\ddot{\Gamma},\mu)}},\quad F\in D(\mathcal{E}^{\scriptscriptstyle{\Gamma,\mu}}_{\scriptscriptstyle{\text{gsdad}}}).
\end{align*}
The corresponding generator of the Dirichlet form can also be considered as linear operator on $L^2(\ddot{\Gamma},\mu)$.
\end{remark}

\begin{theorem}\label{thmform1}
Suppose that the pair potential $\phi$ satisfies (SS), (I), (LR), (D$\text{L}^\text{2}$) and (LS). 
Let $\mu\in\mathcal{G}^{\scriptscriptstyle{gc}}_{\scriptscriptstyle{\text{Rb}}}(\Phi_{\scriptscriptstyle{\phi}},z\exp(-\phi)),~0<z<\infty$. Then
\begin{enumerate}
\item[(i)]
$\left(\mathcal{E}^{\scriptscriptstyle{\Gamma,\mu}}_{\scriptscriptstyle{\text{gsdad}}},\mathcal{F}C_b^{\infty,\mu}(C^\infty_0(\mathbb{R}^d),\Gamma)\right)$ is closable on $L^2(\Gamma,\mu)$ and its closure $\left(\mathcal{E}^{\scriptscriptstyle{\Gamma,\mu}}_{\scriptscriptstyle{\text{gsdad}}},D(\mathcal{E}^{\scriptscriptstyle{\Gamma,\mu}}_{\scriptscriptstyle{\text{gsdad}}})\right)$ is a symmetric Dirichlet form which is conservative, i.e.,~$1\in D(\mathcal{E}^{\scriptscriptstyle{\Gamma,\mu}}_{\scriptscriptstyle{\text{gsdad}}}),~\mathcal{E}^{\scriptscriptstyle{\Gamma,\mu}}_{\scriptscriptstyle{\text{gsdad}}}(1,1)=0$. Its generator, denoted by $H^{\scriptscriptstyle{\Gamma,\mu}}_{\scriptscriptstyle{\text{gsdad}}}$, is the Friedrichs' extension of $-L^{\scriptscriptstyle{\Gamma,\mu}}_{\scriptscriptstyle{\text{gsdad}}}$.
\item[(ii)]
$\left(\mathcal{E}^{\scriptscriptstyle{\Gamma,\mu}}_{\scriptscriptstyle{\text{gsdad}}},D(\mathcal{E}^{\scriptscriptstyle{\Gamma,\mu}}_{\scriptscriptstyle{\text{gsdad}}})\right)$ is quasi-regular on $L^2(\ddot{\Gamma},\mu)$. 
\item[(iii)]
$\left(\mathcal{E}^{\scriptscriptstyle{\Gamma,\mu}}_{\scriptscriptstyle{\text{gsdad}}},D(\mathcal{E}^{\scriptscriptstyle{\Gamma,\mu}}_{\scriptscriptstyle{\text{gsdad}}})\right)$ is local, i.e., $\mathcal{E}^{\scriptscriptstyle{\Gamma,\mu}}_{\scriptscriptstyle{\text{gsdad}}}(F,G)=0$ provided $F,G\in D(\mathcal{E}^{\scriptscriptstyle{\Gamma,\mu}}_{\scriptscriptstyle{\text{gsdad}}})$ with\\ $\text{supp}(|F|\cdot\mu)\cap\text{supp}(|G|\cdot\mu)=\varnothing$.
\end{enumerate}
\end{theorem}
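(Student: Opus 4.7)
The strategy is to follow the scheme developed in \cite{AKR98a} and \cite{AKR98b} for the classical gradient form on Poisson space, adjusting only as needed to accommodate the modified intensity measure $\sigma=z\exp(-\phi)\,dx$. Closability in (i) is the immediate payoff of Theorem~\ref{thmintbyparts}: by Corollary~\ref{corgen} the form admits, on the cylinder core, the representation $\mathcal{E}^{\scriptscriptstyle{\Gamma,\mu}}_{\scriptscriptstyle{\text{gsdad}}}(F,G)=(-L^{\scriptscriptstyle{\Gamma,\mu}}_{\scriptscriptstyle{\text{gsdad}}}F,G)_{L^2(\Gamma,\mu)}$ with $-L^{\scriptscriptstyle{\Gamma,\mu}}_{\scriptscriptstyle{\text{gsdad}}}$ symmetric and non-negative; a standard functional-analytic argument then yields closability and identifies the generator of the closure as the Friedrichs extension $H^{\scriptscriptstyle{\Gamma,\mu}}_{\scriptscriptstyle{\text{gsdad}}}$ of $-L^{\scriptscriptstyle{\Gamma,\mu}}_{\scriptscriptstyle{\text{gsdad}}}$. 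The Markov property, which is what upgrades the closed symmetric form to a Dirichlet form, follows from the pointwise chain rule $\nabla^\Gamma(T\circ F)=T'(F)\,\nabla^\Gamma F$ for smooth normal contractions $T$, together with a standard approximation argument. Finally, $1\in\mathcal{F}C_b^\infty(C_0^\infty(\mathbb{R}^d),\Gamma)$ (take $g\equiv 1$) and $\nabla^\Gamma 1\equiv 0$, which gives conservativity.

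For quasi-regularity in (ii) the plan is to follow the construction in \cite[Sect.~4]{AKR98b} with only cosmetic modifications. Concretely, I would exhibit an $\mathcal{E}^{\scriptscriptstyle{\Gamma,\mu}}_{\scriptscriptstyle{\text{gsdad}}}$-nest of vaguely compact subsets of $\ddot\Gamma$ of the form
\[
K_{(N_k)}:=\{\gamma\in\ddot\Gamma\,|\,\gamma(B_k)\le N_k\text{ for every }k\in\mathbb{N}\},
\]
with $N_k\uparrow\infty$ chosen fast enough that $\mu(\ddot\Gamma\setminus K_{(N_k)})$ decays and that the form-energies of the cut-off functions $\chi_n$, built from cylindrical smooth truncations of the counting functions $\gamma\mapsto\gamma(B_k)$, remain uniformly summable. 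These energy estimates should be produced via exactly the same correlation-function/$K$-transform calculus that underlies Lemma~\ref{lemconv}, with the Ruelle bound supplying the required uniform control. Combined with the separability of $\mathcal{F}C_b^\infty$ in the form norm and the point-separating property of cylinder functions on $\ddot\Gamma$, this delivers the three conditions defining quasi-regularity.

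Locality in (iii) is essentially immediate on the cylinder core: if $F,G\in\mathcal{F}C_b^\infty(C_0^\infty(\mathbb{R}^d),\Gamma)$ satisfy $\mathrm{supp}(|F|\cdot\mu)\cap\mathrm{supp}(|G|\cdot\mu)=\varnothing$, then at $\mu$-almost every $\gamma$ at least one of $F, G$ vanishes on a vague neighbourhood of $\gamma$, hence so does its gradient, and the integrand $(\nabla^\Gamma F,\nabla^\Gamma G)_{T_\gamma\Gamma}$ vanishes $\mu$-a.e.; a density argument extends this to all of $D(\mathcal{E}^{\scriptscriptstyle{\Gamma,\mu}}_{\scriptscriptstyle{\text{gsdad}}})$.

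The genuinely delicate step is (ii). In the classical Poisson setting of \cite{AKR98b} one benefits from the product structure of $\pi_\sigma$ to estimate cut-off energies explicitly. In the present setup only the Ruelle bound on $\rho_\mu$ is available, and moreover the intensity density $\exp(-\phi)$ may vanish precisely where $\phi$ is singular, so all energy estimates on the localising functions must proceed through the Mayer--Montroll-type bounds exploited in Lemma~\ref{lemconv}. This is where the conditions (D$\text{L}^\text{q}$) and (LS) are expected to enter decisively; balancing them against the Ruelle bound so that the nest can be constructed uniformly is what I expect to be the main technical obstacle.
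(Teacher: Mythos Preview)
Your treatment of (i) matches the paper exactly: closability from Corollary~\ref{corgen}, the Dirichlet property from the chain rule for $\nabla^\Gamma$, conservativity from $\nabla^\Gamma 1=0$, and the Friedrichs extension statement as a standard consequence.

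For (ii) the paper takes a much shorter route: it simply invokes \cite[Coro.~4.9]{MaRo00}, which establishes quasi-regularity for gradient-type Dirichlet forms on $\ddot\Gamma$ under a general moment condition on the reference measure. Since $\mu$ satisfies a Ruelle bound, all local moments $\int N_\Lambda^n\,d\mu$ are finite, and the Ma--R\"ockner machinery applies directly; no further potential-theoretic work is needed. Your plan to rebuild the nest by hand in the style of \cite{AKR98b} would also succeed, but it is more labour than necessary, and your diagnosis of where the difficulty lies is off: the conditions (D$\text{L}^\text{q}$) and (LS) do \emph{not} enter the quasi-regularity argument at all. They are consumed entirely in proving the integration by parts formula (Theorem~\ref{thmintbyparts} and Lemma~\ref{lemconv}), hence in (i). Quasi-regularity of the bare gradient form $\int(\nabla^\Gamma F,\nabla^\Gamma G)_{T_\gamma\Gamma}\,d\mu$ needs only the Ruelle bound, via the energy estimates on the localising functions. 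The genuinely new quasi-regularity work in this paper comes later, in Lemma~\ref{lemE22}, where the extra $\nabla^\Gamma_\gamma$-term in $\mathcal{E}^{\Gamma,\mu}_{\text{env}}$ forces a modification of the \cite{MaRo00} argument.

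For (iii) the paper again defers to a standard reference, \cite[Chap.~V, Exam.~1.12(ii)]{MaRo92}, noting that the product rule for $\nabla^\Gamma$ on bounded elements of $D(\mathcal{E}^{\Gamma,\mu}_{\text{gsdad}})$ is all that is required. Your sketch is in the right spirit, though the step ``at $\mu$-a.e.\ $\gamma$ one of $F,G$ vanishes on a vague neighbourhood, hence so does its gradient'' is not quite rigorous for general $F,G\in D(\mathcal{E})$ (you only have $\mu$-classes), and the clean way to close the argument is precisely the product-rule route the paper cites.
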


\begin{proof}
$ $
\begin{enumerate}
\item[(i)]
By Corollary \ref{corgen} we have closability and the last part of the assertion. The Dirichlet property immediately follows from the chain rule for $\nabla^\Gamma$ on $\mathcal{F}C_b^{\infty}(C^\infty_0(\mathbb{R}^d),\Gamma)$ and the conservativity is obvious. (We refer to \cite[Chap.~I and Chap.~II,~Sect.~2,3]{MaRo92} for the terminology and details.)
\item[(ii)]
This is a special case of \cite[Coro.~4.9]{MaRo00}.
\item[(iii)]
Since $\nabla_\Gamma$ satisfies the product rule on bounded functions in $D(\mathcal{E}^{\scriptscriptstyle{\Gamma,\mu}}_{\scriptscriptstyle{\text{gsdad}}})$ the proof is exactly the same as in \cite[Chap.~V, Exam.~1.12(ii)]{MaRo92}.
\end{enumerate}
\end{proof}

\begin{theorem}\label{thmexpromgsd}
Suppose the assumptions of Theorem \ref{thmform1}. Then
\begin{enumerate}
\item[(i)]
there exists a conservative diffusion process 
\begin{align*}
\mathbf{{M}}^{\scriptscriptstyle{\Gamma,\mu}}_{\scriptscriptstyle{\text{gsdad}}}=\left(\mathbf{{\Omega}},\mathbf{{F}}^{\scriptscriptstyle{\text{gsdad}}},(\mathbf{{F}}^{\scriptscriptstyle{\text{gsdad}}}_t)_{t\ge 0},(\mathbf{X}^{\scriptscriptstyle{\text{gsdad}}}_t)_{t\ge 0},(\mathbf{{P}}^{\scriptscriptstyle{\text{gsdad}}}_\gamma)_{\gamma\in\ddot{\Gamma}}\right)
\end{align*}
on $\ddot{\Gamma}$ which is properly associated with 
$\left(\mathcal{E}^{\scriptscriptstyle{\Gamma,\mu}}_{\scriptscriptstyle{\text{gsdad}}},D(\mathcal{E}^{\scriptscriptstyle{\Gamma,\mu}}_{\scriptscriptstyle{\text{gsdad}}})\right)$, i.e., for all ($\mu$-versions of) $F\in L^2(\ddot{\Gamma},\mu)$ and all $t>0$ the function
\begin{align*}
\gamma\mapsto p^{\scriptscriptstyle{\text{gsdad}}}_t F(\gamma):=\int_{{\mathbf{\Omega}}}F({\mathbf{X}}^{\scriptscriptstyle{\text{gsdad}}}_t)\,d{\mathbf{{P}}^{\scriptscriptstyle{\text{gsdad}}}}_\gamma,\quad\gamma\in\ddot{\Gamma},
\end{align*}
is an $\mathcal{E}^{\scriptscriptstyle{\Gamma,\mu}}_{\scriptscriptstyle{\text{gsdad}}}$-quasi-continuous version of $\exp(-t {H}^{\scriptscriptstyle{\Gamma,\mu}}_{\scriptscriptstyle{\text{gsdad}}})F$. $\mathbf{{M}}^{\scriptscriptstyle{\Gamma,\mu}}_{\scriptscriptstyle{\text{gsdad}}}$ is up to $\mu$-equivalence unique (cf.~\cite[Chap.~IV,~Sect.~6]{MaRo92}). In particular, $\mathbf{{M}}^{\scriptscriptstyle{\Gamma,\mu}}_{\scriptscriptstyle{\scriptscriptstyle{\text{gsdad}}}}$ is $\mu$-symmetric, i.e.,
\begin{align*}
\int_{\ddot{\Gamma}} G\,p^{\scriptscriptstyle{\text{gsdad}}}_t F\,d\mu(\gamma)=\int_{\ddot{\Gamma}}F\,p^{\scriptscriptstyle{\text{gsdad}}}_t G\,d\mu(\gamma)\quad\mbox{for all }F,G:\ddot{\Gamma}\to\mathbb{R_+},~\mathcal{B}(\ddot{\Gamma})\mbox{-measurable}
\end{align*}
and has $\mu$ as invariant measure.
\item[(ii)]
$\mathbf{{M}}^{\scriptscriptstyle{\Gamma,\mu}}_{\scriptscriptstyle{\text{gsdad}}}$ from (i) is the (up to $\mu$-equivalence,~cf.~\cite[Def.~6.3]{MaRo92}) unique diffusion process having $\mu$ as invariant measure and solving the martingale problem for\\ $\left(-{H}^{\scriptscriptstyle{\Gamma,\mu}}_{\scriptscriptstyle{\text{gsdad}}},D({H}^{\scriptscriptstyle{\Gamma,\mu}}_{\scriptscriptstyle{\text{gsdad}}})\right)$, i.e., for all $G\in D({H}^{\scriptscriptstyle{\Gamma,\mu}}_{\scriptscriptstyle{\text{gsdad}}})\supset\mathcal{F}C_b^\infty(C^\infty_0(\mathbb{R}^d),\Gamma)$
\begin{align*}
\widetilde{G}(\mathbf{X}^{\scriptscriptstyle{\text{gsdad}}}_t)-\widetilde{G}(\mathbf{X}^{\scriptscriptstyle{\text{gsdad}}}_0)+\int_0^t {H}^{\scriptscriptstyle{\Gamma,\mu}}_{\scriptscriptstyle{\text{gsdad}}} G(\mathbf{{X}}^{\scriptscriptstyle{\text{gsdad}}}_t)\,ds,\quad t\ge 0,
\end{align*}
is an $(\mathbf{{F}}^{\scriptscriptstyle{\text{gsdad}}}_t)_{t\ge 0}$-martingale under $\mathbf{{P}}^{\scriptscriptstyle{\text{gsdad}}}_\gamma$ (hence starting at $\gamma$) for $\mathcal{E}^{\scriptscriptstyle{\Gamma,\mu}}_{\scriptscriptstyle{\text{gsdad}}}$-q.a.~$\gamma\in\ddot{\Gamma}$. (Here $\widetilde{G}$ denotes a quasi-continuous version of $G$, cf.~\cite[Chap.~IV,~Prop.3.3]{MaRo92}.)
\end{enumerate}
\end{theorem}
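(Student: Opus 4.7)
The plan is to exploit the structural properties of the Dirichlet form $(\mathcal{E}^{\scriptscriptstyle{\Gamma,\mu}}_{\scriptscriptstyle{\text{gsdad}}},D(\mathcal{E}^{\scriptscriptstyle{\Gamma,\mu}}_{\scriptscriptstyle{\text{gsdad}}}))$ established in Theorem \ref{thmform1}, and then to apply the general machinery of \cite{MaRo92} for quasi-regular local Dirichlet forms. Since the form is viewed as living on the Polish state space $\ddot{\Gamma}$ (with $\mu$ regarded as concentrated on $\Gamma\subset\ddot{\Gamma}$), the framework of \cite[Chap.~IV]{MaRo92} applies without modification.

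For part (i), I would invoke \cite[Chap.~IV, Theo.~3.5]{MaRo92}: quasi-regularity (Theorem \ref{thmform1}(ii)) guarantees the existence of a $\mu$-tight special standard process $\mathbf{M}^{\scriptscriptstyle{\Gamma,\mu}}_{\scriptscriptstyle{\text{gsdad}}}$ on $\ddot{\Gamma}$ properly associated with the form, unique up to $\mu$-equivalence. Locality (Theorem \ref{thmform1}(iii)) upgrades this to a diffusion by \cite[Chap.~V, Theo.~1.11]{MaRo92}. Conservativity of the form (i.e.~$1\in D(\mathcal{E}^{\scriptscriptstyle{\Gamma,\mu}}_{\scriptscriptstyle{\text{gsdad}}})$ with $\mathcal{E}^{\scriptscriptstyle{\Gamma,\mu}}_{\scriptscriptstyle{\text{gsdad}}}(1,1)=0$) together with $\mu$-invariance of the semigroup yields that $p^{\scriptscriptstyle{\text{gsdad}}}_t 1=1$ for $\mathcal{E}^{\scriptscriptstyle{\Gamma,\mu}}_{\scriptscriptstyle{\text{gsdad}}}$-q.e.~$\gamma$, so $\mathbf{M}^{\scriptscriptstyle{\Gamma,\mu}}_{\scriptscriptstyle{\text{gsdad}}}$ has infinite lifetime. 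The $\mu$-symmetry statement is the standard symmetry of semigroups associated with symmetric Dirichlet forms, and $\mu$-invariance follows from the integration by parts formula of Corollary \ref{corintbyparts} applied with $G=1$.

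For part (ii), the martingale problem is obtained from the Fukushima-type decomposition for quasi-regular Dirichlet forms, see \cite[Chap.~VI, Theo.~2.5]{MaRo92} (or Dynkin's formula). Specifically, for $G\in D(H^{\scriptscriptstyle{\Gamma,\mu}}_{\scriptscriptstyle{\text{gsdad}}})$ one has $\widetilde G(\mathbf{X}^{\scriptscriptstyle{\text{gsdad}}}_t)-\widetilde G(\mathbf{X}^{\scriptscriptstyle{\text{gsdad}}}_0)+\int_0^t H^{\scriptscriptstyle{\Gamma,\mu}}_{\scriptscriptstyle{\text{gsdad}}} G(\mathbf{X}^{\scriptscriptstyle{\text{gsdad}}}_s)\,ds$ decomposes as a sum of a martingale additive functional and a continuous additive functional of zero energy; by the explicit representation of $H^{\scriptscriptstyle{\Gamma,\mu}}_{\scriptscriptstyle{\text{gsdad}}}$ from Corollary \ref{corgen} and the fact that the cylinder function $G$ has a nice quasi-continuous version (the diffusion is conservative), the zero-energy part vanishes, yielding the martingale property. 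The inclusion $\mathcal{F}C_b^\infty(C^\infty_0(\mathbb{R}^d),\Gamma)\subset D(H^{\scriptscriptstyle{\Gamma,\mu}}_{\scriptscriptstyle{\text{gsdad}}})$ is immediate from Corollary \ref{corgen}.

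The main obstacle is the uniqueness assertion in (ii). Standard Dirichlet form theory only gives uniqueness up to $\mu$-equivalence \emph{among properly associated processes}, not among all diffusion solutions of the martingale problem having $\mu$ as invariant measure. The strategy here is to follow \cite[Chap.~IV, Sect.~6]{MaRo92}: given any diffusion $\mathbf{N}$ with $\mu$-invariant law solving the martingale problem for $(-H^{\scriptscriptstyle{\Gamma,\mu}}_{\scriptscriptstyle{\text{gsdad}}},D(H^{\scriptscriptstyle{\Gamma,\mu}}_{\scriptscriptstyle{\text{gsdad}}}))$, one shows that its transition semigroup, when evaluated against $\mathcal{F}C_b^\infty(C_0^\infty(\mathbb{R}^d),\Gamma)$, must coincide with $(\exp(-tH^{\scriptscriptstyle{\Gamma,\mu}}_{\scriptscriptstyle{\text{gsdad}}}))_{t\ge 0}$; since $\mathcal{F}C_b^\infty(C_0^\infty(\mathbb{R}^d),\Gamma)$ is a core for $H^{\scriptscriptstyle{\Gamma,\mu}}_{\scriptscriptstyle{\text{gsdad}}}$ (being its Friedrichs' domain generator, Theorem \ref{thmform1}(i)), the two semigroups agree on $L^2(\ddot{\Gamma},\mu)$, and this forces $\mathbf{N}$ and $\mathbf{M}^{\scriptscriptstyle{\Gamma,\mu}}_{\scriptscriptstyle{\text{gsdad}}}$ to be $\mu$-equivalent in the sense of \cite[Def.~6.3]{MaRo92}. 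The delicate point is checking that the above identification of semigroups really follows from the martingale problem together with $\mu$-invariance — this is where the essential self-adjointness of $-L^{\scriptscriptstyle{\Gamma,\mu}}_{\scriptscriptstyle{\text{gsdad}}}$ on $\mathcal{F}C_b^\infty(C_0^\infty(\mathbb{R}^d),\Gamma)$ (or equivalently, the core property) is crucial.
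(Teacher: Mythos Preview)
Your argument for part (i) is correct and is exactly what the paper does, only spelled out in more detail: the paper simply says ``By Theorem \ref{thmform1} the proof follows directly from \cite[Chap.~V,~Theo.~1.11]{MaRo92}.''

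For part (ii), however, there is a genuine gap in your uniqueness argument. You write that $\mathcal{F}C_b^\infty(C_0^\infty(\mathbb{R}^d),\Gamma)$ is a core for $H^{\scriptscriptstyle{\Gamma,\mu}}_{\scriptscriptstyle{\text{gsdad}}}$ ``being its Friedrichs' domain generator'', and then base the identification of semigroups on this. But the Friedrichs' extension construction does \emph{not} imply that the original domain is an operator core; that would be essential self-adjointness of $(-L^{\scriptscriptstyle{\Gamma,\mu}}_{\scriptscriptstyle{\text{gsdad}}},\mathcal{F}C_b^\infty(C_0^\infty(\mathbb{R}^d),\Gamma))$, which is a strictly stronger statement and is nowhere established in the paper (indeed, for such singular potentials it is a hard open problem). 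So your final sentence, identifying essential self-adjointness on cylinder functions as ``crucial'', points to a hypothesis you do not have.

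The paper sidesteps this entirely: it invokes \cite[Theo.~3.5]{MR1335494} (Albeverio--R\"ockner 1995), which proves directly that for a quasi-regular symmetric Dirichlet form, any $\mu$-symmetric right process with invariant measure $\mu$ solving the martingale problem for the \emph{full} generator $(-H,D(H))$ is properly associated with the form, hence $\mu$-equivalent to the canonical process. Since the martingale problem in the theorem statement is posed on all of $D(H^{\scriptscriptstyle{\Gamma,\mu}}_{\scriptscriptstyle{\text{gsdad}}})$, no core property for cylinder functions is needed. Your sketch via the Fukushima decomposition for the \emph{existence} part of (ii) is fine; it is only the uniqueness step that requires replacing your core argument by the cited black-box result.
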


\begin{proof}
$ $
\begin{enumerate}
\item[(i)]
By Theorem \ref{thmform1} the proof follows directly from \cite[Chap.~V,~Theo.~1.11]{MaRo92}.
\item[(ii)]
This follows immediately by \cite[Theo.~3.5]{MR1335494}.
\end{enumerate}
\end{proof}

\begin{remark}\label{remexceptmgsd}
\begin{enumerate}
\item[(i)]
For $d\ge 2$ an argumentation as in the proof of \cite[Prop.~1]{RS98} together with an argumentation as in the proof of \cite[Coro.~1]{RS98} gives us that under our assumptions the set $\ddot{\Gamma}\setminus\Gamma$ is $\mathcal{E}_{\scriptscriptstyle{\text{gsdad}}}^{\scriptscriptstyle{\Gamma,\mu}}$-exceptional. Therefore, the process $\mathbf{M}^{\scriptscriptstyle{\Gamma,\mu}}_{\scriptscriptstyle{\text{gsdad}}}$ from Theorem \ref{thmexpromgsd} lives on the smaller space $\Gamma$.
\item[(ii)]
We call the diffusion process $\mathbf{M}^{\scriptscriptstyle{\Gamma,\mu}}_{\scriptscriptstyle{\text{gsdad}}}$ from Theorem \ref{thmexpromgsd} \emph{gradient stochastic dynamics with additional drift}.
\end{enumerate}
\end{remark}

\subsection{The environment process}\label{secenv}\index{environment process}

The following statement is a special case of an integration by parts formula shown in \cite{CoKu09}, which holds for a non-empty subset $\mathcal{G}^{\scriptscriptstyle{gc}}_{\scriptscriptstyle{\text{ibp}}}(\Phi_{\scriptscriptstyle{\phi}},z\exp(-\phi))$ of $\mathcal{G}^{\scriptscriptstyle{gc}}_{\scriptscriptstyle{\text{Rb}}}(\Phi_{\scriptscriptstyle{\phi}},z\exp(-\phi))$, $0<z<\infty$. 
\begin{lemma}\label{corintbp2}
Suppose that the pair potential $\phi$ satisfies (SS), (I), (LR), (D$\text{L}^\text{2}$) and (LS). Let $\mu\in\mathcal{G}^{\scriptscriptstyle{gc}}_{\scriptscriptstyle{\text{ibp}}}(\Phi_{\scriptscriptstyle{\phi}},z\exp(-\phi)),~0<z<\infty$. Then for $F,G\in\mathcal{F}C_b^\infty(C^\infty_0(\mathbb{R}^d),\Gamma)$ we have
$\left(\nabla^\Gamma_\gamma F(\gamma),\nabla^\Gamma_\gamma G(\gamma)\right)_{\scriptscriptstyle{\mathbb{R}^d}}$
$\in L^1(\Gamma,\mu)$.
\pagebreak
Furthermore,
\begin{multline*}
\int_{\Gamma}\left(\nabla^\Gamma_\gamma F(\gamma),\nabla^\Gamma_\gamma G(\gamma)\right)_{\scriptscriptstyle{\mathbb{R}^d}}\,d\mu(\gamma)\\
=-\int_\Gamma\Bigg(\sum_{i,j=1}^N\partial_i\partial_j g_{\scriptscriptstyle{F}}\left(\langle f_1,\gamma\rangle,\ldots,\langle f_N,\gamma\rangle\right)\Big(\left\langle\nabla f_i,\gamma\right\rangle,\left\langle\nabla f_j,\gamma\right\rangle\Big)_{\mathbb{R}^d}\\
+\sum_{j=1}^N \partial_j g_{\scriptscriptstyle{F}}\left(\langle f_1,\gamma\rangle,\ldots,\langle f_N,\gamma\rangle\right)\bigg(\left\langle\Delta f_j,\gamma\right\rangle-\Big(\left\langle\nabla\phi,\gamma\right\rangle,\left\langle\nabla f_j,\gamma\right\rangle\Big)_{\mathbb{R}^d}\bigg)\Bigg)\\
\times g_{\scriptscriptstyle{G}}\left(\langle g_1,\gamma\rangle,\ldots,\langle g_M,\gamma\rangle\right)\,d\mu(\gamma).
\end{multline*}
\end{lemma}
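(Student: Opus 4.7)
The plan is to reduce everything to the single-gradient integration by parts formula for $\nabgam$ established in \cite{CoKu09}, applied componentwise.

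First I would verify integrability. Writing $F=g_F(\langle f_1,\cdot\rangle,\ldots,\langle f_N,\cdot\rangle)$, $G=g_G(\langle g_1,\cdot\rangle,\ldots,\langle g_M,\cdot\rangle)$, direct expansion gives
\begin{align*}
\left(\nabgam F(\gamma),\nabgam G(\gamma)\right)_{\mathbb{R}^d}=\sum_{i=1}^{N}\sum_{j=1}^{M}\partial_ig_F\,\partial_jg_G\,\bigl(\langle\nabla f_i,\gamma\rangle,\langle\nabla g_j,\gamma\rangle\bigr)_{\mathbb{R}^d},
\end{align*}
with the partial derivatives uniformly bounded since $g_F,g_G\in C_b^\infty$. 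The $L^1(\Gamma,\mu)$-bound then reduces to integrability of $|\langle\nabla f_i,\gamma\rangle|\,|\langle\nabla g_j,\gamma\rangle|$; this follows from (RB) via the $K$-transform identity (\ref{eq302}), in direct analogy with the second-moment estimates in the proof of Lemma \ref{lemconv} (in fact simpler, as only correlation functions of orders $\le 2$ enter and all integrands have compact support).

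For the identity itself, the crucial input from \cite{CoKu09} is the one-sided IBP formula for the translation gradient: for $\mu\in\mathcal{G}^{gc}_{ibp}(\Phi_\phi,z\exp(-\phi))$ and $v\in\mathbb{R}^d$,
\begin{align*}
\int_\Gamma\nabla^\Gamma_{\gamma,v}H\,d\mu=\int_\Gamma H\cdot\bigl(\langle\nabla\phi,\gamma\rangle,v\bigr)_{\mathbb{R}^d}\,d\mu.
\end{align*}
Heuristically the drift appears precisely because the intensity density $\exp(-\phi)$ of $\sigma$ fails to be translation invariant, while the pair-energy factor $\exp(-E(\gamma))$ is translation invariant and therefore contributes nothing. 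Now fix $k\in\{1,\ldots,d\}$, set $\partial_k^\gamma:=\nabla^\Gamma_{\gamma,e_k}$, and apply this formula with $H:=(\partial_k^\gamma F)\,G$. The Leibniz rule $\partial_k^\gamma((\partial_k^\gamma F)G)=((\partial_k^\gamma)^2F)\,G+(\partial_k^\gamma F)(\partial_k^\gamma G)$ yields
\begin{align*}
\int_\Gamma(\partial_k^\gamma F)(\partial_k^\gamma G)\,d\mu=-\int_\Gamma((\partial_k^\gamma)^2F)\,G\,d\mu+\int_\Gamma(\partial_k^\gamma F)\,G\cdot\bigl(\langle\nabla\phi,\gamma\rangle,e_k\bigr)_{\mathbb{R}^d}\,d\mu.
\end{align*}
Summing over $k=1,\ldots,d$, a direct evaluation of $\sum_k(\partial_k^\gamma)^2F$ on cylinder functions produces exactly $\sum_{i,j}\partial_i\partial_jg_F\,(\langle\nabla f_i,\gamma\rangle,\langle\nabla f_j,\gamma\rangle)_{\mathbb{R}^d}+\sum_j\partial_jg_F\langle\Delta f_j,\gamma\rangle$; reassembling the pieces gives precisely the identity in the statement.

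The main technical obstacle is that $H=(\partial_k^\gamma F)G$ is not a bounded cylinder function, since $\langle\partial_kf_i,\gamma\rangle$ grows linearly in the number of points of $\gamma$ in $\mathrm{supp}\,f_i$, so one must verify that the IBP of \cite{CoKu09} extends to this slightly enlarged class. I would handle this by truncation: approximate $H$ by $H_n:=\chi_n(\langle\psi,\gamma\rangle)\,H$ with $\psi\in C_0^\infty(\mathbb{R}^d)$, $\psi\geq 0$, dominating the supports of all test functions in play, and $\chi_n\in C_b^\infty(\mathbb{R})$ an increasing sequence of cut-offs with $\chi_n\uparrow 1$. Each $H_n$ is a bona fide bounded cylinder function, so the IBP applies; the Ruelle bound, together with the moment estimates of Lemma \ref{lemconv} controlling $\langle\nabla\phi,\gamma\rangle$ against compactly supported test functions, supplies the $L^1(\mu)$-domination required to pass to the limit $n\to\infty$ on both sides.
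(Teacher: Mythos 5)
Your proposal is sound in its algebra, but it follows a genuinely different route from the paper, which in fact gives no proof of Lemma \ref{corintbp2} at all: the class $\mathcal{G}^{\scriptscriptstyle{gc}}_{\scriptscriptstyle{\text{ibp}}}(\Phi_{\scriptscriptstyle{\phi}},z\exp(-\phi))$ is \emph{defined} as the (non-empty) subclass of $\mathcal{G}^{\scriptscriptstyle{gc}}_{\scriptscriptstyle{\text{Rb}}}(\Phi_{\scriptscriptstyle{\phi}},z\exp(-\phi))$ for which the integration by parts formula with respect to $\nabgam$ of \cite{CoKu09} holds, and the lemma is simply quoted as a special case of that (unpublished) result. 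You instead reconstruct it from a hypothesized one-sided formula $\int_\Gamma\nabla^\Gamma_{\gamma,v}H\,d\mu=\int_\Gamma H\,\big(\langle\nabla\phi,\gamma\rangle,v\big)_{\mathbb{R}^d}\,d\mu$, apply it to $H=(\partial_k^\gamma F)\,G$, and sum over coordinates; the identification of $\sum_k(\partial_k^\gamma)^2F$ with the second-order part of the stated operator is correct, the sign of the drift term comes out right, and your $L^1(\Gamma,\mu)$-argument for $\big(\nabgam F,\nabgam G\big)_{\mathbb{R}^d}$ via (RB) and (\ref{eq302}) is exactly the right (and simpler) analogue of Lemma \ref{lemconv}. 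What your route buys is a transparent reduction of the lemma to a minimal form of the external input; what it costs is that the precise shape and domain of the formula in \cite{CoKu09} is assumed rather than available, so strictly speaking you prove a conditional statement, whereas the paper's logic makes the lemma true by the very definition of $\mathcal{G}^{\scriptscriptstyle{gc}}_{\scriptscriptstyle{\text{ibp}}}$. If you want your derivation to stand on its own, two points need to be made explicit: first, $\langle\nabla\phi,\gamma\rangle=\sum_{x\in\gamma}\nabla\phi(x)$ is an infinite sum and must be given a meaning, e.g.\ as the $L^2(\Gamma,\mu)$-limit of $\sum_{x\in\gamma_{\Lambda_k}}\nabla\phi(x)$, which follows from (D$\text{L}^\text{2}$) and (RB) by rerunning the first-summand estimate in the proof of Lemma \ref{lemconv} with a constant direction in place of $v\in V_0(\mathbb{R}^d)$ (it is not literally contained in that lemma as stated); second, in your truncation step the gradient also hits the cut-off, producing the extra term $\chi_n'(\langle\psi,\gamma\rangle)\,\langle(\nabla\psi,e_k)_{\mathbb{R}^d},\gamma\rangle\,H$, which you must show vanishes in the limit --- with a choice such as $\chi_n(t)=\chi(t/n)$ the factor $n^{-1}$ together with finiteness of low-order moments of local particle numbers (again (RB)) does the job, but this should be spelled out rather than subsumed under ``$L^1$-domination''.
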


Next we consider
\begin{align*}
\mathcal{E}^{\scriptscriptstyle{\Gamma,\mu}}_{\scriptscriptstyle{\text{env}}}(F,G)=\mathcal{E}_{\scriptscriptstyle{\text{gsdad}}}^{\scriptscriptstyle{\Gamma,\mu}}(F,G)+\int_{\Gamma}\left(\nabla^\Gamma_\gamma F(\gamma),\nabla^\Gamma_\gamma G(\gamma)\right)_{\scriptscriptstyle{\mathbb{R}^d}}\,d\mu(\gamma),\quad F,G\in\mathcal{F}C_b^\infty(C^\infty_0(\mathbb{R}^d),\Gamma).
\end{align*}

\begin{remark}\label{remsymbienv}
Using Remark \ref{remsymbimgsd} and Lemma \ref{corintbp2} we have that $\left(\mathcal{E}^{\scriptscriptstyle{\Gamma,\mu}}_{\scriptscriptstyle{\text{env}}},\mathcal{F}C_b^{\infty,\mu}(C^\infty_0(\mathbb{R}^d),\Gamma)\right)$ is a densely defined, positive definite, symmetric bilinear form on $L^2(\Gamma,\mu)$. 
\end{remark}

\begin{corollary}\label{corgenE2n}
Suppose that the pair potential $\phi$ satisfies (SS), (I), (LR), (D$\text{L}^\text{2}$) and (LS). Let $\mu\in\mathcal{G}^{\scriptscriptstyle{gc}}_{\scriptscriptstyle{\text{ibp}}}(\Phi_{\scriptscriptstyle{\phi}},z\exp(-\phi)),~0<z<\infty$. Then for all $F,G\in\mathcal{F}C_b^\infty(C^\infty_0(\mathbb{R}^d),\Gamma)$ we have 
\begin{align*}
\mathcal{E}^{\scriptscriptstyle{\Gamma,\mu}}_{\scriptscriptstyle{\text{env}}}(F,G)=\int_\Gamma\left(\nabla^\Gamma F(\gamma),\nabla^\Gamma G(\gamma)\right)_{T_\gamma\Gamma}+\left(\nabla^\Gamma_\gamma F(\gamma),\nabla^\Gamma_\gamma G(\gamma)\right)_{\scriptscriptstyle{\mathbb{R}^d}}\,d\mu(\gamma)=\int_\Gamma -L^{\scriptscriptstyle{\Gamma,\mu}}_{\scriptscriptstyle{\text{env}}}F\,G\,d\mu.
\end{align*}
In particular, 
\begin{multline*}
L^{\scriptscriptstyle{\Gamma,\mu}}_{\scriptscriptstyle{\text{env}}}F(\gamma)=L^{\scriptscriptstyle{\Gamma,\mu}}_{\scriptscriptstyle{\text{gsdad}}}F(\gamma)
+\sum_{i,j=1}^N\partial_i\partial_j g_{\scriptscriptstyle{F}}\left(\langle f_1,\gamma\rangle,\ldots,\langle f_N,\gamma\rangle\right)\big(\langle\nabla f_i,\gamma\rangle,\langle\nabla f_j,\gamma\rangle\big)_{\mathbb{R}^d}\\
+\sum_{j=1}^N \partial_j g_{\scriptscriptstyle{F}}\left(\langle f_1,\gamma\rangle,\ldots,\langle f_N,\gamma\rangle\right)\Big(\langle\Delta f_j,\gamma\rangle-\big(\langle\nabla\phi,\gamma\rangle,\langle\nabla f_j,\gamma\rangle\big)_{\mathbb{R}^d}\Big)\quad\mbox{for }\mu\mbox{-a.e.~}\gamma\in\Gamma.
\end{multline*}
\end{corollary}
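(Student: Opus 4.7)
My plan is to exploit the additive decomposition of $\mathcal{E}^{\scriptscriptstyle{\Gamma,\mu}}_{\scriptscriptstyle{\text{env}}}$ built into its very definition, namely
\[
\mathcal{E}^{\scriptscriptstyle{\Gamma,\mu}}_{\scriptscriptstyle{\text{env}}}(F,G)=\mathcal{E}^{\scriptscriptstyle{\Gamma,\mu}}_{\scriptscriptstyle{\text{gsdad}}}(F,G)+\int_{\Gamma}\bigl(\nabgam F(\gamma),\nabgam G(\gamma)\bigr)_{\scriptscriptstyle{\mathbb{R}^d}}\,d\mu(\gamma),
\]
and to integrate by parts in each summand separately, using the two already-available IBP formulas: one for the standard gradient $\nabla^\Gamma$, and one for the uniform-translation gradient $\nabgam$.

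To the $\mathcal{E}^{\scriptscriptstyle{\Gamma,\mu}}_{\scriptscriptstyle{\text{gsdad}}}$-summand I would apply Corollary \ref{corgen}. Its hypothesis demands only $\mu\in\mathcal{G}^{\scriptscriptstyle{gc}}_{\scriptscriptstyle{\text{Rb}}}(\Phi_{\scriptscriptstyle{\phi}},z\exp(-\phi))$, which is satisfied here since $\mathcal{G}^{\scriptscriptstyle{gc}}_{\scriptscriptstyle{\text{ibp}}}\subset\mathcal{G}^{\scriptscriptstyle{gc}}_{\scriptscriptstyle{\text{Rb}}}$ by construction. This delivers $\mathcal{E}^{\scriptscriptstyle{\Gamma,\mu}}_{\scriptscriptstyle{\text{gsdad}}}(F,G)=-\int_\Gamma L^{\scriptscriptstyle{\Gamma,\mu}}_{\scriptscriptstyle{\text{gsdad}}}F\cdot G\,d\mu$ together with the explicit form of $L^{\scriptscriptstyle{\Gamma,\mu}}_{\scriptscriptstyle{\text{gsdad}}}$ already displayed there.

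To the $\nabgam$-summand I would invoke Lemma \ref{corintbp2} directly; its assumptions coincide exactly with those of the present corollary. Its right-hand side, read off as is, is precisely $-\int_\Gamma MF\cdot G\,d\mu$, where $M$ is the differential operator formed by the $\partial_i\partial_j g_F\bigl(\langle\nabla f_i,\gamma\rangle,\langle\nabla f_j,\gamma\rangle\bigr)_{\mathbb{R}^d}$-term and the $\partial_j g_F\bigl(\langle\Delta f_j,\gamma\rangle-\bigl(\langle\nabla\phi,\gamma\rangle,\langle\nabla f_j,\gamma\rangle\bigr)_{\mathbb{R}^d}\bigr)$-term in the statement of that lemma. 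Direct comparison with the present corollary shows that $M$ is exactly the additional piece appended to $L^{\scriptscriptstyle{\Gamma,\mu}}_{\scriptscriptstyle{\text{gsdad}}}$ in the definition of $L^{\scriptscriptstyle{\Gamma,\mu}}_{\scriptscriptstyle{\text{env}}}$.

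Summing the two identities and setting $L^{\scriptscriptstyle{\Gamma,\mu}}_{\scriptscriptstyle{\text{env}}}F:=L^{\scriptscriptstyle{\Gamma,\mu}}_{\scriptscriptstyle{\text{gsdad}}}F+MF$ simultaneously yields the bilinear identity and the claimed explicit formula. I see no substantive obstacle: the entire argument is the concatenation of two pre-existing IBP identities. The only point demanding a little care is that Lemma \ref{corintbp2} is stated asymmetrically — the $\nabgam$-derivatives have already been transferred off $G$ and onto $F$ — so one must apply it with $F$ in the role of the cylinder function to be differentiated, which is exactly the arrangement required to read off $-L^{\scriptscriptstyle{\Gamma,\mu}}_{\scriptscriptstyle{\text{env}}}F\cdot G$ on the right-hand side.
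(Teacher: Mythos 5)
Your proposal is correct and is exactly the paper's argument: the paper's proof of this corollary consists of the single line ``Combining Corollary \ref{corgen} and Lemma \ref{corintbp2} the statement follows,'' i.e., apply the $\nabla^\Gamma$-integration-by-parts result to the $\mathcal{E}^{\scriptscriptstyle{\Gamma,\mu}}_{\scriptscriptstyle{\text{gsdad}}}$-summand and Lemma \ref{corintbp2} to the $\nabla^\Gamma_\gamma$-summand, then add. Your additional remarks (that $\mathcal{G}^{\scriptscriptstyle{gc}}_{\scriptscriptstyle{\text{ibp}}}\subset\mathcal{G}^{\scriptscriptstyle{gc}}_{\scriptscriptstyle{\text{Rb}}}$ justifies invoking Corollary \ref{corgen}, and that Lemma \ref{corintbp2} is already stated with the derivatives on $F$) are accurate and only make explicit what the paper leaves implicit.
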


\begin{proof}
Combining Corollary \ref{corgen} and Lemma \ref{corintbp2} the statement follows.
\end{proof}

\begin{lemma}\label{lemE21}
$\left(\mathcal{E}^{\scriptscriptstyle{\Gamma,\mu}}_{\scriptscriptstyle{\text{env}}},\mathcal{F}C_b^{\infty,\mu}(C^\infty_0(\mathbb{R}^d),\Gamma)\right)$ is closable on $L^2(\Gamma,\mu)$ and its closure
$\left(\mathcal{E}^{\scriptscriptstyle{\Gamma,\mu}}_{\scriptscriptstyle{\text{env}}},D(\mathcal{E}^{\scriptscriptstyle{\Gamma,\mu}}_{\scriptscriptstyle{\text{env}}})\right)$ is a symmetric Dirichlet form which is conservative, i.e.,~$1\in D(\mathcal{E}^{\scriptscriptstyle{\Gamma,\mu}}_{\scriptscriptstyle{\text{env}}})$ and $\mathcal{E}^{\scriptscriptstyle{\Gamma,\mu}}_{\scriptscriptstyle{\text{env}}}(1,1)=0$. Its generator, denoted by ${H}^{\scriptscriptstyle{\Gamma,\mu}}_{\scriptscriptstyle{\text{env}}}$, is the Friedrichs' extension of $-L^{\scriptscriptstyle{\Gamma,\mu}}_{\scriptscriptstyle{\text{env}}}$.
\end{lemma}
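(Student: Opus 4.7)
The plan is to read off closability and the identification of the generator directly from the integration by parts identity in Corollary~\ref{corgenE2n}, and then to verify the Dirichlet property and conservativity by elementary chain-rule and constant-function arguments. The form splits as
\begin{align*}
\mathcal{E}^{\scriptscriptstyle{\Gamma,\mu}}_{\scriptscriptstyle{\text{env}}}(F,G) = \mathcal{E}^{\scriptscriptstyle{\Gamma,\mu}}_{\scriptscriptstyle{\text{gsdad}}}(F,G) + \int_\Gamma \bigl(\nabgam F(\gamma),\nabgam G(\gamma)\bigr)_{\mathbb{R}^d}\,d\mu(\gamma),
\end{align*}
both summands being densely defined, positive semi-definite and symmetric on the cylinder algebra (by Remark~\ref{remsymbienv}). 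Combining Corollary~\ref{corgenE2n} with the symmetry of the form yields
\begin{align*}
\mathcal{E}^{\scriptscriptstyle{\Gamma,\mu}}_{\scriptscriptstyle{\text{env}}}(F,G) = \bigl(-L^{\scriptscriptstyle{\Gamma,\mu}}_{\scriptscriptstyle{\text{env}}} F, G\bigr)_{L^2(\Gamma,\mu)}
\quad\text{for all } F,G\in\mathcal{F}C_b^\infty(C_0^\infty(\mathbb{R}^d),\Gamma),
\end{align*}
so $-L^{\scriptscriptstyle{\Gamma,\mu}}_{\scriptscriptstyle{\text{env}}}$ is a densely defined symmetric non-negative operator on $L^2(\Gamma,\mu)$. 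The standard Friedrichs construction (see \cite[Chap.~I, Sect.~3]{MaRo92}) then produces a closed extension of the bilinear form, which shows that $(\mathcal{E}^{\scriptscriptstyle{\Gamma,\mu}}_{\scriptscriptstyle{\text{env}}},\mathcal{F}C_b^{\infty,\mu}(C_0^\infty(\mathbb{R}^d),\Gamma))$ is closable and that the generator $H^{\scriptscriptstyle{\Gamma,\mu}}_{\scriptscriptstyle{\text{env}}}$ of the closure equals the Friedrichs extension of $-L^{\scriptscriptstyle{\Gamma,\mu}}_{\scriptscriptstyle{\text{env}}}$.

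Next I would verify the Dirichlet property by the standard contraction trick. Given $\varepsilon>0$, pick a smooth function $\varphi_\varepsilon\colon\mathbb{R}\to[-\varepsilon,1+\varepsilon]$ with $\varphi_\varepsilon(t)=t$ on $[0,1]$ and $0\le\varphi_\varepsilon'\le 1$. For $F=g_F(\langle f_1,\cdot\rangle,\ldots,\langle f_N,\cdot\rangle)\in\mathcal{F}C_b^\infty(C_0^\infty(\mathbb{R}^d),\Gamma)$ one has $\varphi_\varepsilon\circ F\in\mathcal{F}C_b^\infty(C_0^\infty(\mathbb{R}^d),\Gamma)$ and the chain rule (which holds for both $\nabla^\Gamma$ and $\nabgam$ by their very definitions on cylinder functions) gives
\begin{align*}
\nabla^\Gamma(\varphi_\varepsilon\circ F) = (\varphi_\varepsilon'\circ F)\,\nabla^\Gamma F,\qquad \nabgam(\varphi_\varepsilon\circ F) = (\varphi_\varepsilon'\circ F)\,\nabgam F,
\end{align*}
whence $\mathcal{E}^{\scriptscriptstyle{\Gamma,\mu}}_{\scriptscriptstyle{\text{env}}}(\varphi_\varepsilon\circ F,\varphi_\varepsilon\circ F)\le\mathcal{E}^{\scriptscriptstyle{\Gamma,\mu}}_{\scriptscriptstyle{\text{env}}}(F,F)$ pointwise. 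Passing to the closure extends the contraction estimate to all of $D(\mathcal{E}^{\scriptscriptstyle{\Gamma,\mu}}_{\scriptscriptstyle{\text{env}}})$, so the closure is a symmetric Dirichlet form. Conservativity is immediate: the constant function $1$ lies in $\mathcal{F}C_b^\infty(C_0^\infty(\mathbb{R}^d),\Gamma)$ and both $\nabla^\Gamma 1=0$ and $\nabgam 1=0$ by \eqref{defgrad} and the analogous formula for $\nabgam$, hence $\mathcal{E}^{\scriptscriptstyle{\Gamma,\mu}}_{\scriptscriptstyle{\text{env}}}(1,1)=0$.

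The step I expect to be most delicate is really the closability, because here the new gradient $\nabgam$ enters and we do not have independent closability of the second summand alone. The integration by parts formula of Lemma~\ref{corintbp2} (together with its counterpart for the first summand, already used in Theorem~\ref{thmform1}) is therefore essential: it provides precisely the symmetric operator representation needed to invoke the Friedrichs extension, bypassing any ad hoc closability argument for the two summands taken separately. All remaining properties are then routine consequences of the chain rule on cylinder functions.
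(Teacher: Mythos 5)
Your proposal is correct and takes essentially the same route as the paper: closability and the identification of the generator with the Friedrichs' extension of $-L^{\scriptscriptstyle{\Gamma,\mu}}_{\scriptscriptstyle{\text{env}}}$ are read off from the operator representation in Corollary \ref{corgenE2n}, the Dirichlet property from the chain rule for $\nabla^\Gamma$ and $\nabla^\Gamma_\gamma$ on cylinder functions, and conservativity from the vanishing of both gradients on the constant function $1$. You merely spell out the standard details (normal contractions, passage to the closure) that the paper's terse proof leaves implicit.
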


\begin{proof}
By Corollary \ref{corgenE2n} we have closability and the last part. The Dirichlet property immediately follows since $\nabla^\Gamma$ and $\nabla^\Gamma_\gamma$ fulfill the chain rule on $\mathcal{F}C_b^{\infty}(C^\infty_0(\mathbb{R}^d),\Gamma)$. Conservativity is obvious.
\end{proof}

\begin{remark}
Clearly, $\nabla^\Gamma$ and $\nabla^\Gamma_\gamma$ extend to linear operators on $D(\mathcal{E}^{\scriptscriptstyle{\Gamma,\mu}}_{\scriptscriptstyle{\text{env}}})$. We denote these extensions by the same symbols. Furthermore, note that since $\Gamma\subset\ddot{\Gamma}$ and $\mathcal{B}(\ddot{\Gamma})\cap\Gamma=\mathcal{B}(\Gamma)$ we can consider $\mu$ as a measure on $(\ddot{\Gamma},\mathcal{B}(\ddot{\Gamma}))$ and correspondingly $\left(\mathcal{E}^{\scriptscriptstyle{\Gamma,\mu}}_{\scriptscriptstyle{\text{env}}},D(\mathcal{E}^{\scriptscriptstyle{\Gamma,\mu}}_{\scriptscriptstyle{\text{env}}})\right)$ is a Dirichlet form on $L^2(\ddot{\Gamma},\mu)$. In particular, we have that $D(\mathcal{E}^{\scriptscriptstyle{\Gamma,\mu}}_{\scriptscriptstyle{\text{env}}})$ is the closure of $\mathcal{F}C_b^{\infty,\mu}(C_0^{\infty}(\mathbb{R}^d),\ddot{\Gamma})$ with respect to the norm $\sqrt{{\mathcal{E}^{\scriptscriptstyle{\Gamma,\mu}}_{\scriptscriptstyle{\text{env}}}}_1}$, where
\begin{align*}
{{\mathcal{E}^{\scriptscriptstyle{\Gamma,\mu}}_{\scriptscriptstyle{\text{env}}}}_1}(F):={\mathcal{E}^{\scriptscriptstyle{\Gamma,\mu}}_{\scriptscriptstyle{\text{env}}}(F,F)+(F,F)_{L^2(\ddot{\Gamma},\mu)}},\quad F\in D(\mathcal{E}^{\scriptscriptstyle{\Gamma,\mu}}_{\scriptscriptstyle{\text{env}}}).
\end{align*}
The corresponding generator of the Dirichlet form can also be considered as linear operator on $L^2(\ddot{\Gamma},\mu)$.
\end{remark}

\begin{lemma}\label{lemE22}
$\left(\mathcal{E}^{\scriptscriptstyle{\Gamma,\mu}}_{\scriptscriptstyle{\text{env}}},D(\mathcal{E}^{\scriptscriptstyle{\Gamma,\mu}}_{\scriptscriptstyle{\text{env}}})\right)$ is quasi-regular on $L^2(\ddot{\Gamma},\mu)$.
\end{lemma}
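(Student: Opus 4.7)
The plan is to verify the three conditions of quasi-regularity in \cite[Chap.~IV, Def.~3.1]{MaRo92} directly, following the strategy behind Theorem \ref{thmform1}(ii) (which in turn goes back to \cite[Coro.~4.9]{MaRo00}). Since $\mathcal{E}^{\scriptscriptstyle{\Gamma,\mu}}_{\scriptscriptstyle{\text{env}}}\ge\mathcal{E}^{\scriptscriptstyle{\Gamma,\mu}}_{\scriptscriptstyle{\text{gsdad}}}$ pointwise and $D(\mathcal{E}^{\scriptscriptstyle{\Gamma,\mu}}_{\scriptscriptstyle{\text{env}}})\subset D(\mathcal{E}^{\scriptscriptstyle{\Gamma,\mu}}_{\scriptscriptstyle{\text{gsdad}}})$, the $\mathcal{E}^{\scriptscriptstyle{\Gamma,\mu}}_{\scriptscriptstyle{\text{env}}}$-capacity dominates the $\mathcal{E}^{\scriptscriptstyle{\Gamma,\mu}}_{\scriptscriptstyle{\text{gsdad}}}$-capacity, so the nest property for the smaller form does not transfer for free; the additional $\nabla^\Gamma_\gamma$-contribution has to be controlled on the same cutoff functions used there.

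Conditions (ii) and (iii) of \cite[Chap.~IV, Def.~3.1]{MaRo92} are essentially automatic. By construction $\mathcal{F}C_b^{\infty,\mu}(C_0^\infty(\mathbb{R}^d),\Gamma)$ is $\mathcal{E}^{\scriptscriptstyle{\Gamma,\mu}}_{\scriptscriptstyle{\text{env}}}$-dense in $D(\mathcal{E}^{\scriptscriptstyle{\Gamma,\mu}}_{\scriptscriptstyle{\text{env}}})$, and each cylinder function admits a vaguely continuous (hence quasi-continuous) $\mu$-version. A countable separating family is given by cylinder functions of the form $\gamma\mapsto\psi(\langle f_n,\gamma\rangle)$ with $\{f_n\}\subset C_0^\infty(\mathbb{R}^d)$ a countable dense set, since the vague topology on $\ddot\Gamma$ is generated by $\gamma\mapsto\langle f,\gamma\rangle$, $f\in C_0(\mathbb{R}^d)$.

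The real work lies in condition (i), the construction of a nest of compacts. Take the vaguely compact sets $K_k\subset\ddot\Gamma$ from \cite[Sect.~4]{MaRo00}, obtained as sublevel sets of a functional of the form $\gamma\mapsto\sum_j a_j\,\langle\chi_j,\gamma\rangle$ with $\chi_j\in C_0^\infty(\mathbb{R}^d)$ supported on dyadic annuli and $a_j$ summable, together with the associated smooth cutoff functions $v_k\in D(\mathcal{E}^{\scriptscriptstyle{\Gamma,\mu}}_{\scriptscriptstyle{\text{env}}})$ equal to $1$ off $K_k$. One now needs $(\mathcal{E}^{\scriptscriptstyle{\Gamma,\mu}}_{\scriptscriptstyle{\text{env}}})_1(v_k,v_k)\to 0$. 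The $\mathcal{E}^{\scriptscriptstyle{\Gamma,\mu}}_{\scriptscriptstyle{\text{gsdad}}}$-piece is handled exactly as in \cite{MaRo00}. For the new piece $\int|\nabla^\Gamma_\gamma v_k|^2_{\mathbb{R}^d}\,d\mu$ we use that on a cylinder cutoff $v_k=\psi_k(\langle\chi_j,\cdot\rangle)$,
\[
\nabla^\Gamma_\gamma v_k(\gamma)=\sum_j\psi_k'(\langle\chi_j,\gamma\rangle)\,\langle\nabla\chi_j,\gamma\rangle,
\]
and then reduce the corresponding $L^2(\mu)$-norm, via the $K$-transform identity (\ref{eq302}) and the Ruelle bound (RB), to a combination of integrals against $\rho_\mu^{(1)}$ and $\rho_\mu^{(2)}$ which are controlled along the lines of the estimates in the proof of Lemma \ref{lemconv}. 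Summability of these estimates yields $\mathrm{cap}_{\mathcal{E}^{\scriptscriptstyle{\Gamma,\mu}}_{\scriptscriptstyle{\text{env}}}}(\ddot\Gamma\setminus K_k)\to 0$ via \cite[Chap.~III, Prop.~3.5]{MaRo92}, and $(K_k)$ is the required compact nest.

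The main obstacle is this new bilinear estimate: unlike $|\nabla^\Gamma v_k|^2_{T_\gamma\Gamma}$, which is a sum of local contributions and integrates against $\rho_\mu^{(1)}$ alone, the quantity $|\nabla^\Gamma_\gamma v_k|^2_{\mathbb{R}^d}$ carries cross-terms of the form $\langle\nabla\chi_j,\gamma\rangle\,\langle\nabla\chi_\ell,\gamma\rangle$ that integrate against the second-order correlation function. Exploiting the disjoint (dyadic-shell) supports of the $\chi_j$ together with (RB) to keep the overall bound summable is the one quantitative step that must be done with care; once this is achieved, the rest of the argument is the standard Dirichlet-form machinery on configuration spaces.
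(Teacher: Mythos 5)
Your plan is workable in outline, but it takes a genuinely different route from the paper. The paper never verifies the compact-nest condition by hand; it invokes the metric criterion behind \cite[Prop.~4.1]{MaRo00} and, modifying \cite[Prop.~4.8]{MaRo00}, builds a bounded complete metric $\bar{\rho}$ on $\ddot{\Gamma}$ generating the vague topology out of the functions $F_k(\cdot,\gamma_0)=\zeta\big(\sup_j|\langle\phi_k g_j,\cdot\rangle-\langle\phi_k g_j,\gamma_0\rangle|\big)$ scaled by constants $c_k$, and then shows $\bar{\rho}(\cdot,\gamma_0)\in D(\mathcal{E}^{\scriptscriptstyle{\Gamma,\mu}}_{\scriptscriptstyle{\text{env}}})$ with square field $S^\Gamma(\bar{\rho}(\cdot,\gamma_0))\le\eta$ for one fixed $\eta\in L^1(\ddot{\Gamma},\mu)$ independent of $\gamma_0$; compactness, completeness and the vague topology then come for free from \cite[Theo.~3.6]{MaRo00}. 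The quantitative core is the same as yours: the paper controls the new term through the pointwise bound $\big(\langle\nabla f,\gamma\rangle,\langle\nabla f,\gamma\rangle\big)_{\mathbb{R}^d}\le|\gamma_\Lambda|\,\big\langle(\nabla f,\nabla f)_{\mathbb{R}^d},\gamma\big\rangle$, so that $S^\Gamma\le(1+N_{B_{k+1}})\langle\tilde{\chi}_k^2,\cdot\rangle$, and the integrability of the right-hand side is exactly the $\rho_\mu^{(1)}$/$\rho_\mu^{(2)}$ estimate (Ruelle bound plus lower boundedness of $\phi$) that you propose to obtain via the $K$-transform; the cross-terms you worry about are precisely the $N_{B_{k+1}}\langle\tilde{\chi}_k^2,\cdot\rangle$ contribution integrating against $\rho_\mu^{(2)}$. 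Your route buys a more concrete nest; the paper's buys uniformity in $\gamma_0$ and avoids having to exhibit cutoff functions at all.

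Two points in your plan need repair before it is a proof. First, \cite[Sect.~4]{MaRo00} does not hand you cutoffs $v_k\in D(\mathcal{E}^{\scriptscriptstyle{\Gamma,\mu}}_{\scriptscriptstyle{\text{env}}})$ equal to $1$ off compacts; those sections provide the ingredients ($g_j$, $\phi_k$, $\tilde{\chi}_k$) for the metric construction. You must build the $v_k$ yourself, and since they are functions of an infinite sum $\sum_j a_j\langle\chi_j,\cdot\rangle$ they are not cylinder functions, so membership in the domain of the \emph{closure} $D(\mathcal{E}^{\scriptscriptstyle{\Gamma,\mu}}_{\scriptscriptstyle{\text{env}}})$ must be proved by approximation (finite suprema or partial sums), a uniform energy bound, and a Banach--Saks argument --- exactly the step the paper carries out via \cite[Lemm.~3.2]{RS95}; your plan is silent on it. Second, $\mathrm{cap}(\ddot{\Gamma}\setminus K_k)\to 0$ requires not only the energy but also $\|v_k\|_{L^2(\mu)}\to 0$, i.e.\ $\mu$-a.s.\ finiteness of the defining functional and $\mu(\ddot{\Gamma}\setminus K_k)\to 0$; this is where the Ruelle bound (temperedness) must be invoked explicitly. (Also, the displayed chain rule should produce one outer derivative, $\psi_k'\big(\sum_j a_j\langle\chi_j,\gamma\rangle\big)\sum_j a_j\langle\nabla\chi_j,\gamma\rangle$, rather than a sum of derivatives evaluated coordinatewise, but that is cosmetic.) With these additions your nest-and-capacity argument goes through and is a legitimate alternative to the paper's metric-based proof.
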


\begin{proof}
The Dirichlet form $(\mathcal{E}^{\scriptscriptstyle{\Gamma,\mu}}_{\scriptscriptstyle{\text{env}}},D(\mathcal{E}^{\scriptscriptstyle{\Gamma,\mu}}_{\scriptscriptstyle{\text{env}}})))$ is given by
\begin{align*}
\mathcal{E}^{\scriptscriptstyle{\Gamma,\mu}}_{\scriptscriptstyle{\text{env}}}(F,G):=\int_\Gamma S^\Gamma(F,G)\,d\mu,
\end{align*}
where
\begin{multline*}
S^\Gamma(F,G):=S^\Gamma_0(F,G)+\left(\nabla^\Gamma_\gamma F(\gamma),\nabla^\Gamma_\gamma G(\gamma)\right)_{\scriptscriptstyle{\mathbb{R}^d}}\quad\mbox{with}\\
S^\Gamma_0(F,G):=\left(\nabla^\Gamma F,\nabla^\Gamma G\right)_{T_\gamma\Gamma},\quad F,G\in D(\mathcal{E}^{\scriptscriptstyle{\Gamma,\mu}}_{\scriptscriptstyle{\text{env}}}).
\end{multline*}
To prove quasi-regularity analogously to \cite[Prop.~4.1]{MaRo00}, it suffices to show that there exists a
bounded, complete metric $\bar{\rho}$ on $\ddot{\Gamma}$ generating the vague topology such that $\bar{\rho}(\cdot,\gamma_0)\in D(\mathcal{E}^{\scriptscriptstyle{\Gamma,\mu}}_{\scriptscriptstyle{\text{env}}})$ for all $\gamma_0\in\ddot\Gamma$ and
\begin{align*}
S^\Gamma(\bar{\rho}(\cdot,\gamma_0),\bar{\rho}(\cdot,\gamma_0))\le\eta\quad\mu-\mbox{a.e.}
\end{align*}
for some $\eta\in L^1(\ddot{\Gamma},\mu)$ (independent of $\gamma_0$).
The proof below is a modification of \cite[Prop.~4.8]{MaRo00}. Hence we also use the notation proposed there.
Thus $(B_k)_{k\in\mathbb{N}}$ is an exhausting sequence, i.e.~$(B_k)_{k\in\mathbb{N}}$ is an increasing sequence of open sets such that $\bigcup_{k\in\mathbb{N}}B_k=\mathbb{R}^d$. Furthermore, since $B^{\frac{1}{2}}_k\subset B_{k+1}$ for all $k\in\mathbb{N}$, $(B_k)_{k\in\mathbb{N}}$ is a well-exhausting sequence in the sense of \cite{MaRo00} with $\delta_k=\frac{1}{2}$ for all $k\in\mathbb{N}$. Here $B^{\frac{1}{2}}_k:=B_{k+\frac{1}{2}}$. For each $k\in\mathbb{N}$ we define
\begin{align*}
g_k(x):=g_{B_k,\frac{1}{2}}(x):=\frac{2}{3}\left(\frac{1}{2}-\text{dist}(x,B_k)\wedge\frac{1}{2}\right),\quad x\in\mathbb{R}^d,
\end{align*}
and $\phi_k:=3g_k$. Furthermore, we set $S(f,g):=\big(\nabla f,\nabla g \big)_{\mathbb{R}^d}$ for $f,g\in W_0^{1,2}(\mathbb{R}^d)$, where $W_0^{1,2}(\mathbb{R}^d)$ denotes the Sobolev space of compactly supported, weakly differentiable functions in $L^2(\mathbb{R}^d,dx)$ with weak derivative again in $L^2(\mathbb{R}^d,dx)$. Due to \cite[Exam.~4.5.1]{MaRo00} we have that \cite[Cond.~(Q)]{MaRo00} holds with $S$ as given above. Moreover, due to \cite[Lemm.~4.10]{MaRo00}
\begin{align*}
\phi_k g_j\in W^{1,2}_0(\mathbb{R}^d)\quad\mbox{and}\quad S(\phi_k g_j):=S(\phi_k g_j,\phi_k g_j)\le\tilde{\chi_k}\quad\mbox{for all }k,j\in\mathbb{N},
\end{align*}
where $\tilde{\chi}_{k}:=4\chi_k\Big(\sqrt{S(\chi_k)}+C\,\big(\chi_k+\sqrt{S(\chi_k)}\big)\Big)$ with $\chi_k\in C_0^\infty(\mathbb{R}^d)$ and $C\in (0,\infty)$ as in \cite[Cond.~(Q)]{MaRo00}.
For any function $f\in W_0^{1,2}(\mathbb{R}^d)$ we have $\langle f,\cdot\rangle\in D(\mathcal{E}^{\scriptscriptstyle{\Gamma,\mu}}_{\scriptscriptstyle{\text{env}}})$, since $\mu$ fulfills a Ruelle bound. Hence we can consider
\begin{align*}
S^\Gamma(\langle f,\cdot\rangle):=S^\Gamma(\langle f,\cdot\rangle,\langle f,\cdot\rangle)
\end{align*}
and obtain
\begin{align*}
S^\Gamma(\langle f,\cdot\rangle)=\left\langle\big(\nabla f,\nabla f\big)_{\mathbb{R}^d},\cdot\right\rangle+\big(\langle\nabla f,\cdot\rangle,\langle\nabla f,\cdot\rangle\big)_{\mathbb{R}^d}.
\end{align*}
For $\gamma\in\Gamma$ and $\Lambda:=\text{supp~}f,~f\in W^{1,2}_0(\mathbb{R}^d)$, we have
\begin{multline*}
\big(\langle\nabla f,\gamma\rangle,\langle\nabla f,\gamma\rangle\big)_{\mathbb{R}^d}=\sum_{x\in\gamma}\sum_{y\in\gamma}\big(\nabla f(x),\nabla f(y)\big)_{\mathbb{R}^d}\\
\le\sum_{x\in\gamma}\sum_{y\in\gamma}\sqrt{\big(\nabla f(x),\nabla f(x)\big)_{\mathbb{R}^d}}\cdot\sqrt{\big(\nabla f(y),\nabla f(y)\big)_{\mathbb{R}^d}}\\
=\sum_{x\in\gamma}\sqrt{\big(\nabla f(x),\nabla f(x)\big)_{\mathbb{R}^d}}\cdot\sum_{y\in\gamma}\sqrt{\big(\nabla f(y),\nabla f(y)\big)_{\mathbb{R}^d}}\\
=\left|\gamma_\Lambda\right|^2\left(\frac{1}{\left|\gamma_\Lambda\right|}\sum_{x\in\gamma}\sqrt{\big(\nabla f(x),\nabla f(x)\big)_{\mathbb{R}^d}}\right)^2\le\left|\gamma_\Lambda\right|\sum_{x\in\gamma}\big(\nabla f(x),\nabla f(x)\big)_{\mathbb{R}^d}\\
=\left|\gamma_\Lambda\right|\left\langle\big(\nabla f,\nabla f\big)_{\mathbb{R}^d},\gamma\right\rangle,
\end{multline*}
where we have used Jensen's inequality.
Finally,
\begin{align*}
S^\Gamma(\langle f,\cdot\rangle)\le \left(1+\left|\gamma_\Lambda\right|\right)\cdot\left\langle\big(\nabla f,\nabla f\big)_{\mathbb{R}^d},\cdot\right\rangle=\left(1+\left|\gamma_\Lambda\right|\right)\cdot\left\langle S(f),\cdot\right\rangle,\quad \Lambda=\text{supp~}f,~f\in W^{1,2}_0(\mathbb{R}^d).
\end{align*}
Next we fix a function $\zeta\in C_b^\infty(\mathbb{R})$ such that $0\le \zeta\le 1$ on $[0,\infty)$, $\zeta(t)=t$ on $\left[-\frac{1}{2},\frac{1}{2}\right]$, $\zeta'>0$ and $\zeta''\le 0$. Here $C_b^\infty(\mathbb{R})$ denotes the set of bounded, continuous functions on $\mathbb{R}^d$ which are infinitely often continuously differentiable. Using an argumentation as in \cite[Lemm.~3.2]{RS95} we have that for any fixed $\gamma_0\in\ddot{\Gamma}$ and for any $k,n\in\mathbb{N}$ the restriction to $\Gamma$ of the function
\begin{align*}
\zeta\left(\sup_{j\le n}\big|\langle\phi_k\,g_j,\cdot\rangle-\langle\phi_k\,g_j,\gamma_0\rangle\big|\right)
\end{align*}
belongs to $D(\mathcal{E}^{\scriptscriptstyle{\Gamma,\mu}}_{\scriptscriptstyle{\text{env}}})$. Furthermore, we obtain
\begin{multline*}
S^\Gamma\left(\zeta\left(\sup_{j\le n}\big|\langle\phi_k\,g_j,\cdot\rangle-\langle\phi_k\,g_j,\gamma_0\rangle\big|\right)\right)\\
\le\left(1+N_{B_{k+1}}(\cdot)\right) S^\Gamma_0\left(\zeta\left(\sup_{j\le n}\big|\langle\phi_k\,g_j,\cdot\rangle-\langle\phi_k\,g_j,\gamma_0\rangle\big|\right)\right)\quad\mu\mbox{-a.e.},
\end{multline*}
since $\phi_k g_j,~k,j\in\mathbb{N}$, having support in $B_{k+1}$. Here, as usual, 
$N_B:\Gamma\to \mathbb{N}_0\cup\{+\infty\}$ is given by $N_B(\gamma):=\gamma(B)$, where $B\in\mathcal{B}(\mathbb{R}^d)$.
Due to \cite[(4.7)]{MaRo00} we have
\begin{align*}
S^\Gamma_0\left(\zeta\left(\sup_{j\le n}\big|\langle\phi_k\,g_j,\cdot\rangle-\langle\phi_k\,g_j,\gamma_0\rangle\big|\right)\right)\le\left\langle\tilde{\chi}_k^2,\cdot\right\rangle\quad\mu\mbox{-a.e.}
\end{align*}
Thus
\begin{align}\label{esti}
S^\Gamma\left(\zeta\left(\sup_{j\le n}\big|\langle\phi_k\,g_j,\cdot\rangle-\langle\phi_k\,g_j,\gamma_0\rangle\big|\right)\right)\le \left(1+N_{B_{k+1}}(\cdot)\right)\left\langle\tilde{\chi}_k^2,\cdot\right\rangle\quad\mu\mbox{-a.e.}
\end{align}
For $\gamma,\gamma_0\in\ddot{\Gamma}$ and $k\in\mathbb{N}$ we set 
\begin{align*}
F_k(\gamma,\gamma_0):=\zeta\Big(\sup_{j\in\mathbb{N}}\big|\langle\phi_k g_j,\gamma\rangle-\langle\phi_k g_k,\gamma_0\rangle\big|\Big)
\end{align*}
and for a fixed $\gamma_0\in\ddot{\Gamma}$
\begin{align*}
\zeta\left(\sup_{j\le n}\big|\langle\phi_k\,g_j,\gamma\rangle-\langle\phi_k\,g_j,\gamma_0\rangle\big|\right)\to F_k(\gamma,\gamma_0)\quad\mbox{as }n\to\infty\mbox{ for all }\gamma\in\ddot{\Gamma},
\end{align*}
and in $L^2(\ddot{\Gamma},\mu)$. Hence by (\ref{esti}) and the Banach-Saks theorem, $F_k(\cdot,\gamma_0)\in D(\mathcal{E}^{\scriptscriptstyle{\Gamma,\mu}}_{\scriptscriptstyle{\text{env}}}))$ and
\begin{align}\label{esti2}
S^\Gamma(F_k(\cdot,\gamma_0))\le\left(1+N_{B_{k+1}}(\cdot)\right)\left\langle\tilde{\chi}_{k}^2,\cdot\right\rangle\quad\mu\mbox{-a.e..}
\end{align}
Next let us define
\begin{multline*}
c_k:=\Big(1+\frac{1}{2}\int_{\mathbb{R}^{2d}}1_{B_{k+1}}(x_1)\tilde{\chi}_{k}^2(x_2)\rho_{\mu}^{\scriptscriptstyle{(2)}}(x_1,x_2)\exp(-\phi)(x_1)\exp(-\phi)(x_2)\,dx_1\,dx_2\\
+\int_{\mathbb{R}^d}1_{B_{k+1}}(x_1)\tilde{\chi}_{k}^2(x_1)\rho_{\mu}^{\scriptscriptstyle{(1)}}(x_1)\exp(-\phi)(x_1)\,dx_1\\
+\int_{\mathbb{R}^d}\tilde{\chi}_k^2(x_1)\rho_{\mu}^{\scriptscriptstyle{(1)}}(x_1)\exp(-\phi)(x_1)\,dx_1\Big)^{-\frac{1}{2}}2^{-\frac{k}{2}},\quad k\in\mathbb{N}.
\end{multline*}
Note that since $\mu$ fulfills a Ruelle bound and $\phi$ is bounded from below $(c_k)_{k\in\mathbb{N}}$ is a sequence of positive real numbers converging to $0$ as $k\to\infty$.
For $\gamma_1,\gamma_2\in\ddot{\Gamma}$ we define
\begin{align*}
\bar{\rho}(\gamma_1,\gamma_2):=\sup_{k\in\mathbb{N}} c_k\,F_k(\gamma_1,\gamma_2).
\end{align*}
By \cite[Theo.~3.6]{MaRo00}, $\bar{\rho}$ is a bounded, complete metric on $\ddot{\Gamma}$ generating the vague
topology.
Furthermore,
\begin{multline*}
S^\Gamma(c_k\,F_k(\cdot,\gamma_0))\le 2^{-k}\Big(1+\frac{1}{2}\int_{\mathbb{R}^{2d}}\!\!\!1_{B_{k+1}}(x_1)\tilde{\chi}_{k}^2(x_2)\rho_{\mu}^{\scriptscriptstyle{(2)}}(x_1,x_2)\exp(-\phi)(x_1)\exp(-\phi)(x_2)\,dx_1\,dx_2\\
+\int_{\mathbb{R}^d}1_{B_{k+1}}(x_1)\tilde{\chi}_{k}^2(x_1)\rho_{\mu}^{\scriptscriptstyle{(1)}}(x_1)\exp(-\phi)(x_1)\,dx_1
+\int_{\mathbb{R}^d}\tilde{\chi}_{k}^2(x_1)\rho_{\mu}^{\scriptscriptstyle{(1)}}(x_1)\exp(-\phi)(x_1)\,dx_1\Big)^{-1}\\
\times\left(N_{B_{k+1}}(\cdot)+1\right)\left\langle\tilde{\chi}_{k}^2,\cdot\right\rangle\\
\le\sup_{k\in\mathbb{N}}\Bigg(2^{-k}\Big(1+\frac{1}{2}\int_{\mathbb{R}^{2d}}1_{B_{k+1}}(x_1)\tilde{\chi}_{k}^2(x_2)\rho_{\mu}^{\scriptscriptstyle{(2)}}(x_1,x_2)\exp(-\phi)(x_1)\exp(-\phi)(x_2)\,dx_1\,dx_2\\
+\int_{\mathbb{R}^d}1_{B_{k+1}}(x_1)\tilde{\chi}_{k}^2(x_1)\rho_{\mu}^{\scriptscriptstyle{(1)}}(x_1)\exp(-\phi)(x_1)\,dx_1
+\int_{\mathbb{R}^d}\tilde{\chi}_{k}^2(x_1)\rho_{\mu}^{\scriptscriptstyle{(1)}}(x_1)\exp(-\phi)(x_1)\,dx_1\Big)^{-1}\\
\times\left(N_{B_{k+1}}(\cdot)+1\right)\left\langle\tilde{\chi}_{k}^2,\cdot\right\rangle\Bigg)=:\eta\quad\mu\mbox{-a.e.}
\end{multline*}
by (\ref{esti2}). Thus  by \cite[Lemm.~3.2]{RS95} we have for all $n\in\mathbb{N}$
\begin{align*}
S^\Gamma\left(\sup_{k\le n}c_k F_k(\cdot,\gamma_0)\right)\le\sup_{k\le n}S^\Gamma\left(c_k F_k(\cdot,\gamma_0)\right)\le\sup_{k\in\mathbb{N}}S^\Gamma\left(c_k F_k(\cdot,\gamma_0)\right)\le\eta\quad\mu\mbox{-a.e.}
\end{align*}
But $\sup_{k\le n}c_k\,F_k(\cdot,\gamma_0)\to\bar{\rho}(\cdot,\gamma_0)$ as $n\to\infty$ pointwisely and in $L^2(\ddot{\Gamma},\mu)$. Thus $\bar{\rho}(\cdot,\gamma_0)\in D(\mathcal{E}^{\scriptscriptstyle{\Gamma,\mu}}_{\scriptscriptstyle{\text{env}}})$ and $S^\Gamma(\bar{\rho}(\cdot,\gamma_0))\le\eta$, by the Banach-Saks theorem, since  
\begin{multline*}
\int_\Gamma\eta\,d\mu
\le\sum_{k=1}^\infty 2^{-k}\Big(1+\frac{1}{2}\int_{\mathbb{R}^{2d}}1_{B_{k+1}}(x_1)\tilde{\chi}_{k}^2(x_2)\rho_{\mu}^{\scriptscriptstyle{(2)}}(x_1,x_2)\exp(-\phi)(x_1)\exp(-\phi)(x_2)\,dx_1\,dx_2\\
+\int_{\mathbb{R}^d}1_{B_{k+1}}(x_1)\tilde{\chi}_{k}^2(x_1)\rho_{\mu}^{\scriptscriptstyle{(1)}}(x_1)\exp(-\phi)(x_1)\,dx_1
+\int_{\mathbb{R}^d}\tilde{\chi}_{k}^2(x_1)\rho_{\mu}^{\scriptscriptstyle{(1)}}(x_1)\exp(-\phi)(x_1)\,dx_1\Big)^{-1}\\
\times\Big(\frac{1}{2}\int_{\mathbb{R}^{2d}}1_{B_{k+1}}(x_1)\tilde{\chi}_{k}^2(x_2)\rho_{\mu}^{\scriptscriptstyle{(2)}}(x_1,x_2)\exp(-\phi)(x_1)\exp(-\phi)(x_2)\,dx_1\,dx_2\\
+\int_{\mathbb{R}^d}1_{B_{k+1}}(x_1)\tilde{\chi}_{k}^2(x_1)\rho_{\mu}^{\scriptscriptstyle{(1)}}(x_1)\exp(-\phi)(x_1)\,dx_1
+\int_{\mathbb{R}^d}\tilde{\chi}_{k}^2(x_1)\rho_{\mu}^{\scriptscriptstyle{(1)}}(x_1)\exp(-\phi)(x_1)\,dx_1\Big)<\infty.
\end{multline*}
\end{proof}

\begin{lemma}\label{lemE23}
$\left(\mathcal{E}^{\scriptscriptstyle{\Gamma,\mu}}_{\scriptscriptstyle{\text{env}}},D(\mathcal{E}^{\scriptscriptstyle{\Gamma,\mu}}_{\scriptscriptstyle{\text{env}}})\right)$ is local, i.e., $\mathcal{E}^{\scriptscriptstyle{\Gamma,\mu}}_{\scriptscriptstyle{\text{env}}}(F,G)=0$ provided $F,G\in D(\mathcal{E}^{\scriptscriptstyle{\Gamma,\mu}}_{\scriptscriptstyle{\text{env}}})$ with\\ $\text{supp}(|F|\cdot\mu)\cap\text{supp}(|G|\cdot\mu)=\varnothing$.
\end{lemma}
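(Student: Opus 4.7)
The natural strategy is to exploit the additive decomposition
\begin{equation*}
\mathcal{E}^{\scriptscriptstyle{\Gamma,\mu}}_{\scriptscriptstyle{\text{env}}}(F,G)=\mathcal{E}^{\scriptscriptstyle{\Gamma,\mu}}_{\scriptscriptstyle{\text{gsdad}}}(F,G)+\int_\Gamma\big(\nabla^{\Gamma}_{\gamma}F(\gamma),\nabla^{\Gamma}_{\gamma}G(\gamma)\big)_{\mathbb{R}^d}\,d\mu(\gamma)=:\mathcal{E}^{\scriptscriptstyle{\Gamma,\mu}}_{\scriptscriptstyle{\text{gsdad}}}(F,G)+\mathcal{E}^{\text{ut}}(F,G).
\end{equation*}
Both summands are non-negative symmetric bilinear forms, so locality for $\mathcal{E}^{\scriptscriptstyle{\Gamma,\mu}}_{\scriptscriptstyle{\text{env}}}$ reduces to locality of each piece separately. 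Since $D(\mathcal{E}^{\scriptscriptstyle{\Gamma,\mu}}_{\scriptscriptstyle{\text{env}}})\subset D(\mathcal{E}^{\scriptscriptstyle{\Gamma,\mu}}_{\scriptscriptstyle{\text{gsdad}}})$, Theorem \ref{thmform1}(iii) already gives the desired vanishing of the first summand when $\text{supp}(|F|\cdot\mu)\cap\text{supp}(|G|\cdot\mu)=\varnothing$.

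What remains is to argue that $\mathcal{E}^{\text{ut}}$ is also local. I would mimic the proof of Theorem \ref{thmform1}(iii), which in turn just invokes \cite[Chap.~V,~Exam.~1.12(ii)]{MaRo92}. The only input needed in that template is that the underlying gradient is a derivation on bounded elements of the Dirichlet form's domain, i.e.\ it satisfies the product rule and, by truncation, the chain rule with $C^1$-functions of one variable. For $\nabla^{\Gamma}_{\gamma}$ this is immediate on cylinder functions from its explicit expression $\nabla^{\Gamma}_{\gamma}F(\gamma)=\sum_{i=1}^N\partial_ig_F(\langle f_1,\gamma\rangle,\ldots,\langle f_N,\gamma\rangle)\langle\nabla f_i,\gamma\rangle\in\mathbb{R}^d$, and extends to bounded functions in $D(\mathcal{E}^{\scriptscriptstyle{\Gamma,\mu}}_{\scriptscriptstyle{\text{env}}})$ by a standard approximation argument combined with Lemma \ref{corintbp2}.

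With these derivation properties at hand, the closing argument is the usual one: reduce (by the Dirichlet property and chain rule truncations $t\mapsto(-n)\vee t\wedge n$) to bounded $F,G\in D(\mathcal{E}^{\scriptscriptstyle{\Gamma,\mu}}_{\scriptscriptstyle{\text{env}}})$ with disjoint essential supports, so that $FG=0$ $\mu$-a.e. Apply smooth truncations $\phi_n\in C^1_b(\mathbb{R})$ with $\phi_n(0)=0$, $\phi'_n(0)=0$, $\phi_n(t)\to t$ pointwise, so that $\nabla^{\Gamma}_{\gamma}\phi_n(F)=\phi'_n(F)\nabla^{\Gamma}_{\gamma}F$ vanishes $\mu$-a.e.\ on $\{F=0\}\supseteq\text{supp}(|G|\cdot\mu)$, and likewise with the roles of $F$ and $G$ exchanged. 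Passing to the limit yields $(\nabla^{\Gamma}_{\gamma}F,\nabla^{\Gamma}_{\gamma}G)_{\mathbb{R}^d}=0$ $\mu$-a.e.\ on $\text{supp}(|F|\cdot\mu)\cup\text{supp}(|G|\cdot\mu)$, and the integrand is trivially zero on the complement. Hence $\mathcal{E}^{\text{ut}}(F,G)=0$.

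The only genuinely delicate step I expect to write carefully is the extension of the chain rule $\nabla^{\Gamma}_{\gamma}\phi(F)=\phi'(F)\nabla^{\Gamma}_{\gamma}F$ from smooth cylinder functions to bounded elements of $D(\mathcal{E}^{\scriptscriptstyle{\Gamma,\mu}}_{\scriptscriptstyle{\text{env}}})$. This is not harder than the corresponding extension for $\nabla^{\Gamma}$ used implicitly in Theorem \ref{thmform1}(iii), since $\nabla^{\Gamma}_{\gamma}$ is essentially a finite-dimensional readout of $\nabla^\Gamma$ (pairing with the constant vector fields of $V_0(\mathbb{R}^d)$), but it requires some care to ensure the $L^2$-convergence of the vector-valued gradient along an approximating sequence of cylinder functions. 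Once this is established, the rest of the argument is purely mechanical.
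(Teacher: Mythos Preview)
Your approach is correct and close in spirit to the paper's, though the execution differs. The paper does not split $\mathcal{E}^{\scriptscriptstyle{\Gamma,\mu}}_{\scriptscriptstyle{\text{env}}}$ into the two summands and treat them separately; instead it works directly with the combined square field
\[
S^\Gamma(F,G)=\big(\nabla^\Gamma F,\nabla^\Gamma G\big)_{T_\gamma\Gamma}+\big(\nabla^\Gamma_\gamma F,\nabla^\Gamma_\gamma G\big)_{\mathbb{R}^d}
\]
and modifies the proof of \cite[Prop.~4.12]{MaRo00}, invoking the estimates developed in Lemma~\ref{lemE22} (in particular the comparison $S^\Gamma(\langle f,\cdot\rangle)\le(1+|\gamma_\Lambda|)\,\langle S(f),\cdot\rangle$) to control the carr\'e du champ on the larger form. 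Your decomposition has the virtue of reusing Theorem~\ref{thmform1}(iii) off the shelf and isolating the genuinely new piece $\mathcal{E}^{\text{ut}}$; the paper's route keeps the form as a single object, stays within the general \cite{MaRo00} framework, and sidesteps the need to verify separately that $\nabla^\Gamma_\gamma$ extends as a closed operator on $D(\mathcal{E}^{\scriptscriptstyle{\Gamma,\mu}}_{\scriptscriptstyle{\text{env}}})$ with the chain rule intact. Both arguments ultimately rest on the same derivation property, and the ``delicate step'' you flag---extending the chain rule for $\nabla^\Gamma_\gamma$ to bounded elements of the domain---is handled implicitly in the paper by working with $S^\Gamma$ rather than the individual gradients.
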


\begin{proof}
The proof is a  simple modification of the proof of \cite[Prop.~4.12]{MaRo00}, where similar arguments as in the proof of Lemma \ref{lemE22} are used.
\end{proof}

Combining Lemmas \ref{lemE21}, \ref{lemE22} and \ref{lemE23} we obtain

\begin{theorem}\label{thmdirE2}
Suppose that the pair potential $\phi$ satisfies (SS), (I), (LR), (D$\text{L}^\text{2}$) and (LS) and let $\mu\in\mathcal{G}^{\scriptscriptstyle{gc}}_{\scriptscriptstyle{\text{ibp}}}(\Phi_{\scriptscriptstyle{\phi}},z\exp(-\phi)),~0<z<\infty$. Then
$\left(\mathcal{E}^{\scriptscriptstyle{\Gamma,\mu}}_{\scriptscriptstyle{\text{env}}},D(\mathcal{E}^{\scriptscriptstyle{\Gamma,\mu}}_{\scriptscriptstyle{\text{env}}})\right)$ is a local, quasi-regular, symmetric Dirichlet form which is conservative, i.e.,~$1\in D(\mathcal{E}^{\scriptscriptstyle{\Gamma,\mu}}_{\scriptscriptstyle{\text{env}}})$ and $\mathcal{E}^{\scriptscriptstyle{\Gamma,\mu}}_{\scriptscriptstyle{\text{env}}}(1,1)=0$. Its generator, denoted by ${H}^{\scriptscriptstyle{\Gamma,\mu}}_{\scriptscriptstyle{\text{env}}}$, is the Friedrichs' extension of $-L^{\scriptscriptstyle{\Gamma,\mu}}_{\scriptscriptstyle{\text{env}}}$.
\end{theorem}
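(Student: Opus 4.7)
The plan is essentially administrative: the three preceding lemmas have already discharged all the substantive analytic work, so what remains is to assemble them into the full conclusion.

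First I would invoke Lemma \ref{lemE21} to obtain closability of $\left(\mathcal{E}^{\scriptscriptstyle{\Gamma,\mu}}_{\scriptscriptstyle{\text{env}}},\mathcal{F}C_b^{\infty,\mu}(C^\infty_0(\mathbb{R}^d),\Gamma)\right)$ on $L^2(\Gamma,\mu)$, the symmetric Dirichlet-form property of its closure, conservativity (so that $1\in D(\mathcal{E}^{\scriptscriptstyle{\Gamma,\mu}}_{\scriptscriptstyle{\text{env}}})$ and $\mathcal{E}^{\scriptscriptstyle{\Gamma,\mu}}_{\scriptscriptstyle{\text{env}}}(1,1)=0$), and the identification of the generator $H^{\scriptscriptstyle{\Gamma,\mu}}_{\scriptscriptstyle{\text{env}}}$ as the Friedrichs extension of $-L^{\scriptscriptstyle{\Gamma,\mu}}_{\scriptscriptstyle{\text{env}}}$. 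Next I would cite Lemma \ref{lemE22} to upgrade to quasi-regularity over $L^2(\ddot{\Gamma},\mu)$, and then Lemma \ref{lemE23} to conclude locality.

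The only book-keeping subtlety I need to flag is the switch between the two ambient state spaces: Lemma \ref{lemE21} lives on $L^2(\Gamma,\mu)$ whereas quasi-regularity is formulated over the larger $L^2(\ddot{\Gamma},\mu)$. This is harmless because, as the remark preceding Lemma \ref{lemE22} observes, $\mu$ is concentrated on $\Gamma$ and $\mathcal{B}(\ddot{\Gamma})\cap\Gamma=\mathcal{B}(\Gamma)$, so the form, its domain, its generator, and each of the structural properties above transfer canonically to $L^2(\ddot{\Gamma},\mu)$ without any alteration.

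If there were a genuine obstacle it would lie upstream, not in the theorem itself: the hard step in this chain is Lemma \ref{lemE22}, where the additional term $\left(\nabgam F,\nabgam G\right)_{\mathbb{R}^d}$ in the bilinear form forces a Jensen-type estimate producing the factor $(1+|\gamma_\Lambda|)$ in the bound on the carré du champ of $\langle f,\cdot\rangle$, after which the Röckner--Schied style construction of a bounded complete metric $\bar{\rho}$ generating the vague topology — with $S^\Gamma(\bar{\rho}(\cdot,\gamma_0))$ dominated by a $\mu$-integrable function via the Ruelle bound — carries through. At the level of the present theorem, however, all three ingredients are already in hand, and the assertion follows simply by citing them in turn.
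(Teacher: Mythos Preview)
Your proposal is correct and mirrors the paper's own treatment: the theorem is stated immediately after the sentence ``Combining Lemmas \ref{lemE21}, \ref{lemE22} and \ref{lemE23} we obtain'', with no further proof given. Your additional remark about the harmless passage from $L^2(\Gamma,\mu)$ to $L^2(\ddot{\Gamma},\mu)$ is accurate and matches the remark preceding Lemma \ref{lemE22}; the only slip is the attribution ``R\"ockner--Schied'' where the relevant references are Ma--R\"ockner \cite{MaRo00} and R\"ockner--Schmuland \cite{RS95}.
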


\begin{theorem}\label{thmexpro2}
Suppose the assumptions of Theorem \ref{thmdirE2}. Then
\begin{enumerate}
\item[(i)]
there exists a conservative diffusion process 
\begin{align*}
\mathbf{{M}}^{\scriptscriptstyle{\Gamma,\mu}}_{\scriptscriptstyle{\text{env}}}=\left(\mathbf{{\Omega}},\mathbf{{F}}^{\scriptscriptstyle{\text{env}}},(\mathbf{{F}}^{\scriptscriptstyle{\text{env}}}_t)_{t\ge 0},(\mathbf{X}^{\scriptscriptstyle{\text{env}}}_t)_{t\ge 0},(\mathbf{{P}}^{\scriptscriptstyle{\text{env}}}_\gamma)_{\gamma\in\ddot{\Gamma}}\right)
\end{align*}
on $\ddot{\Gamma}$ which is properly associated with 
$\left(\mathcal{E}^{\scriptscriptstyle{\Gamma,\mu}}_{\scriptscriptstyle{\text{env}}},D(\mathcal{E}^{\scriptscriptstyle{\Gamma,\mu}}_{\scriptscriptstyle{\text{env}}})\right)$, i.e., for all ($\mu$-versions of) $F\in L^2(\ddot{\Gamma},\mu)$ and all $t>0$ the function
\begin{align*}
\gamma\mapsto p_t^{\scriptscriptstyle{\text{env}}} F(\gamma):=\int_{{\mathbf{\Omega}}}F({\mathbf{X}}^{\scriptscriptstyle{\text{env}}}_t)\,d{\mathbf{{P}}}^{\scriptscriptstyle{\text{env}}}_\gamma,\quad\gamma\in\ddot{\Gamma},
\end{align*}
is an $\mathcal{E}^{\scriptscriptstyle{\Gamma,\mu}}_{\scriptscriptstyle{\text{env}}}$-quasi-continuous version of $\exp(-t {H}^{\scriptscriptstyle{\Gamma,\mu}}_{\scriptscriptstyle{\text{env}}})F$. $\mathbf{{M}}^{\scriptscriptstyle{\Gamma,\mu}}_{\scriptscriptstyle{\text{env}}}$ is up to $\mu$-equivalence unique (cf.~\cite[Chap.~IV,~Sect.~6]{MaRo92}). In particular, $\mathbf{{M}}^{\scriptscriptstyle{\Gamma,\mu}}_{\scriptscriptstyle{\scriptscriptstyle{\text{env}}}}$ is $\mu$-symmetric, i.e.,
\begin{align*}
\int_{\ddot{\Gamma}} G\,p^{\scriptscriptstyle{\text{env}}}_t F\,d\mu(\gamma)=\int_{\ddot{\Gamma}}F\,p^{\scriptscriptstyle{\text{env}}}_t G\,d\mu(\gamma)\quad\mbox{for all }F,G:\ddot{\Gamma}\to\mathbb{R_+},~\mathcal{B}(\ddot{\Gamma})\mbox{-measurable}
\end{align*}
and has $\mu$ as invariant measure.
\item[(ii)]
$\mathbf{{M}}^{\scriptscriptstyle{\Gamma,\mu}}_{\scriptscriptstyle{\text{env}}}$ from (i) is the (up to $\mu$-equivalence,~cf.~\cite[Def.~6.3]{MaRo92}) unique diffusion process having $\mu$ as invariant measure and solving the martingale problem for\\ $\left(-{H}^{\scriptscriptstyle{\Gamma,\mu}}_{\scriptscriptstyle{\text{env}}},D({H}^{\scriptscriptstyle{\Gamma,\mu}}_{\scriptscriptstyle{\text{env}}})\right)$, i.e., for all $G\in D({H}^{\scriptscriptstyle{\Gamma,\mu}}_{\scriptscriptstyle{\text{env}}})\supset\mathcal{F}C_b^\infty(C^\infty_0(\mathbb{R}^d),\Gamma)$
\begin{align*}
\widetilde{G}(\mathbf{X}^{\scriptscriptstyle{\text{env}}}_t)-\widetilde{G}(\mathbf{X}^{\scriptscriptstyle{\text{env}}}_0)+\int_0^t {H}^{\scriptscriptstyle{\Gamma,\mu}}_{\scriptscriptstyle{\text{env}}} G(\mathbf{{X}}^{\scriptscriptstyle{\text{env}}}_t)\,ds,\quad t\ge 0,
\end{align*}
is an $(\mathbf{{F}}_t)_{t\ge 0}$-martingale under $\mathbf{{P}}^{\scriptscriptstyle{\text{env}}}_\gamma$ (hence starting at $\gamma$) for $\mathcal{E}^{\scriptscriptstyle{\Gamma,\mu}}_{\scriptscriptstyle{\text{env}}}$-q.a.~$\gamma\in\ddot{\Gamma}$. (Here $\widetilde{G}$ denotes a quasi-continuous version of $G$, cf.~\cite[Chap.~IV,~Prop.3.3]{MaRo92}.)
\end{enumerate}
\end{theorem}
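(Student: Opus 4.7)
My plan is to establish both parts of Theorem \ref{thmexpro2} by invoking the standard machinery of the theory of quasi-regular symmetric Dirichlet forms, since all the structural hypotheses on $(\mathcal{E}^{\scriptscriptstyle{\Gamma,\mu}}_{\scriptscriptstyle{\text{env}}},D(\mathcal{E}^{\scriptscriptstyle{\Gamma,\mu}}_{\scriptscriptstyle{\text{env}}}))$ needed as input have already been collected in Theorem \ref{thmdirE2}. The proof therefore runs almost in parallel with that of Theorem \ref{thmexpromgsd}.

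For part (i), I would consider the form as acting on $L^2(\ddot{\Gamma},\mu)$ (recall the remark preceding Lemma \ref{lemE22} which extends $\mu$ to $\ddot{\Gamma}$ via $\mathcal{B}(\ddot{\Gamma})\cap\Gamma=\mathcal{B}(\Gamma)$). By Theorem \ref{thmdirE2} the form $(\mathcal{E}^{\scriptscriptstyle{\Gamma,\mu}}_{\scriptscriptstyle{\text{env}}},D(\mathcal{E}^{\scriptscriptstyle{\Gamma,\mu}}_{\scriptscriptstyle{\text{env}}}))$ is a local, quasi-regular, symmetric Dirichlet form on $L^2(\ddot{\Gamma},\mu)$. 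Hence \cite[Chap.~V,~Theo.~1.11]{MaRo92} (the existence theorem that attaches a $\mu$-symmetric diffusion process to a local quasi-regular symmetric Dirichlet form on a Lusin space) immediately yields a process $\mathbf{M}^{\scriptscriptstyle{\Gamma,\mu}}_{\scriptscriptstyle{\text{env}}}$ on $\ddot{\Gamma}$ properly associated with the form. The $\mu$-symmetry of the transition semigroup and the fact that $\mu$ is invariant follow from the symmetry of $\mathcal{E}^{\scriptscriptstyle{\Gamma,\mu}}_{\scriptscriptstyle{\text{env}}}$, and the conservativity (non-explosion) of $\mathbf{M}^{\scriptscriptstyle{\Gamma,\mu}}_{\scriptscriptstyle{\text{env}}}$ is equivalent, in the symmetric framework, to $1\in D(\mathcal{E}^{\scriptscriptstyle{\Gamma,\mu}}_{\scriptscriptstyle{\text{env}}})$ with $\mathcal{E}^{\scriptscriptstyle{\Gamma,\mu}}_{\scriptscriptstyle{\text{env}}}(1,1)=0$, which is part of the statement of Theorem \ref{thmdirE2}.

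For part (ii), I would apply the Albeverio--Röckner type uniqueness result \cite[Theo.~3.5]{MR1335494}. This theorem states that for a quasi-regular local Dirichlet form whose generator contains a given algebra of test functions, the associated diffusion is (up to $\mu$-equivalence) the unique $\mu$-invariant diffusion solving the martingale problem for that generator on its whole domain. The input required is precisely what Theorem \ref{thmdirE2} provides together with Corollary \ref{corgenE2n}: the Friedrichs' extension $H^{\scriptscriptstyle{\Gamma,\mu}}_{\scriptscriptstyle{\text{env}}}$ of $-L^{\scriptscriptstyle{\Gamma,\mu}}_{\scriptscriptstyle{\text{env}}}$ is the generator, and $\mathcal{F}C_b^\infty(C_0^\infty(\mathbb{R}^d),\Gamma)\subset D(H^{\scriptscriptstyle{\Gamma,\mu}}_{\scriptscriptstyle{\text{env}}})$, so that the martingale problem can be formulated on all of $D(H^{\scriptscriptstyle{\Gamma,\mu}}_{\scriptscriptstyle{\text{env}}})$. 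Since $\widetilde{G}$ denotes a quasi-continuous $\mu$-version of $G$ (which exists by \cite[Chap.~IV,~Prop.~3.3]{MaRo92}), the martingale representation follows in the standard way.

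There is essentially no obstacle in this proof: once Theorem \ref{thmdirE2} is in hand, parts (i) and (ii) are mechanical applications of two well-established theorems. The only point that deserves a moment's care is the distinction between working on $\Gamma$ and on $\ddot{\Gamma}$: quasi-regularity was proved on $L^2(\ddot{\Gamma},\mu)$ in Lemma \ref{lemE22}, so the process $\mathbf{M}^{\scriptscriptstyle{\Gamma,\mu}}_{\scriptscriptstyle{\text{env}}}$ lives a priori on $\ddot{\Gamma}$; if one wishes (as in Remark \ref{remexceptmgsd}~(i) for the gradient stochastic dynamics) to conclude that $\ddot{\Gamma}\setminus\Gamma$ is $\mathcal{E}^{\scriptscriptstyle{\Gamma,\mu}}_{\scriptscriptstyle{\text{env}}}$-exceptional for $d\ge 2$, that would be done by the argument of \cite[Prop.~1,~Coro.~1]{RS98}, but it is not needed for the assertion of the theorem itself.
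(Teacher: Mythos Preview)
Your proposal is correct and follows exactly the same route as the paper: part (i) is deduced from Theorem \ref{thmdirE2} via \cite[Chap.~V,~Theo.~1.11]{MaRo92}, and part (ii) from \cite[Theo.~3.5]{MR1335494}. Your write-up is in fact more detailed than the paper's own two-line proof, and your closing remark about $\ddot{\Gamma}\setminus\Gamma$ being $\mathcal{E}^{\scriptscriptstyle{\Gamma,\mu}}_{\scriptscriptstyle{\text{env}}}$-exceptional for $d\ge 2$ anticipates precisely what the paper records in Remark \ref{remexcept2}.
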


\begin{proof}
\begin{enumerate}
\item[(i)]
By Theorem \ref{thmdirE2} the proof follows directly from \cite[Chap.~V,~Theo.~1.11]{MaRo92}.
\item[(ii)]
This follows immediately by \cite[Theo.~3.5]{MR1335494}.
\end{enumerate}
\end{proof}

\begin{remark}\label{remexcept2}
For $d\ge 2$ an argumentation as in the proofs of \cite[Prop.~1]{RS98} and \cite[Coro.~1]{RS98} together with a similar argumentation as in the proof of Lemma \ref{lemE22} gives us that under our assumptions the set $\ddot{\Gamma}\setminus\Gamma$ is $\mathcal{E}^{\scriptscriptstyle{\Gamma,\mu}}_{\scriptscriptstyle{\text{env}}}$-exceptional. Therefore, the process $\mathbf{M}^{\scriptscriptstyle{\Gamma,\mu}}_{\scriptscriptstyle{\text{env}}}$ from Theorem \ref{thmexpro2} lives on the smaller space $\Gamma$.
\end{remark}

\subsection{The coupled process}\index{coupled process}

Finally we construct the stochastic process $\mathbf{M}^{\scriptscriptstyle{\mathbb{R}^d\times\Gamma,\hat{\mu}}}_{\scriptscriptstyle{\text{coup}}}$ taking values in $\mathbb{R}^d\times\Gamma$ for $d\ge 2$ (for $d=1$ the process exists only in $\mathbb{R}^d\times\ddot{\Gamma}$), coupling the motion of the tagged particle and the motion of the environment seen from this particle. Therefore, let $\mu\in\mathcal{G}^{\scriptscriptstyle{gc}}_{\scriptscriptstyle{\text{ibp}}}(\Phi_{\scriptscriptstyle{\phi}},z\exp(-\phi)),~0<z<\infty$. As test functions we consider functions $\mathfrak{F}\in C^\infty_0(\mathbb{R}^d)\otimes\mathcal{F}C_b^{\infty}(C^{\infty}_0(\mathbb{R}^d),\Gamma)$. Here $\otimes$ denotes the algebraic tensor product of $C^\infty_0(\mathbb{R}^d)$ and $\mathcal{F}C_b^{\infty}(C^{\infty}_0(\mathbb{R}^d),\Gamma)$. Hence
\begin{align}\label{short}
\mathfrak{F}(\xi,\gamma)=\sum_{k=1}^{m_{\scriptscriptstyle{\mathfrak{F}}}}(f_k\otimes F_k)(\xi,\gamma):=\sum_{k=1}^{m_{\scriptscriptstyle{\mathfrak{F}}}}f_k(\xi)F_k(\gamma),\quad(\xi,\gamma)\in\mathbb{R}^d\times\Gamma,
\end{align}
where $f_k\in C^\infty_0(\mathbb{R}^d)$, $F_k\in\mathcal{F}C_b^{\infty}(C^{\infty}_0(\mathbb{R}^d),\Gamma)$ for $k\in\{1,\ldots,m_{\scriptscriptstyle{\mathfrak{F}}}\}$ and $m_{\scriptscriptstyle{\mathfrak{F}}}\in\mathbb{N}$ depends on $\mathfrak{F}\in C^\infty_0(\mathbb{R}^d)\otimes\mathcal{F}C_b^{\infty}(C^{\infty}_0(\mathbb{R}^d),\Gamma)$. 

As operators on $C^\infty_0(\mathbb{R}^d)\otimes\mathcal{F}C_b^{\infty}(C^{\infty}_0(\mathbb{R}^d),\Gamma)$ we consider
\begin{align}
(&\nabgam -\nabla)\mathfrak{F}(\xi,\gamma):=\sum_{k=1}^{m_{\scriptscriptstyle{\mathfrak{F}}}}\Big(f_k(\xi)\nabgam F_k(\gamma)-\nabla f_k(\xi)F(\gamma)\Big)\quad\mbox{and}\label{sh1}\\
&\nabla^\Gamma \mathfrak{F}(\xi,\gamma):=\sum_{k=1}^{m_{\scriptscriptstyle{\mathfrak{F}}}}f_k(\xi)\nabla^\Gamma F_k(\gamma),\quad\mbox{for }(\xi,\gamma)\in\mathbb{R}^d\times\Gamma,\label{sh2}
\end{align}
where $\mathfrak{F}\in C^\infty_0(\mathbb{R}^d)\otimes\mathcal{F}C_b^{\infty}(C^{\infty}_0(\mathbb{R}^d),\Gamma)$.
\begin{notation}\label{simp}
Since the objects we consider are linear or bilinear, respectively, for simplicity we use
\begin{align*}
&\mathfrak{F}(\xi,\gamma)=f(\xi)\,F(\gamma)\mbox{ instead of (\ref{short})},\\
&(\nabgam -\nabla)\mathfrak{F}(\xi,\gamma)=f(\xi)\,\nabgam F(\gamma)-\nabla f(\xi)\,F(\gamma)\mbox{ instead of (\ref{sh1}) and}\\
&\nabla^\Gamma \mathfrak{F}(\xi,\gamma)=f(\xi)\,\nabla^\Gamma F(\gamma)\mbox{ instead of (\ref{sh2})}.
\end{align*}
\end{notation}
Now we define on $C^\infty_0(\mathbb{R}^d)\otimes\mathcal{F}C_b^{\infty}(C^{\infty}_0(\mathbb{R}^d),\Gamma)$ the following positive definite, symmetric bilinear form:
\begin{multline}\label{equcoup}
\mathcal{E}^{\scriptscriptstyle{\mathbb{R}^d\times\Gamma,\hat{\mu}}}_{\scriptscriptstyle{\text{coup}}}(\mathfrak{F},\mathfrak{G})=\int_{\mathbb{R}^d\times\Gamma}\Big((\nabgam -\nabla)\mathfrak{F}(\xi,\gamma),(\nabgam -\nabla)\mathfrak{G}(\xi,\gamma)\Big)_{\mathbb{R}^d}\\+\Big(\nabla^\Gamma \mathfrak{F}(\xi,\gamma),\nabla^\Gamma \mathfrak{G}(\xi,\gamma)\Big)_{T_\gamma\Gamma}\,d\mu(\gamma)\,d\xi,\\
\quad\mathfrak{F},\mathfrak{G}\in C_0^\infty(\mathbb{R}^d)\otimes\mathcal{F}C_b^{\infty}(C^{\infty}_0(\mathbb{R}^d),\Gamma).
\end{multline}

Using Notation \ref{simp}, (\ref{equcoup}) can be rewritten as
\begin{multline*}
\mathcal{E}^{\scriptscriptstyle{\mathbb{R}^d\times\Gamma,\hat{\mu}}}_{\scriptscriptstyle{\text{coup}}}(\mathfrak{F},\mathfrak{G})=\int_{\mathbb{R}^d\times\Gamma}f(\xi)\,g(\xi)\,\bigg(\Big(\nabla^\Gamma F(\gamma),\nabla^\Gamma G(\gamma)\Big)_{T_\gamma\Gamma}+\Big(\nabgam F(\gamma),\nabgam G(\gamma)\Big)_{\mathbb{R}^d}\bigg)\\
-\Big(\nabgam F(\gamma),\nabla g(\xi)\Big)_{\mathbb{R}^d}f(\xi)\,G(\gamma)-\Big(\nabgam G(\gamma),\nabla f(\xi)\Big)_{\mathbb{R}^d}g(\xi)\,F(\gamma)\\
+F(\gamma)\,G(\gamma) \big(\nabla f(\xi),\nabla g(\xi)\big)_{\mathbb{R}^d}\,d\mu(\gamma)\,d\xi.
\end{multline*}

\begin{theorem}\label{thmexcoup}
Suppose that the pair potential $\phi$ satisfies (SS), (I), (LR), (LS) and (D$\text{L}^{\text{q}}$) for some $q>d$. Furthermore, let $\mu\in\mathcal{G}^{\scriptscriptstyle{gc}}_{\scriptscriptstyle{\text{ibp}}}(\Phi_{\scriptscriptstyle{\phi}},z\exp(-\phi)),~0<z<\infty$. Then\\
$(\mathcal{E}^{\scriptscriptstyle{\mathbb{R}^d\times\Gamma,\hat{\mu}}}_{\scriptscriptstyle{\text{coup}}},C^\infty_0(\mathbb{R}^d)\otimes\mathcal{F}C_b^{\infty,\mu}(C^{\infty}_0(\mathbb{R}^d),\Gamma)$ is closable in $L^2(\mathbb{R}^d\times\Gamma,dx\otimes\mu)$ and its closure\\ $(\mathcal{E}^{\scriptscriptstyle{\mathbb{R}^d\times\Gamma,\hat{\mu}}}_{\scriptscriptstyle{\text{coup}}},D(\mathcal{E}^{\scriptscriptstyle{\mathbb{R}^d\times\Gamma,\hat{\mu}}}_{\scriptscriptstyle{\text{coup}}}))$ is a conservative, local, quasi-regular Dirichlet form on $L^2(\mathbb{R}^d\times \ddot{\Gamma},\mu)$. Moreover,
\begin{align*}
\mathcal{E}^{\scriptscriptstyle{\mathbb{R}^d\times\Gamma,\hat{\mu}}}_{\scriptscriptstyle{\text{coup}}}(\mathfrak{F},\mathfrak{G})=\int_{\mathbb{R}^d\times\Gamma}-L^{\scriptscriptstyle{\mathbb{R}^d\times\Gamma,\hat{\mu}}}_{\scriptscriptstyle{\text{coup}}}\mathfrak{F}\,\mathfrak{G}\,d\mu\,d\xi,
\end{align*}
where
\begin{align*}
L^{\scriptscriptstyle{\mathbb{R}^d\times\Gamma,\hat{\mu}}}_{\scriptscriptstyle{\text{coup}}}\mathfrak{F}(\xi,\gamma)=L^{\scriptscriptstyle{\Gamma,\mu}}_{\scriptscriptstyle{\text{env}}}F(\xi,\gamma)\,f(\xi)\!-2\,\Big(\nabgam F(\gamma),\nabla f(\xi)\Big)_{\mathbb{R}^d}
\!\!\!+\sum_{x\in\gamma}\Big(\nabla\phi(x),\nabla f(\xi)\Big)_{\mathbb{R}^d}\!\!\!+\Delta f(\xi)\,F(\gamma).
\end{align*}
The generator of $(\mathcal{E}^{\scriptscriptstyle{\mathbb{R}^d\times\Gamma,\hat{\mu}}}_{\scriptscriptstyle{\text{coup}}},D(\mathcal{E}^{\scriptscriptstyle{\mathbb{R}^d\times\Gamma,\hat{\mu}}}_{\scriptscriptstyle{\text{coup}}}))$, denoted by ${H}^{\scriptscriptstyle{\mathbb{R}^d\times\Gamma,\hat{\mu}}}_{\scriptscriptstyle{\text{coup}}}$, is the Friedrichs' extension of $-L^{\scriptscriptstyle{\mathbb{R}^d\times\Gamma,\hat{\mu}}}_{\scriptscriptstyle{\text{coup}}}$.
\end{theorem}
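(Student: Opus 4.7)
The plan is to mirror, at the product level $\mathbb{R}^d\times\Gamma$, the strategy already developed for $\mathcal{E}^{\scriptscriptstyle{\Gamma,\mu}}_{\scriptscriptstyle{\text{gsdad}}}$ and $\mathcal{E}^{\scriptscriptstyle{\Gamma,\mu}}_{\scriptscriptstyle{\text{env}}}$, using the explicit representation of the generator as the crucial lever. First I would take $\mathfrak{F}=fF$, $\mathfrak{G}=gG$ and expand $\mathcal{E}^{\scriptscriptstyle{\mathbb{R}^d\times\Gamma,\hat{\mu}}}_{\scriptscriptstyle{\text{coup}}}(\mathfrak{F},\mathfrak{G})$ into four blocks: the $(\nabla^\Gamma F,\nabla^\Gamma G)_{T_\gamma\Gamma}$ block, the $(\nabgam F,\nabgam G)_{\mathbb{R}^d}$ block, the two mixed blocks involving $\nabla f$ or $\nabla g$, and the pure $\xi$-block $FG(\nabla f,\nabla g)_{\mathbb{R}^d}$. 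Then apply Corollary \ref{corgenE2n} to collapse the first two blocks into $\int fg(-L^{\scriptscriptstyle{\Gamma,\mu}}_{\scriptscriptstyle{\text{env}}}F)G\,d\mu\,d\xi$, carry out classical integration by parts in $\xi$ on the last block to produce $-\Delta f\cdot F$ as a factor, and deal with the mixed blocks by integrating by parts in $\xi$ against the compactly supported $f,g$ (which costs no boundary term) and by invoking the integration-by-parts formula with respect to $\nabgam$ from Lemma \ref{corintbp2} (i.e.\ the result of \cite{CoKu09}). The term $\sum_{x\in\gamma}(\nabla\phi(x),\nabla f(\xi))_{\mathbb{R}^d}$ appears precisely as the boundary drift of the $\nabgam$-integration by parts applied to the constant direction $\nabla f(\xi)\in\mathbb{R}^d$, analogously to how $-(\langle\nabla\phi,\gamma\rangle,\langle\nabla f_j,\gamma\rangle)$ shows up inside $L^{\scriptscriptstyle{\Gamma,\mu}}_{\scriptscriptstyle{\text{env}}}$. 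Collecting all pieces yields the stated formula for $L^{\scriptscriptstyle{\mathbb{R}^d\times\Gamma,\hat{\mu}}}_{\scriptscriptstyle{\text{coup}}}\mathfrak{F}$ and simultaneously the identity $\mathcal{E}^{\scriptscriptstyle{\mathbb{R}^d\times\Gamma,\hat{\mu}}}_{\scriptscriptstyle{\text{coup}}}(\mathfrak{F},\mathfrak{G})=\int -L^{\scriptscriptstyle{\mathbb{R}^d\times\Gamma,\hat{\mu}}}_{\scriptscriptstyle{\text{coup}}}\mathfrak{F}\cdot\mathfrak{G}\,d\hat\mu$.

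Once this identity is in place, closability and the Friedrichs characterization follow the classical route: the form is a densely defined symmetric non-negative bilinear form associated with the symmetric non-negative operator $-L^{\scriptscriptstyle{\mathbb{R}^d\times\Gamma,\hat{\mu}}}_{\scriptscriptstyle{\text{coup}}}$ on $C_0^\infty(\mathbb{R}^d)\otimes\mathcal{F}C_b^{\infty,\mu}(C_0^\infty(\mathbb{R}^d),\Gamma)$, hence closable, and its closure's generator is by construction the Friedrichs extension of that operator. The Dirichlet (contraction) property reduces to the chain rule, which holds separately for $\nabla$, $\nabla^\Gamma$ and $\nabgam$ and hence for the full carr\'e du champ $S^{\scriptscriptstyle{\text{coup}}}(\mathfrak{F},\mathfrak{G})=((\nabgam-\nabla)\mathfrak{F},(\nabgam-\nabla)\mathfrak{G})_{\mathbb{R}^d}+(\nabla^\Gamma\mathfrak{F},\nabla^\Gamma\mathfrak{G})_{T_\gamma\Gamma}$, so the argument of \cite[Chap.~I--II]{MaRo92} carries over verbatim. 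Locality is immediate from the local character of all three gradients, exactly as in Lemma \ref{lemE23}. Conservativity follows by the usual cutoff argument: take $\chi_n\in C_0^\infty(\mathbb{R}^d)$ with $\chi_n\uparrow 1$ and a corresponding cutoff in $\Gamma$ (as used for the environment form) and control $\mathcal{E}^{\scriptscriptstyle{\mathbb{R}^d\times\Gamma,\hat{\mu}}}_{\scriptscriptstyle{\text{coup}}}(\chi_n\otimes 1_n)$ through the estimates already obtained in the proof of Lemma \ref{lemE22}, using that $|\nabla\chi_n|\to 0$ locally uniformly.

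The step I expect to cost the most effort is quasi-regularity. I would follow the template of Lemma \ref{lemE22}, constructing on the product space $\mathbb{R}^d\times\ddot{\Gamma}$ a bounded complete metric $\bar\rho^{\scriptscriptstyle{\text{coup}}}$ generating the product topology such that $\bar\rho^{\scriptscriptstyle{\text{coup}}}(\cdot,(\xi_0,\gamma_0))$ lies in $D(\mathcal{E}^{\scriptscriptstyle{\mathbb{R}^d\times\Gamma,\hat{\mu}}}_{\scriptscriptstyle{\text{coup}}})$ and has carr\'e du champ dominated by an $\hat\mu$-integrable function independent of $(\xi_0,\gamma_0)$. The natural candidate is a $\sup$ of the metric already built in Lemma \ref{lemE22} for the $\gamma$-component with a bounded distortion of $|\xi-\xi_0|$ for the $\xi$-component, weighted by the same summable factors $c_k$. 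The extra task compared to Lemma \ref{lemE22} is to estimate the new mixed contribution $((\nabgam-\nabla)\bar\rho^{\scriptscriptstyle{\text{coup}}},(\nabgam-\nabla)\bar\rho^{\scriptscriptstyle{\text{coup}}})_{\mathbb{R}^d}$, which requires integrability of the field $\gamma\mapsto\sum_{x\in\gamma}\nabla\phi(x)$ against cutoffs; this is exactly where (D$\text{L}^\text{q}$) with $q>d$ enters, via a H\"older/Sobolev argument on the $\xi$-variable combined with the Ruelle bound on correlation functions to reproduce the integrability estimate of Lemma \ref{lemconv}. Once this estimate is established, the Banach--Saks argument from Lemma \ref{lemE22}, applied on $\mathbb{R}^d\times\ddot{\Gamma}$, delivers quasi-regularity. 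Combining closability, Dirichlet property, locality, conservativity and quasi-regularity with the explicit generator formula then yields all assertions of the theorem.
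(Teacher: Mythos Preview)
Your treatment of the generator identity, closability, the Dirichlet property, and locality matches the paper's proof and is fine.

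For quasi-regularity you propose to rebuild the metric of Lemma~\ref{lemE22} on the product space $\mathbb{R}^d\times\ddot\Gamma$. The paper takes a shorter route: it simply combines an $\mathcal{E}^{\scriptscriptstyle{\Gamma,\mu}}_{\scriptscriptstyle{\text{env}}}$-nest $(E_k)_{k\in\mathbb N}$ with the balls $(\overline{\Lambda}_k)_{k\in\mathbb N}$ to form the product compacts $F_k:=\overline{\Lambda}_k\times E_k$ and checks that $C_0^\infty(\mathbb{R}^d)\otimes\bigcup_k D(\mathcal{E}^{\scriptscriptstyle{\Gamma,\mu}}_{\scriptscriptstyle{\text{env}}})_{E_k}$ is $\mathcal{E}^{\scriptscriptstyle{\mathbb{R}^d\times\Gamma,\hat\mu}}_{\scriptscriptstyle{\text{coup}}}$-dense. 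No new carr\'e-du-champ estimate is needed, and in particular (D$\text{L}^{\text{q}}$) with $q>d$ plays no role here. Your claim that the $q>d$ condition enters through the quasi-regularity estimate is misplaced; the carr\'e du champ of the metric does not involve $\nabla\phi$ at all.

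The real gap is conservativity. Since $\hat\mu=d\xi\otimes\mu$ is infinite, $1\notin L^2(\mathbb{R}^d\times\Gamma,\hat\mu)$, and conservativity must be understood as $T_t^{\scriptscriptstyle{\text{coup}}}(1\otimes 1)=1$ in $L^\infty$. Your sketch (``control $\mathcal{E}^{\scriptscriptstyle{\text{coup}}}(\chi_n\otimes 1_n)$ via Lemma~\ref{lemE22}'') does not address this: for $\chi_n(\cdot)=f_0(\cdot/n)$ one has $\mathcal{E}^{\scriptscriptstyle{\text{coup}}}(\chi_n\otimes 1,\chi_n\otimes 1)=\|\nabla\chi_n\|_{L^2(\mathbb{R}^d)}^2\sim n^{d-2}$, which diverges for $d\ge 3$, so the naive energy-cutoff criterion fails. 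The paper instead works with the resolvent and the $L^1$--$L^\infty$ pairing: one shows ${}_{L^1}(\mathfrak{F},G_1^{\scriptscriptstyle{\text{coup}}}(1\otimes 1))_{L^\infty}={}_{L^1}(\mathfrak{F},1)_{L^\infty}$ by approximating $1\otimes 1$ with $f_k\otimes 1$, $f_k(x)=f_0(x/k)$, and reducing to $(G_1^{\scriptscriptstyle{\text{coup}}}\mathfrak{F},\,L^{\scriptscriptstyle{\text{coup}}}(f_k\otimes 1))_{L^2}\to 0$. Since $L^{\scriptscriptstyle{\text{coup}}}(f_k\otimes 1)=\Delta f_k+(\nabla f_k,\langle\nabla\phi,\cdot\rangle)_{\mathbb{R}^d}$, the drift $\langle\nabla\phi,\gamma\rangle$ appears, and controlling this term via H\"older requires exactly $\nabla\phi\in L^q(\mathbb{R}^d,e^{-\phi}dx)$ together with $\|\nabla f_k\|_{L^q(\mathbb{R}^d)},\|\Delta f_k\|_{L^q(\mathbb{R}^d)}\to 0$, which forces $q>d$. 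This is where, and why, the stronger assumption (D$\text{L}^{\text{q}}$), $q>d$, is used.
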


\begin{proof}
Applying Fubini's theorem then carrying out an integration by parts, we obtain 
\begin{multline*}
\mathcal{E}^{\scriptscriptstyle{\mathbb{R}^d\times\Gamma,\hat{\mu}}}_{\scriptscriptstyle{\text{coup}}}(\mathfrak{F},\mathfrak{G})=\int_{\mathbb{R}^d\times\Gamma}f(\xi)\,g(\xi)\,\bigg(\Big(\nabla^\Gamma F(\gamma),\nabla^\Gamma G(\gamma)\Big)_{T_\gamma\Gamma}+\Big(\nabgam F(\gamma),\nabgam G(\gamma)\Big)_{\mathbb{R}^d}\bigg)\\
-\Big(\nabgam F(\gamma),\nabla g(\xi)\Big)_{\mathbb{R}^d}f(\xi)\,G(\gamma)-\Big(\nabgam G(\gamma),\nabla f(\xi)\Big)_{\mathbb{R}^d}g(\xi)\,F(\gamma)\\
+F(\gamma)\,G(\gamma)\big(\nabla f(\xi),\nabla g(\xi)\big)_{\mathbb{R}^d}\,d\mu(\gamma)\,d\xi\\
=\int_{\mathbb{R}^d}\int_\Gamma -L^{\scriptscriptstyle{\Gamma,\mu}}_{\scriptscriptstyle{\text{env}}}F(\gamma)\,f(\xi)\,G(\gamma)\,g(\xi)\,d\mu(\gamma)\,dx
+\int_{\mathbb{R}^d}\int_\Gamma\Big(\nabgam F(\gamma),\nabla f(\xi)\Big)_{\mathbb{R}^d}\,G(\gamma)\,g(\xi)\,d\mu(\gamma)\,d\xi\\
+\int_{\mathbb{R}^d}\int_\Gamma\Bigg(\Big(\nabgam F(\gamma),\nabla f(\xi)\Big)_{\mathbb{R}^d}-\sum_{x\in\gamma}\Big(\nabla\phi(x),\nabla f(\xi)\Big)_{\mathbb{R}^d}F(\gamma)\Bigg)\,G(\gamma)\,g(\xi)\,d\mu(\gamma)\,d\xi
\\-\int_{\mathbb{R}^d}\int_\Gamma\Delta f(\xi) F(\gamma)\, G(\gamma)\,g(\xi)\,d\mu(\gamma)\,d\xi\\
=\!\!\!\int_{\mathbb{R}^d}\int_\Gamma\Bigg(-L^{\scriptscriptstyle{\Gamma,\mu}}_{\scriptscriptstyle{\text{env}}}F(\gamma)\,f(\xi)+2\,\Big(\nabgam F(\gamma),\nabla f(\xi)\Big)_{\mathbb{R}^d}\\
-\sum_{x\in\gamma}\Big(\nabla\phi(x),\nabla f(\xi)\Big)_{\mathbb{R}^d}\,F(\gamma)
-\Delta f(\xi)\,F(\gamma)\Bigg)G(\gamma)\,g(\xi)\,d\mu(\gamma)\,d\xi.
\end{multline*}
Thus we have closability. Since the operator $\nabla_\gamma^\Gamma-\nabla$ fulfills a chain rule on $C^\infty_0(\mathbb{R}^d)\otimes\mathcal{F}C_b^{\infty}(C^{\infty}_0(\mathbb{R}^d),\Gamma)$, the Dirichlet property follows. Furthermore, $\nabla_\gamma^\Gamma-\nabla$ satisfies the product rule for bounded functions in $D(\mathcal{E}^{\scriptscriptstyle{\mathbb{R}^d\times\Gamma,\hat{\mu}}}_{\scriptscriptstyle{\text{coup}}})$ then as shown in \cite[Chap.~V, Exam.~1.12(ii)]{MaRo92}, $(\mathcal{E}^{\scriptscriptstyle{\mathbb{R}^d\times\Gamma,\hat{\mu}}}_{\scriptscriptstyle{\text{coup}}},D(\mathcal{E}^{\scriptscriptstyle{\mathbb{R}^d\times\Gamma,\mu}}_{\scriptscriptstyle{\text{coup}}}))$ is local.

Quasi-regularity can be shown as follows. We denote by $(\mathcal{E},D(\mathcal{E}))$ the classical gradient Dirichlet form on $L^2(\mathbb{R}^d,dx)$. We know that both $(\mathcal{E}^{\scriptscriptstyle{\Gamma,{\mu}}}_{\scriptscriptstyle{\text{env}}},D(\mathcal{E}^{\scriptscriptstyle{\Gamma,{\mu}}}_{\scriptscriptstyle{\text{env}}}))$ and $(\mathcal{E},D(\mathcal{E}))$ are quasi-regular. By $(E_k)_{k\in\mathbb{N}}$ we denote the $\mathcal{E}^{\scriptscriptstyle{\Gamma,{\mu}}}_{\scriptscriptstyle{\text{env}}}$-nest of compact sets in $\ddot{\Gamma}$. An $\mathcal{E}$-nest of compact sets in $\mathbb{R}^d$ is given by $(\overline{\Lambda_k})_{k\in\mathbb{N}}$. Hence $(F_k)_{k\in\mathbb{N}}$, where $F_k:=\overline{\Lambda}_k\times E_k$, is an exhausting sequence of compact sets in $\mathbb{R}^d\times\ddot{\Gamma}$. One easily shows that $C_0^\infty(\mathbb{R}^d)\otimes\bigcup_{k\ge 1}D(\mathcal{E}^{\scriptscriptstyle{\Gamma,{\mu}}}_{\scriptscriptstyle{\text{env}}})_{E_k} \subset  D(\mathcal{E}^{\scriptscriptstyle{\mathbb{R}^d\times\Gamma,\mu}}_{\scriptscriptstyle{\text{coup}}})$. Then using that $\bigcup_{k\ge 1}D(\mathcal{E}^{\scriptscriptstyle{\Gamma,{\mu}}}_{\scriptscriptstyle{\text{env}}})_{E_k}$ is dense in $D(\mathcal{E}^{\scriptscriptstyle{\Gamma,{\mu}}}_{\scriptscriptstyle{\text{env}}})$ with respect to $\sqrt{\mathcal{E}^{\scriptscriptstyle{\Gamma,{\mu}}}_{\scriptscriptstyle{\text{env}}}}$ and that $C_0^\infty(\mathbb{R}^d)$ is dense in $D(\mathcal{E})$ with respect to $\sqrt{\mathcal{E}}$ we can easily show that $C_0^\infty(\mathbb{R}^d)\otimes\bigcup_{k\ge 1}D(\mathcal{E}^{\scriptscriptstyle{\Gamma,{\mu}}}_{\scriptscriptstyle{\text{env}}})_{E_k}$ is dense, first in $C^\infty_0(\mathbb{R}^d)\otimes\mathcal{F}C_b^{\infty}(C^{\infty}_0(\mathbb{R}^d),\Gamma)$, hence also in $D(\mathcal{E}^{\scriptscriptstyle{\mathbb{R}^d\times\Gamma,\hat{\mu}}}_{\scriptscriptstyle{\text{coup}}})$ with respect to $\sqrt{\mathcal{E}^{\scriptscriptstyle{\mathbb{R}^d\times\Gamma,\hat{\mu}}}_{\scriptscriptstyle{\text{coup}}}}$. Thus $(F_k)_{k\in\mathbb{N}}$ is an $\mathcal{E}^{\scriptscriptstyle{\mathbb{R}^d\times\Gamma,\hat{\mu}}}_{\scriptscriptstyle{\text{coup}}}$-nest of compact sets. All further properties necessary to have quasi-regularity are clear due to quasi-regularity of $(\mathcal{E}^{\scriptscriptstyle{\Gamma,{\mu}}}_{\scriptscriptstyle{\text{env}}},D(\mathcal{E}^{\scriptscriptstyle{\Gamma,{\mu}}}_{\scriptscriptstyle{\text{env}}}))$ and $(\mathcal{E},D(\mathcal{E}))$, respectively.    

Finally, we prove the conservativity of $(\mathcal{E}^{\scriptscriptstyle{\mathbb{R}^d\times\Gamma,\hat{\mu}}}_{\scriptscriptstyle{\text{coup}}},D(\mathcal{E}^{\scriptscriptstyle{\mathbb{R}^d\times\Gamma,\mu}}_{\scriptscriptstyle{\text{coup}}}))$. I.e., we have to show that $T^{\scriptscriptstyle{\text{coup}}}_t(1\otimes 1)=1$, with $(T^{\scriptscriptstyle{\text{coup}}}_t)_{t\ge 0}$ the $L^\infty$-contraction semigroup corresponding to\\ $(\exp(-tH^{\scriptscriptstyle{\text{coup}}}))_{t\ge 0}$. Therefore, we denote by $(G^{\scriptscriptstyle{\text{coup}}}_\alpha)_{\alpha>0}$ the resolvent corresponding to $\mathcal{E}^{\scriptscriptstyle{\mathbb{R}^d\times\Gamma,\hat{\mu}}}_{\scriptscriptstyle{\text{coup}}}$ and prove at first that
\begin{align*}
_{\scriptscriptstyle{L^1}}(\mathfrak{F},G_1^{\scriptscriptstyle{\text{coup}}}(1\otimes 1))_{\scriptscriptstyle{L^\infty}}={_{\scriptscriptstyle{L^1}}(}\mathfrak{F}, 1)_{\scriptscriptstyle{L^\infty}}\quad\mbox{for all }\mathfrak{F}\in L^1(\mathbb{R}^d\times\Gamma,\hat{\mu})\cap L^{\infty}(\mathbb{R}^d\times\Gamma,\hat{\mu}).
\end{align*}
Here $_{\scriptscriptstyle{L^1}}(\cdot,\cdot)_{\scriptscriptstyle{L^\infty}}$ denotes the dual pairing between the spaces $L^1(\mathbb{R}^d\times\Gamma,\hat{\mu})$ and $L^\infty(\mathbb{R}^d\times\Gamma,\hat{\mu})$.
In order to show this we choose $f_0:\mathbb{R}^d\to\mathbb{R}$ infinitely often differentiable such that $f_0(x)=1$ for $x\in[-1,1]^d$ and $f_0(x)=0$ for $x\in\mathbb{R}^d\setminus[-3,3]^d$. For $k\in\mathbb{N}$ we define $f_k(x)=f_0(k^{-1}x),~x\in\mathbb{R}^d$. Then for any $q>d$ we have
\begin{align}\label{conv}
\Vert \nabla f_k\Vert_{\scriptscriptstyle{L^p(\mathbb{R}^d)}}\to 0\quad\mbox{and}\quad\Vert \Delta f_k\Vert_{\scriptscriptstyle{L^p(\mathbb{R}^d)}}\to 0\quad\mbox{as }k\to\infty.
\end{align}
It holds that $f_k\otimes 1\in C_0^\infty(\mathbb{R}^d)\otimes\mathcal{F}C_b^{\infty}(C^{\infty}_0(\mathbb{R}^d),\Gamma)$ and
\begin{multline*}
_{\scriptscriptstyle{L^1}}(\mathfrak{F},G_1^{\scriptscriptstyle{\text{coup}}}(1\otimes 1))_{\scriptscriptstyle{L^\infty}}=\lim_{k\to\infty}(\mathfrak{F},G_1^{\scriptscriptstyle{\text{coup}}}(f_k\otimes 1))_{\scriptscriptstyle{L^2}(\mathbb{R}^d\times\Gamma,\hat{\mu})}\\
=\lim_{k\to\infty}((1-L^{\scriptscriptstyle{\Gamma,\hat{\mu}}}_{\scriptscriptstyle{coup}})G_1^{\scriptscriptstyle{\text{coup}}}\mathfrak{F},G_1^{\scriptscriptstyle{\text{coup}}}(f_k\otimes 1))_{\scriptscriptstyle{L^2}(\mathbb{R}^d\times\Gamma,\hat{\mu})}\\
=\lim_{k\to\infty}(G_1^{\scriptscriptstyle{\text{coup}}}\mathfrak{F},f_k\otimes 1)_{\scriptscriptstyle{L^2}(\mathbb{R}^d\times\Gamma,\hat{\mu})}\\
=\lim_{k\to\infty}((1-L^{\scriptscriptstyle{\Gamma,\hat{\mu}}}_{\scriptscriptstyle{coup}}+L^{\scriptscriptstyle{\Gamma,\hat{\mu}}}_{\scriptscriptstyle{coup}})G_1^{\scriptscriptstyle{\text{coup}}}\mathfrak{F},f_k\otimes 1)_{\scriptscriptstyle{L^2}(\mathbb{R}^d\times\Gamma,\hat{\mu})}\\
={_{\scriptscriptstyle{L^1}}}(\mathfrak{F},1\otimes 1)_{\scriptscriptstyle{L^\infty}}+\lim_{k\to\infty}(L^{\scriptscriptstyle{\Gamma,\hat{\mu}}}_{\scriptscriptstyle{coup}}G_1^{\scriptscriptstyle{\text{coup}}}\mathfrak{F},f_k\otimes 1)_{\scriptscriptstyle{L^2(\mathbb{R}^d\times\Gamma,\hat{\mu})}}\\
={_{\scriptscriptstyle{L^1}}}(\mathfrak{F},1)_{\scriptscriptstyle{L^\infty}}+\lim_{k\to\infty}(G_1^{\scriptscriptstyle{\text{coup}}}\mathfrak{F},L^{\scriptscriptstyle{\Gamma,\hat{\mu}}}_{\scriptscriptstyle{coup}}(f_k\otimes 1))_{\scriptscriptstyle{L^2(\mathbb{R}^d\times\Gamma,\hat{\mu})}}\\
={_{\scriptscriptstyle{L^1}}}(\mathfrak{F},1)_{\scriptscriptstyle{L^\infty}}+\lim_{k\to\infty}(G_1^{\scriptscriptstyle{\text{coup}}}\mathfrak{F},\Delta f_k+\nabla f_k\nabla^\Gamma_\gamma\phi)_{\scriptscriptstyle{L^2(\mathbb{R}^d\times\Gamma,\hat{\mu})}}.
\end{multline*}
Since
\begin{align*}
(G_1^{\scriptscriptstyle{\text{coup}}}\mathfrak{F},\Delta f_k+\nabla f_k\nabla^\Gamma_\gamma\phi)_{\scriptscriptstyle{L^2(\mathbb{R}^d\times\Gamma,\hat{\mu})}}\le\Vert 1_{\{\nabla f_k\not=0\}}G_1^{\scriptscriptstyle{\text{coup}}}\mathfrak{F}\Vert_{\scriptscriptstyle{L^p(\mathbb{R}^d\times\Gamma,\hat{\mu})}}\Vert \Delta f_k+\nabla f_k\nabla^\Gamma_\gamma\phi\Vert_{\scriptscriptstyle{L^q(\mathbb{R}^d\times\Gamma,\hat{\mu})}}<\infty,
\end{align*}
by (D$\text{L}^{\text{q}}$), we obtain that $\lim_{k\to\infty}(G_1^{\scriptscriptstyle{\text{coup}}}\mathfrak{F},\Delta f_k+\nabla f_k\nabla^\Gamma_\gamma\phi)_{L^2(\mathbb{R}^d\times\Gamma,\hat{\mu})}=0$ by (\ref{conv}).
Hence $_{\scriptscriptstyle{L^1}}(\mathfrak{F},G_1^{\scriptscriptstyle{\text{coup}}}1\otimes 1)_{\scriptscriptstyle{L^\infty}}={_{\scriptscriptstyle{L^1}}}(\mathfrak{F}, 1)_{\scriptscriptstyle{L^\infty}}$ for all $\mathfrak{F}\in L^1(\mathbb{R}^d\times\Gamma,\hat{\mu})\cap L^{\infty}(\mathbb{R}^d\times\Gamma,\hat{\mu})$. Now using the relation between resolvents and semigroups via the Laplace transform together with the Hahn-Banach theorem we obtain conservativity.

\end{proof}

\begin{theorem}\label{thmexprocoup}
Suppose the assumptions of Theorem \ref{thmexcoup}. 
\begin{enumerate}
\item[(i)]
there exists a conservative diffusion process 
\begin{align*}
\mathbf{{M}}^{\scriptscriptstyle{\mathbb{R}^d\times\Gamma,\hat{\mu}}}_{\scriptscriptstyle{\text{coup}}}=\left({\mathbf{{\Omega}}}^{\scriptscriptstyle{\text{coup}}},{\mathbf{{F}}}^{\scriptscriptstyle{\text{coup}}},({\mathbf{{F}}}^{\scriptscriptstyle{\text{coup}}}_t)_{t\ge 0},({\mathbf{X}}^{\scriptscriptstyle{\text{coup}}}_t)_{t\ge 0},({\mathbf{{P}}}^{\scriptscriptstyle{\text{coup}}}_{(x,\gamma)})_{(x,\gamma)\in\mathbb{R}^d\times\ddot{\Gamma}}\right)
\end{align*}
on $\mathbb{R}^d\times\ddot{\Gamma}$ which is properly associated with 
$\left(\mathcal{E}^{\scriptscriptstyle{\mathbb{R}^d\times\Gamma,\hat{\mu}}}_{\scriptscriptstyle{\text{coup}}},D(\mathcal{E}^{\scriptscriptstyle{\mathbb{R}^d\times\Gamma,\hat{\mu}}}_{\scriptscriptstyle{\text{coup}}})\right)$, i.e., for all ($\hat{\mu}$-versions of) $\mathfrak{F}\in L^2(\mathbb{R}^d\times\ddot{\Gamma},\hat{\mu})$ and all $t>0$ the function
\begin{align*}
(\xi,\gamma)\mapsto p_t^{\scriptscriptstyle{\text{coup}}} \mathfrak{F}(\xi,\gamma):=\int_{{\mathbf{\Omega}}}\mathfrak{F}({\mathbf{X}}^{\scriptscriptstyle{\text{coup}}}_t)\,d{\mathbf{{P}}}^{\scriptscriptstyle{\text{coup}}}_{(\xi,\gamma)},\quad(\xi,\gamma)\in\mathbb{R}^d\times\ddot{\Gamma},
\end{align*}
is an $\mathcal{E}^{\scriptscriptstyle{\Gamma,\hat{\mu}}}_{\scriptscriptstyle{\text{coup}}}$-quasi-continuous version of $\exp(-t {H}^{\scriptscriptstyle{\Gamma,\hat{\mu}}}_{\scriptscriptstyle{\text{coup}}})\mathfrak{F}$. 
$\mathbf{{M}}^{\scriptscriptstyle{\Gamma,\hat{\mu}}}_{\scriptscriptstyle{\text{coup}}}$ is up to $\hat{\mu}$-equivalence unique (cf.~\cite[Chap.~IV,~Sect.~6]{MaRo92}).
In particular, $\mathbf{{M}}^{\scriptscriptstyle{\mathbb{R}^d\times\Gamma,\hat{\mu}}}_{\scriptscriptstyle{\text{coup}}}$ is $d\xi\otimes\mu$-symmetric, i.e.,
\begin{align*}
\int_{\mathbb{R}^d\times\ddot{\Gamma}} \mathfrak{G}\,p^{\scriptscriptstyle{\text{coup}}}_t \mathfrak{F}\,d\hat{\mu}=\int_{\mathbb{R}^d\times\ddot{\Gamma}}\mathfrak{F}\,p^{\scriptscriptstyle{\text{coup}}}_t \mathfrak{G}\,d\hat{\mu}
\end{align*}
for all $\mathfrak{F},\mathfrak{G}:\mathbb{R}^d\times\ddot{\Gamma}\to\mathbb{R_+},~\mathcal{B}(\mathbb{R}^d\times\ddot{\Gamma})${-measurable} and has $\hat{\mu}=d\xi\otimes\mu$ as invariant measure.
\item[(ii)]
$\mathbf{{M}}^{\scriptscriptstyle{\mathbb{R}^d\times\Gamma,\hat{\mu}}}_{\scriptscriptstyle{\text{coup}}}$ from (i) solves the martingale problem for $\left(-{H}^{\scriptscriptstyle{\mathbb{R}^d\times\Gamma,\hat{\mu}}}_{\scriptscriptstyle{\text{coup}}},D({H}^{\scriptscriptstyle{\mathbb{R}^d\times\Gamma,\hat{\mu}}}_{\scriptscriptstyle{\text{coup}}})\right)$,
i.e., for all $\mathfrak{G}\in D({H}^{\scriptscriptstyle{\mathbb{R}^d\times\Gamma,\hat{\mu}}}_{\scriptscriptstyle{\text{coup}}})\supset C_0^\infty(\mathbb{R}^d)\otimes\mathcal{F}C_b^\infty(C^\infty_0(\mathbb{R}^d),\Gamma)$
\begin{align*}
\widetilde{\mathfrak{G}}(\mathbf{X}^{\scriptscriptstyle{\text{coup}}}_t)-\widetilde{\mathfrak{G}}(\mathbf{X}^{\scriptscriptstyle{\text{coup}}}_0)+\int_0^t {H}^{\scriptscriptstyle{\Gamma,\hat{\mu}}}_{\scriptscriptstyle{\text{coup}}} G(\mathbf{{X}}^{\scriptscriptstyle{\text{coup}}}_t)\,ds,\quad t\ge 0,
\end{align*}
is an $(\mathbf{{F}}^{\scriptscriptstyle{coup}}_t)_{t\ge 0}$-martingale under $\mathbf{{P}}^{\scriptscriptstyle{\text{coup}}}_{(\xi,\gamma)}$ (hence starting at $(\xi,\gamma)$) for $\mathcal{E}^{\scriptscriptstyle{\Gamma,\hat{\mu}}}_{\scriptscriptstyle{\text{coup}}}$-q.a.~$(\xi,\gamma)\in\mathbb{R}^d\times\ddot{\Gamma}$.\\
(Here $\widetilde{\mathfrak{G}}$ denotes a quasi-continuous version of $\mathfrak{G}$, cf.~\cite[Chap.~IV,~Prop.3.3]{MaRo92}.)
\end{enumerate}
\end{theorem}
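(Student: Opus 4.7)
The plan is to leverage the heavy groundwork laid in Theorem \ref{thmexcoup}, and to proceed in direct analogy with the proofs of Theorem \ref{thmexpromgsd} and Theorem \ref{thmexpro2}. Once the underlying bilinear form has been shown (as it has) to be a conservative, local, quasi-regular, symmetric Dirichlet form on $L^2(\mathbb{R}^d\times\ddot{\Gamma},\hat{\mu})$, the construction of the associated process and the verification of the martingale problem are both reductions to standard results from the theory of Dirichlet forms.

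For part (i), I would invoke \cite[Chap.~V,~Theo.~1.11]{MaRo92}: quasi-regularity of $(\mathcal{E}^{\scriptscriptstyle{\mathbb{R}^d\times\Gamma,\hat{\mu}}}_{\scriptscriptstyle{\text{coup}}},D(\mathcal{E}^{\scriptscriptstyle{\mathbb{R}^d\times\Gamma,\hat{\mu}}}_{\scriptscriptstyle{\text{coup}}}))$, established in Theorem \ref{thmexcoup}, produces an $\hat{\mu}$-tight special standard process $\mathbf{M}^{\scriptscriptstyle{\mathbb{R}^d\times\Gamma,\hat{\mu}}}_{\scriptscriptstyle{\text{coup}}}$ on $\mathbb{R}^d\times\ddot{\Gamma}$ that is properly associated with the form. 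Locality of the form upgrades this to a diffusion (continuous sample paths), while conservativity $1\in D(\mathcal{E}^{\scriptscriptstyle{\mathbb{R}^d\times\Gamma,\hat{\mu}}}_{\scriptscriptstyle{\text{coup}}})$, $\mathcal{E}^{\scriptscriptstyle{\mathbb{R}^d\times\Gamma,\hat{\mu}}}_{\scriptscriptstyle{\text{coup}}}(1,1)=0$ (also from Theorem \ref{thmexcoup}) guarantees infinite lifetime, so $p_t^{\scriptscriptstyle{\text{coup}}}(1\otimes 1)=1$. Symmetry of the form transfers automatically to $\hat{\mu}$-symmetry of the semigroup $(p_t^{\scriptscriptstyle{\text{coup}}})_{t\ge 0}$, and $\hat{\mu}$-invariance then follows by combining symmetry with conservativity via the standard pairing $(p_t^{\scriptscriptstyle{\text{coup}}}\mathfrak{F},1)_{L^2}=(\mathfrak{F},p_t^{\scriptscriptstyle{\text{coup}}}1)_{L^2}=(\mathfrak{F},1)_{L^2}$. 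Uniqueness up to $\hat{\mu}$-equivalence is part of the general statement in \cite[Chap.~IV,~Sect.~6]{MaRo92}.

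For part (ii), I would cite \cite[Theo.~3.5]{MR1335494}, which identifies the process properly associated with a quasi-regular symmetric Dirichlet form as a solution of the martingale problem for its generator. The concrete shape of the martingale follows from the explicit expression for $L^{\scriptscriptstyle{\mathbb{R}^d\times\Gamma,\hat{\mu}}}_{\scriptscriptstyle{\text{coup}}}$ on the core $C_0^\infty(\mathbb{R}^d)\otimes\mathcal{F}C_b^\infty(C^\infty_0(\mathbb{R}^d),\Gamma)$ already provided in Theorem \ref{thmexcoup}, together with the identification of its Friedrichs extension $H^{\scriptscriptstyle{\mathbb{R}^d\times\Gamma,\hat{\mu}}}_{\scriptscriptstyle{\text{coup}}}$ as the generator of the form.

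Because everything reduces to two citations, there is no genuine new obstacle at this stage; the heart of the argument lies in Theorem \ref{thmexcoup}, which in turn rests on the two integration-by-parts formulas (Theorem \ref{thmintbyparts} and Lemma \ref{corintbp2}) and on the quasi-regularity proof of Lemma \ref{lemE22}. If anything unexpected arises, it would be in confirming the hypotheses of the cited theorems for the precise state space $\mathbb{R}^d\times\ddot{\Gamma}$ (a Polish space, with $\hat{\mu}$ a $\sigma$-finite Borel measure); but these conditions are immediate here, so the proof amounts essentially to a verbatim repetition of the reasoning used for $\mathbf{M}^{\scriptscriptstyle{\Gamma,\mu}}_{\scriptscriptstyle{\text{env}}}$ in Theorem \ref{thmexpro2}.
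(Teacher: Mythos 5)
Your proposal is correct and follows the paper's own proof essentially verbatim: part (i) is obtained from Theorem \ref{thmexcoup} via \cite[Chap.~V,~Theo.~1.11]{MaRo92}, and part (ii) via \cite[Theo.~3.5]{MR1335494}, exactly as in the proofs of Theorems \ref{thmexpromgsd} and \ref{thmexpro2}. The additional remarks on locality yielding continuity of paths, conservativity yielding infinite lifetime, and symmetry plus conservativity yielding invariance of $\hat{\mu}$ are consistent with the paper's (even terser) argument.
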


\begin{proof}
\begin{enumerate}
\item[(i)]
By Theorem \ref{thmexcoup} the proof follows directly from \cite[Chap.~V,~Theo.~1.11]{MaRo92}.
\item[(ii)]
This follows immediately by \cite[Theo.~3.5]{MR1335494}.
\end{enumerate}
\end{proof}

\begin{remark}\label{remextppro}\index{tagged particle process}
\begin{enumerate}
\item[(i)]
As before we obtain a diffusion process on $\mathbb{R}^d\times\Gamma$ for $d\ge 2$.
\item[(ii)]
To get the tagged particle process we do a projection on the first component of $\mathbf{M}^{\scriptscriptstyle{\mathbb{R}^d\times\Gamma,\hat{\mu}}}_{\scriptscriptstyle{\text{coup}}}$ taking values in $\mathbb{R}^d$.
\item[(iii)]
Note that in general the tagged particle process no longer is a Markov process.
\end{enumerate}
\end{remark}

\end{document}